\newcounter{propertyCounter}

\newcommand{\authorA}{Marco~B. Caminati}
\newcommand{\Title}{A strengthening of the MCFL-ness of $O_2$}
\newcommand{\letters}{\Sigma}
\newcommand{\A}{\conj a}
\newcommand{\B}{\conj b}
\newcommand{\C}{\conj c}
\newcommand{\proj}{\pi}

\newcommand{\counterexampleP}{\mathsf p}
\newcommand{\ceP}{\counterexampleP}
\newcommand{\counterexampleQ}{\mathsf q}
\newcommand{\ceQ}{\counterexampleQ}

\newcommand{\N}{\mathbb{N}}
\newcommand{\Z}{\mathbb{Z}}
\newcommand{\ve}[1]{\mathbf{#1}}
\newcommand{\x}{\ve x}

\newcommand{\ze}{\emptyset}
\newcommand{\bumps}{B}
\newcommand{\Left}{[}
\newcommand{\Right}{]}
\newcommand{\Up}{\sqcap}
\newcommand{\Down}{\sqcup}
\newcommand{\basis}{\mathbf{e}}
\newcommand{\eq}{\equiv}

\newcommand{\climple}{P}
\newcommand{\candidatempty}{Z}

\newcommand{\reversed}[1]{\underline{#1}}
\newcommand{\rev}{\reversed}
\newcommand{\trapezoidals}{T}
\newcommand{\selffact}{R}
\newcommand{\clearbumps}{C}
\newcommand{\cloggedbumps}{F}
\newcommand{\mcfg}{MCFG}
\newcommand{\criticalbumps}{G}

\newcommand{\size}{\card}
\newcommand{\norm}[1]{\lVert #1 \rVert}
\newcommand{\restrict}[2]{\left. #1 \right|_{#2}}

\newcommand{\sdiff}{\backslash{}}
\newcommand{\ol}[1]{\overline{#1}}
\newcommand{\Conj}{\iota}
\newcommand{\conj}{\ol}
\newcommand{\conc}{*}
\newcommand{\doublebumps}{D}
\newcommand{\enum}{\tau}
\newcommand{\cornerbumps}{E}

\newcommand{\iso}[1]{\widetilde{#1}}
\newcommand{\ortho}{\bot}

\newcommand{\subsequence}[2]{ {\left \langle {#1} \right \rangle }_{#2}}

\newcommand{\factor}[3]{ {\left \langle {#1} \right \rangle }_{#2}^{#3}}

\newcommand{\proves}[1]{\sststile{#1}{}}
\newcommand{\ra}[1]{\xLongrightarrow{\text{\tiny{{#1}}}}}

\documentclass
{article}

\usepackage{amsmath}
\usepackage{amsthm}
\usepackage{amsthm}
\usepackage{amssymb} 
\usepackage[inline]{enumitem}
\usepackage{extarrows}
\usepackage[T1]{fontenc}
\usepackage{geometry}
\usepackage{graphicx}
\usepackage{hyperref}
\usepackage{listings}
\usepackage{microtype}
\usepackage{turnstile}
\usepackage{wrapfig}

\hypersetup{%
pdftitle={\Title
}%
, pdfauthor={\authorA}%
, pdfstartview=FitBH
, pdffitwindow=true
, pdfkeywords={logic, proof theory, formalization, completeness, model theory, Mizar, satisfiability theorem, \L{}-Skolem theorem, automated proof checking}
, pdflang=en-US
, pdfstartview= 100 100 10000
, pdfsubject={}
, pdfwindowui=true
, colorlinks=true
, linkcolor=blue
, urlcolor=blue
, citecolor=red
, breaklinks = true
, menucolor=brown
}

\DeclareMathOperator{\dom}{dom}
\DeclareMathOperator{\ran}{ran}
\DeclareMathOperator{\card}{card}
\DeclareMathOperator{\argmin}{argmin}

\theoremstyle{plain}
\newtheorem{Cor}{Corollary}[]
\newtheorem{Lm}
{Lemma}
\newtheorem{Prop}
{Proposition}[]
\newtheorem{Thm}
{Theorem}
\theoremstyle{definition}
\newtheorem{Def}
{Definition}
\newtheorem{Rem}
{Remark}
\newtheorem{Not}
{Notation}
\newtheorem{Ex}
{Example}

\begin{document}
\title{\Title}
\author{\authorA{}
\thanks{Lancaster University, Leipzig, Germany}\\
\href{mailto:m.caminati@lancaster.ac.uk}{m.caminati@lancaster.ac.uk} \\
\texttt{ORCID: \href{https://orcid.org/0000-0002-4529-5442}{0000-0002-4529-5442}}
}


\maketitle

\begin{abstract}
In the last years, a number of proofs of the fact that $O_2$ is a multiple context-free grammar (\mcfg) were given. Such results can be exploited in the fields of both computational linguistics and of computational algebra. Here, we focus on a recent such proof spelled in terms of factorizations of string tuples, and give a new result with a stronger characterization of such factorizations than in existing theorems.
\end{abstract}


\section{Introduction}
For a given natural $N$, we consider $N$ letters and, for each of them, another letter called its inverse, thus obtaining a set $\letters_N$ of $2N$ distinct letters; $O_N$ is the sub-language of $\letters_N^*$ consisting of exactly all the words where the number of occurrences of any letter equals the number of occurrences of the inverse of that letter.
From a computational linguistics perspective, it is noteworthy to determine where the $O_N$ family of languages fits within the established, rich hierarchy of grammars designed to model natural languages. 
From the standpoint of computational group theory, insights from formal language theory can aid in investigating related groups~\cite{gilman2005formal,gilman2018groups}. This connection has motivated a series of studies~\cite{salvati2015mix,nederhof2016short}, culminating in a result showing that $O_N$ can be generated by an $N$-multiple context-free grammar ($N$-\mcfg) for any value of $N$~\cite{gebhardt2022n}.
The \mcfg{} definition (see for example~\cite{clark1985introduction,caminati2024Dlt}) provides little control on the sentential forms introduced at each step of a derivation, for example on their lengths.
However, the fact that $O_2$ is a multiple context-free language (MCFL) remains valid even if we replace the MCFG given in~\cite{caminati2024Dlt} with a stronger form of grammar which is no longer a MCFG, giving more precise information on how the parse tree of a string in $O_2$ looks like.
This is expressed formally by the following theorem, which is the main result of this paper; more specifically, the ``more precise information'' is given by the side conditions of rule~\eqref{RefRuleRectangular}:

\begin{Thm}
\label{RefLmMain}
Given a pair of words $\left( p_0, p_1 \right)$, assume $p_0 p_1 \in O_2$. 
Then $ \left( p_0, p_1 \right)$ can be derived using the following rules:
\begin{small}
\begin{align}
\label{RefRule0}
\left( p_0, \conj {p_0} \right) & \ \to \ && \text{where } 
p_0 \in \letters_2^* \text{ and }
\left| p_0 \right| \le 1
\\
\label{RefRuleRectangular}
\left( p_0 p_1, q_0 q_1 \right) &\ \to \ \left( p_0, q_0 \right), \left( p_1, q_1 \right) 
&&
\text{where } 
\left| p_0 \right| = \left| q_0 \right| = 1 \text{ or }
\left| p_1 \right| = \left| q_1 \right| = 1
\\
\label{RefRuleSandwich}
\left( p_0 p_1, q_1 q_0 \right) & \ \to \  \left( p_0, q_0 \right),  \left( p_1, q_1 \right)
\\
\label{RefRuleLeft}
\left( p_0 p_1 q_0, q_1 \right) & \ \to \  \left( p_0, q_0 \right),  \left( p_1, q_1 \right)
\\
\label{RefRuleRight}
\left( p_0, p_1 q_0 q_1 \right) & \ \to \  \left( p_0, q_0 \right),  \left( p_1, q_1 \right)
\end{align}
\end{small}
\end{Thm}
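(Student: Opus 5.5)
We argue by strong induction on $n=\left|p_0 p_1\right|$, the length of the concatenation, taking as given the earlier result that $O_2$ is generated by the unrestricted grammar of~\cite{caminati2024Dlt}. That grammar has rule~\eqref{RefRule0} and the four binary rules, with~\eqref{RefRuleRectangular} carrying no length side condition. We identify $O_2$ with the words whose image in $\Z^2$ is zero; each letter maps to $\pm\basis_1$ or $\pm\basis_2$. For $n\le 2$ the pair is produced directly by~\eqref{RefRule0}, or by~\eqref{RefRuleRectangular} with length-one components, or by~\eqref{RefRuleLeft}/\eqref{RefRuleRight} with an empty component. For the inductive step it suffices to find, for every pair $(p_0,p_1)$ with $p_0p_1\in O_2$, some \emph{admissible} splitting: a decomposition matching one of~\eqref{RefRuleRectangular}--\eqref{RefRuleRight} in which both child pairs again concatenate to words of $O_2$, are strictly shorter, and which satisfies the side condition when the rule is~\eqref{RefRuleRectangular}. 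The induction hypothesis then finishes the argument.

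\textbf{Reformulation as a lattice walk.} Concatenate $p_0$ and $p_1$ and read the result as a closed walk $w$ in $\Z^2$, with a marker at the position $m=\left|p_0\right|$. The rules translate as follows.
\begin{itemize}
\item Rule~\eqref{RefRuleLeft} applies when $p_0$ contains a nonempty proper factor $p_1$, with $p_1\ne p_0$, whose displacement equals that of $p_1$ itself.
\item Rule~\eqref{RefRuleRight} is the symmetric situation inside the second component.
\item Rule~\eqref{RefRuleSandwich} applies when some prefix of $p_0$ and some suffix of the second component have total displacement zero.
\item Rule~\eqref{RefRuleRectangular} applies when some prefix $p_0'$ of the first component and some prefix $q_0'$ of the second have opposite displacements.
\end{itemize}
The restricted form of~\eqref{RefRuleRectangular} asks for $\left|p_0'\right|=\left|q_0'\right|=1$, meaning the two components begin with mutually inverse letters, or for the two components to end with mutually inverse letters.

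\textbf{Case analysis.} Write $a$ and $b$ for the first letters of the two components, and $a'$ and $b'$ for their last letters.
\begin{itemize}
\item If $b=\conj a$ or $b'=\conj{a'}$, the restricted rule~\eqref{RefRuleRectangular} applies immediately.
\item If $a'=\conj a$, or more generally if the first component has a proper nonempty factor of zero displacement, then~\eqref{RefRuleLeft} applies with $p_1$ equal to that factor and $q_1$ empty. The same holds for the second component via~\eqref{RefRuleRight}.
\item If $b'=\conj a$, or $a'=\conj b$, rule~\eqref{RefRuleSandwich} applies, with length-one pieces at the outer ends.
\end{itemize}
The remaining case is the core of the proof. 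No component contains a proper zero factor, and the four boundary letters $a,a',b,b'$ avoid all of the inverse relations listed above. We plan to show that in this case some admissible splitting of type~\eqref{RefRuleLeft}, \eqref{RefRuleRight} or~\eqref{RefRuleSandwich} still exists. To do so, we consider the unrestricted decomposition supplied by the earlier theorem. Suppose it uses~\eqref{RefRuleRectangular} with $\left|p_0'\right|,\left|q_0'\right|\ge 2$. Then $p_0'q_0'\in O_2$ with both parts long, and we apply the earlier theorem again to $(p_0',q_0')$, and to the suffix pair $(p_1',q_1')$, to get a finer decomposition. The aim is to reassociate: a rectangular split of a rectangular child can be regrouped into a split of the original pair of type~\eqref{RefRuleSandwich}, \eqref{RefRuleLeft} or~\eqref{RefRuleRight}, or into a rectangular split whose first child is shorter. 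Iterating on the length of the first child, we end either at a non-rectangular rule or at a rectangular split with $\left|p_0'\right|=\left|q_0'\right|=1$. The suffix side is handled symmetrically.

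\textbf{Main obstacle.} The hard step is proving that this regrouping always succeeds, that is, that every rectangular derivation can be rotated into an admissible one. We would first try a potential-function argument. Since the first and last letters of each component cannot cancel, the walks traced by $p_0$ and $p_1$ must cross in a way that forces a zero-displacement factor sitting between a prefix of one component and a suffix of the other. For two-letter alphabets this is essentially a planar intersection argument: two lattice paths with opposite total displacement that start at the same point must meet. Turning such a meeting point into a \emph{contiguous} factorization that matches~\eqref{RefRuleSandwich}, \eqref{RefRuleLeft} or~\eqref{RefRuleRight} is where the real difficulty lies, and we expect it to need a careful subcase analysis on the signs of the first and last letters of each component.
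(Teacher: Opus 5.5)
\textbf{There is a genuine gap: the proposal omits the core argument of the theorem.}

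Apart from one misstep noted below, your preliminary reductions correspond to the paper's properties \ref{RefReqSize}--\ref{RefReqSelfFact}. However, the proof stops exactly where the theorem's content begins.

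Dispatch also the loops straddling the cut and the self-factoring components, via \eqref{RefRuleSandwich}, \eqref{RefRuleLeft}, \eqref{RefRuleRight}. What remains of your ``remaining case'' is precisely $\climple \sdiff \trapezoidals \sdiff \selffact$. For a pair in that set \emph{no} admissible split exists at all:
\begin{itemize}
\item a \eqref{RefRuleSandwich}-step with strictly shorter closed children exhibits a nonempty proper closed factor of $p_0p_1$, contradicting simplicity;
\item a \eqref{RefRuleLeft}- or \eqref{RefRuleRight}-step with shorter children makes a component self-factoring;
\item a restricted \eqref{RefRuleRectangular}-step puts the pair in $\trapezoidals$.
\end{itemize}
So what you ``plan to show'' cannot be a regrouping. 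It is the statement that, when all shorter closed pairs are derivable, $\climple \sdiff \trapezoidals \sdiff \selffact$ contains no pair of the current length. That is the whole content of Sections~\ref{RefSectLmMain1}--\ref{RefSectLmMain2}:
\begin{itemize}
\item the bump-free case is handled by the crossing Lemma~\ref{RefLmGeometric};
\item otherwise, the classification of minimal bumps into $\clearbumps$, $\doublebumps$, $\cornerbumps$, $\cloggedbumps$, together with Theorem~\ref{RefLmMain2} (via Lemma~\ref{RefLmSelfFact2}) and Theorem~\ref{RefLmMain1}, produces a $1$-shortcut in $\climple\sdiff\trapezoidals$;
\item being shorter, that shortcut is derivable, so it must be self-factoring, and self-factoring lifts back to the original component by Remark~\ref{RefRem}.
\end{itemize}
Your proposal offers nothing in place of this argument.

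\textbf{The regrouping mechanism fails as described.} Reaching ``a non-rectangular rule'' for the first child gives nothing for the parent. Suppose the parent is split as $(p_0'p_1', q_0'q_1')$ and the child $(p_0', q_0')$ comes from \eqref{RefRuleSandwich} with $p_0'=rs$, $q_0'=ut$, where $rt$ and $su$ are closed. Then:
\begin{itemize}
\item the natural Sandwich regrouping of $(rs\,p_1', ut\,q_1')$ pairs $r$ with $tq_1'$, so it needs $q_1'$ (equivalently $p_1'$) to be closed;
\item the rectangular regrouping with first child $(r,u)$ needs $ru$ to be closed;
\item a \eqref{RefRuleLeft}-step inside the child need not make the parent's first component self-factoring.
\end{itemize}
The other endpoint, $\left|p_0'\right|=\left|q_0'\right|=1$, is excluded by your own case hypothesis. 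The intersection heuristic does not fill the hole either. Two walks from a common point meet trivially, and Lemma~\ref{RefLmGeometric} requires both walks to be short, which is exactly why bumps and shortcuts are needed for general components.

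\textbf{Smaller error: interior loops.} An interior loop $uzv$ ($z \eq \ze$, $u,v\neq\ze$) in the first component does not give a \eqref{RefRuleLeft}-step ``with $p_1=z$ and $q_1$ empty''. That would force an empty second component, and \eqref{RefRuleLeft} needs a closed prefix-plus-suffix, not a closed interior factor. The paper handles this case in proving \ref{RefReqSimple}: it derives the shorter pair with $z$ deleted, then reinserts $z$ according to the last rule used.
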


In the derivation rules of Theorem~\ref{RefLmMain}, 
$p_0$, $q_0$, $p_1$, $q_1$ are variables representing words.
Rule \eqref{RefRule0} has an empty right hand, meaning that the corresponding pair on the left hand can be derived without any previous operation: such rules are also called \emph{axioms}.
The other rules can be read as follows: 
if in previous steps both pairs on the right hand side of the arrow have been derived, then the pair on the left hand side can also be derived in the current step.
Iterating this procedure any finite number of times,  pairs $\left( p,q \right)$ can be derived for some particular words $p, q$ of $\letters_2^*$; if a given pair $ \left( p, q \right)$ is thus derivable using rules \eqref{RefRule0}-\eqref{RefRuleRight}, we will write $ \proves{} \left( p, q \right)$.
The task is to show that any pair $\left( p,q \right)$ such that $pq \in O_2$ can be derived.

In~\cite{caminati2024Dlt}, the main effort was to provide a short and elementary proof amenable to proof assistants.
In stark contrast, the apparently small addition of the condition in rule~\eqref{RefRuleRectangular} notably complicates the proof, requiring an analysis and categorisation of the occurrence of bumps already introduced in~\cite{caminati2024Dlt}.
After Section~\ref{RefSectNotation}, where notations are introduced, Section~\ref{RefSectBump} introduces the notion of bump, thanks to which the main result is reduced to two theorems, respectively proved in Sections~\ref{RefSectLmMain1} and~\ref{RefSectLmMain2}. Section~\ref{RefSectConcl} concludes.

\section{Notations}
\label{RefSectNotation}
$\N{}$, $\Z^+$, $\Z$ are, respectively, the set of natural numbers (including $0$), the set of the positive integers and of the integers.
$\cup$, $\cap$, $\sdiff$ and $\times$ are binary, infix symbols for the set-theoretical operations of union, intersection, set difference and cartesian product, respectively;
$\dom$ and $\ran$ are the operators returning the domain and range (also called image), respectively, of a given relation or function (i.e., $\ran Q = Q \left( \dom Q \right)$ for any relation or function $Q$).
$Q^{-1}$ is the inverse of the relation or function $Q$, while $\restrict Q X$ is its restriction to the set $X \subseteq \dom Q$, and $g \circ f$ is the composition of the two functions $g$ and $f$.
Throughout the paper, we will use square brackets to denote intervals of integers:
$
\left[ i, j \right] := \left\{ k \in \Z. i \le k \le j \right\}$
and
$
\left] i, j \right[ := \left[ i, j \right] \sdiff \left\{ i, j \right\}
\forall i, j \in \Z.
$
A \emph{string} or \emph{word} is a function with domain $\left[ 0, j \right]$ for some $j \in \Z$.
A function will be called a \emph{sequence} if it is either a string or has $\N$ as domain.
Therefore, if $p$ is a non-empty sequence, $p\left( 0 \right)$ is its first entry.
$\card Y$ is the cardinality of the set $Y$; when $Y$ is a sequence, $\card Y$ represents its length;%
\footnote{This is consistent with the fact that, under the standard representation of functions as sets of cartesian pairs, the cardinality and length of a sequence do coincide.}
we will often write $\left| Y \right|$ as a shorthand for $\card Y$.
The \emph{reverse} of a string $p$, denoted $\rev p$, is the unique string $q$ such that $ \left| p \right| = \left| q \right|$ and $q \left( i \right) = p \left( \left| p \right| - 1 - i \right) \forall i \in \dom p$; in particular, if $p \neq \ze$, the last entry of $p$ can be indicated with $\rev p \left( 0 \right)$, a notation that we will use often in this paper.
Given a set $X$, $X^*$, $X^{\infty}$ and $X^N$ are, respectively, the set of the strings over $X$ (i.e., whose range is included in $X$), of the sequences over $X$, and of the sequences over $X$ having length $N$ for some $N\in \N$.
Given  $X \subseteq \N$ and a sequence $q$, $\subsequence{q}{X}$ is the subsequence of $q$ obtained by cancelling all its entries whose indices are not in $X$.
More formally, build the strictly monotonic sequence $p$ of the natural numbers belonging to $X \cap{} \dom q$, and set $\subsequence{q}{X} := q \circ p$.
We will employ the special notation $p - X := \subsequence{p}{\N \sdiff X}$. 
When $p = \subsequence{q}{X}$ and $X$ is an integer interval $[i, j]$, $p$ is called a \emph{factor} of $q$ (a \emph{left factor} if $i=0$, a \emph{right factor} if $q$ is a string and $j=\left| q \right| - 1$), and we will just write $\factor{q}{i}{j}$ instead of $\subsequence{q}{\left[ i, j \right]}$;
a \emph{proper factor} of $q$ is a factor of $q$ not equal to $q$.
Given a word $p_0$ and a sequence $p_1$, the concatenation of $p_0$ and $p_1$ (written $p_0 \conc p_1$, or just $p_0 p_1$ when safe) is the unique sequence $p$ such that $\subsequence{p}{\left[ 0, \left| p_0 \right| - 1  \right]} = p_0$ and 
$p - \left[ 0, \left| p_0 \right| - 1 \right] = p_1$.

We will be studying strings over a set $\letters$ (our \emph{alphabet}) possessing a certain structure, which we now introduce.
Let $\enum$ be a given, fixed injection defined on $\Z \sdiff \left\{ 0 \right\}$
such that $\letters := \ran \enum$ is disjoint from $\Z$, and consider the injective involution on $\letters$ given by
$\Conj := \letters \ni x \mapsto \enum \left( - \enum^{-1} \left( x  \right)  \right) \in \letters $.
An element of $\letters$ will also be called a \emph{letter}; a letter $x$ is said to \emph{occur} in $p \in \letters^*$ if $x \in \ran p$, and the \emph{occurrences} of $x$ in $p$ are the indices of the set $p^{-1} \left( \left\{ x \right\}  \right)$.
We will usually write $\conj x$ in lieu of $\Conj \left( x \right)$, and 
will say that $\conj x$ is the \emph{inverse} of $x$.
We denote with $\letters^{(+)}$ the set $\enum \left( \Z^+ \right)$, with $\letters_N$ the set 
$ \left\{ x \in \letters: \left| \enum^{(-1)} \left( x \right)  \right| \le N \right\}$, 
and with $\letters_N^{(+)}$ the set $\letters^{(+)} \cap \letters_N$, where $N$ is any natural number.
Furthermore, with a slight abuse, we will also write $\conj p$, with $p \in \letters^*$, to mean $\Conj \circ p$ (i.e., the string obtained by inverting all the letters of $p$).
Given $x, y \in \letters$, we write $x // y$ to mean $y \in \left\{ x, \conj x \right\}$, 
$x \ortho y$ (and say that $x$ and $y$ are orthogonal) to mean $\neg \left( x // y \right)$, and $x^{\ortho}$ for the set of all letters in $\letters$ orthogonal to $x$.
$\enum$ naturally induces a linear well-ordering on $\letters^{(+)}$: hence, we can use the first lowercase characters $a, b, c, d, \ldots$ to denote its first elements so that, for example, 
$\letters_3 = \left\{ a, \A, b, \B, c, \C  \right\}$ and
$\letters_2 = \left\{ a, \A, b, \B  \right\}$.
A particular string $p$ can be fully described by writing down either the tuple
$ \left( p\left( 0 \right), p \left( 1 \right), \ldots, p \left( \left| p \right| - 1 \right)\right) $ or the concatenation of all its entries (where parentheses can be omitted thanks to associativity of concatenation): 
$ p\left( 0 \right) p \left( 1 \right) \ldots p\left( \left| p \right| - 1 \right) $.
We will generally adopt the first writing when the string belongs to $\Z^N$ for some $N\in \N$, and the second when the string is in $\letters^*$.

The converse of Theorem~\eqref{RefLmMain} is proven by an easy induction argument (omitted):
\begin{Prop}
\label{RefLmConverse}
If $\proves{} \left( p_0, q_0 \right)$, 
then $p_0 q_0 \in O_2$. 
\end{Prop}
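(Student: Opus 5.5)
The plan is a structural induction on the derivation of $\left( p_0, q_0 \right)$, proving the slightly stronger statement that for every derivable pair $\left( p, q \right)$ and every letter $x \in \letters_2^{(+)}$, the number of occurrences of $x$ in $pq$ equals the number of occurrences of $\conj x$ in $pq$, and that $p, q \in \letters_2^*$. Membership in $O_2$ is exactly this balance condition for $x \in \left\{ a, b \right\}$ together with $pq \in \letters_2^*$, so the claim follows once the invariant is established. It is convenient to encode the balance through the map $\phi \colon \letters_2^* \to \Z^2$ that sends a word $w$ to the vector whose components are $\left| w^{-1}\left( \left\{ a \right\} \right) \right| - \left| w^{-1}\left( \left\{ \A \right\} \right) \right|$ and $\left| w^{-1}\left( \left\{ b \right\} \right) \right| - \left| w^{-1}\left( \left\{ \B \right\} \right) \right|$. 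Then $\phi$ is a monoid homomorphism into the commutative group $\left( \Z^2, + \right)$, $\phi\left( \conj w \right) = - \phi\left( w \right)$, and $w \in O_2$ iff $w \in \letters_2^*$ and $\phi\left( w \right) = 0$.

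For the base case, rule~\eqref{RefRule0} produces $\left( p_0, \conj{p_0} \right)$ with $p_0 \in \letters_2^*$. Since $\letters_2$ is closed under $\Conj$, also $\conj{p_0} \in \letters_2^*$. Moreover $\phi\left( p_0 \conj{p_0} \right) = \phi\left( p_0 \right) - \phi\left( p_0 \right) = 0$. The length restriction $\left| p_0 \right| \le 1$ plays no role here.

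For the inductive step, each of rules~\eqref{RefRuleRectangular}--\eqref{RefRuleRight} builds the conclusion by concatenation from the premises $\left( p_0, q_0 \right)$ and $\left( p_1, q_1 \right)$. By the inductive hypothesis, $\phi\left( p_0 q_0 \right) = \phi\left( p_1 q_1 \right) = 0$ and all four words lie in $\letters_2^*$. In each rule, the concatenation of the two components of the conclusion is a rearrangement of the four factors $p_0, q_0, p_1, q_1$, each used exactly once:
\begin{itemize}
\item rule~\eqref{RefRuleRectangular} gives $p_0 p_1 q_0 q_1$;
\item rule~\eqref{RefRuleSandwich} gives $p_0 p_1 q_1 q_0$;
\item rule~\eqref{RefRuleLeft} gives $p_0 p_1 q_0 q_1$;
\item rule~\eqref{RefRuleRight} gives $p_0 p_1 q_0 q_1$.
\end{itemize}
Because $\phi$ is a homomorphism into a commutative group, each of these words has image $\phi\left( p_0 \right) + \phi\left( q_0 \right) + \phi\left( p_1 \right) + \phi\left( q_1 \right) = 0$, and it clearly lies in $\letters_2^*$. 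The side conditions of rule~\eqref{RefRuleRectangular} only restrict when the rule may be applied, so they cannot break the invariant.

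There is no real obstacle in this argument. The only point needing care is to state the induction over derivations, that is, over the inductively defined relation $\proves{}$, and not over word length. One must also check that, in every rule, the concatenation of the conclusion's components is exactly a permutation of the four premise factors, with none duplicated or dropped; the list above confirms this.
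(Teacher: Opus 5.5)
Your proof is correct and is essentially the same as the paper's, which only remarks that the statement follows by ``an easy induction argument'' and omits the details. Your $\phi$ is the paper's displacement morphism $\mu_2$, and your check that each of rules~\eqref{RefRuleRectangular}--\eqref{RefRuleRight} concatenates the four premise factors exactly once, in some order, is the entire inductive step.
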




The proof of Theorem~\ref{RefLmMain} is by contradiction: if it is false,  
then we are allowed to pick a pair $ \left( \counterexampleP, \counterexampleQ \right)$ 
being not derivable using rules \eqref{RefRule0}-\eqref{RefRuleRight}, satisfying 
$\counterexampleP \counterexampleQ  \in O_2 $, and minimal in the sense that any other $ \left( p, q \right)$ satisfying these two properties must necessarily also satisfy $ \left| p q \right| \ge \left| \counterexampleP \counterexampleQ \right|$.
A quick check shows that it must be 
$\left| \counterexampleP \counterexampleQ \right| \ge 4$.
Using rules \eqref{RefRuleLeft} and \eqref{RefRuleRight}, we deduce that it also must be 
$\left| \counterexampleP  \right| \neq 1 \neq \left|  \counterexampleQ \right|$.
Again using rules \eqref{RefRuleLeft} and \eqref{RefRuleRight}, we can assume 
$\left| \counterexampleP  \right| \neq 0 \neq \left|  \counterexampleQ \right|$ because, for example, if a pair $\left( \ze, q \right)$ satisfies the same requirements and is not derivable, then any proper binary factorisation $\left( q_0, q_1 \right)$ of $q$ is also not derivable (if it were, $\left( \ze, q_0 \ze q_1 = q \right)$ would also be derivable using rule~\eqref{RefRuleRight}).
To recapitulate, if Theorem~\ref{RefLmMain} is false, then we can find $\left( \counterexampleP, \counterexampleQ \right) \in \letters_2^* \times \letters_2^*$ such that:
\begin{enumerate}[label=P\arabic*]
\item
\label{RefReqBalanced}
: $ \counterexampleP \counterexampleQ \in O_2$;
\item
\label{RefReqNoReach}
: $\not \proves{} \left( \counterexampleP, \counterexampleQ \right)$;
\item
\label{RefReqMin}
: Any other pair $ \left( p, q \right) \in \letters_2^* \times \letters_2^*$ satisfying 
\ref{RefReqBalanced} and \ref{RefReqNoReach}
must also satisfy $\left| p q \right| \ge \left| \counterexampleP \counterexampleQ \right|$;
\item
\label{RefReqSize}
: $\min \left\{  \left| \counterexampleP \right|, \left| \counterexampleQ  \right| \right\} \ge 2$;
\item
\label{RefReqRectangular}
: $\counterexampleP \left( 0 \right) \neq \conj{ \counterexampleQ \left( 0 \right)}$ and 
$\rev{\ceP} \left( 0 \right) \neq \conj{ \rev {\counterexampleQ} \left( 0 \right)}$,
\setcounter{propertyCounter}{\value{enumi}}
\end{enumerate}

where~\ref{RefReqRectangular} imposes that the first letters of $\ceP$ and $\ceQ$, respectively, must not be related by inversion, and that the same holds for their last letters.
This follows immediately from \ref{RefReqBalanced}-\ref{RefReqSize} by exploiting the particular form of rule \eqref{RefRuleRectangular}.
It is clear that to proceed we must use the remaining rules (\eqref{RefRuleSandwich}, \eqref{RefRuleLeft} and \eqref{RefRuleRight}) to draw more properties of $ \left( \ceP, \ceQ \right)$.
This is better done by introducing a new representation of $\letters^*$, which is the task of the next section.
There,
we will also use the new representation to 
state further properties of $ \left( \counterexampleP, \counterexampleQ \right)$, and show that they together imply that $ \left( \counterexampleP, \counterexampleQ \right)$ cannot exist.
We will reach this conclusion by stating and using two theorems, \ref{RefLmMain1} and \ref{RefLmMain2}, which are subsequently proven in Sections~\ref{RefSectLmMain1} and \ref{RefSectLmMain2}, respectively.

\section{Representing strings in $\letters_N^*$ as tuples in $\Z^N$}
\label{RefSectBump}
Until the beginning of Section~\ref{RefSectProofSkeleton}, $N$ will be a generic positive integer.
Consider the additive group $ \left( \Z^\infty, + \right) $ of infinite sequences of integers, and let $\basis_i$, for any $i \in \Z^+$, be the element of $\Z^\infty$ having the $i - 1$-th entry set to $1$ and all the remaining ones set to $0$ (e.g., $\basis_1 = \left( 1, 0, 0, \ldots \right)$).
We consider the unique map $\mu$ defined on $\letters$ which associates the $i$-th letter of $\letters^{(+)}$ to $\basis_i$, and preserves inversion: $\mu \left( \conj x \right) = - \mu \left( x \right)$.
This map can be naturally extended to a monoid morphism between $ \left( \letters^*, \conc \right)$ (where $\conc$ denotes word concatenation) and $ \left( \Z^\infty, + \right) $ by imposing $\mu \left( p q \right) = \mu \left( p \right) + \mu \left( q \right)$; 
moreover, the restriction of $\mu$ to any $\letters_N^*$ can be identified with a monoid morphism from $ \left( \letters_N^*, \conc \right)$ onto $ \left( \Z^N, + \right)$, which we  indicate with $\mu_N$, or just $\mu$ when safe.
On $\Z^N$, we also define $N$ maps $ \proj_i, i = 1, \ldots, N$, each taking a tuple and returning its $i$-th component (e.g., $\proj_2 \left( \left( 0,4,3 \right)  \right) = 4$).
Intuitively, each letter of a word $p \in  \letters_N^* $ can be regarded as a unit step along a certain axis of $\Z^N$, where the step is in the positive verse if the letter is in $\letters^{\left( +  \right)}$ and in the opposite direction otherwise; $\mu \left( p \right)$ yields the vector locating the position in $\Z^N$ obtained after performing, starting from the origin, the steps thus dictated by all the letters of $p$.
We therefore call $\mu \left( p \right)$ the \emph{displacement} of $p$; 
for example, $\mu \left( a b b \conj a \right) = \mu \left( a \right) + \mu \left( b \right) + \mu\left( b \right) - \mu\left( a \right) = \left( 1,0 \right) + \left( 0,1 \right) + \left( 0,1 \right) - \left( 1, 0 \right) = \left( 0, 2 \right)$.

Two words $p$, $q$ such that for any letter the difference between the occurrences of that letter and the occurrences of its inverse is the same for $p$ and for $q$ will return the same value through $\mu$: we will write $p \eq q$. 
Formally, $\eq$ is the equivalence relation canonically associated to $\mu$: 
$ p \eq q \Leftrightarrow \mu \left( p \right) = \mu \left( q \right)$; 
$\eq_N$ will denote its restriction to $\letters_N^*$, but we will just write $\eq$ when this yields no ambiguity.
$ \left[ p \right]_{\eq}$ denotes the equivalence class to which $p$ belongs.
Obviously, given $p \in \letters_N^*$, $p \in O_N$ if and only if $\mu \left( p \right) = 0_N$; in this case, we will say that $p$ is \emph{closed}.
A \emph{loop} for $p \in \letters^*$ is a proper factor of $p$ which belongs to
$ \left[ \ze \right]_{\eq} \sdiff \left\{ \ze \right\}$; if $p$ has no loops, it is \emph{simple}; 
$p$ is \emph{self-factoring} if it admits a proper factor equivalent to $p$.
Note that any non-empty closed string is self-factoring.

While length is defined for any string, a string $p \in \letters_N^*$ can also be associated another natural number thanks to the representation just described.
Recall that the Manhattan (or taxicab) distance between two $N$-tuples $\left( x_1, \ldots, x_n \right)$, $ \left( y_1, \ldots, y_N  \right)$ of integers is defined as 
$ \sum_{j=1}^N \left| y_j - x_j \right|$; we denote with $ \norm{\ve x}$ the Manhattan distance of $ \ve x \in \Z^N$ from the origin, and we will often use the shorthand notation $\norm p$ to mean $\norm{\mu_N \left( p \right)}$ for a given $p \in \letters_N^*$, calling $\norm p$ the \emph{radius} of $p$.
Hence, we now have two natural numbers associated to any $p \in \letters_N^*$: $\left| p \right|$ is the number of letters of $p$, while $\norm p$ is the minimal number of unit steps along all the $N$ directions needed to arrive to $\mu \left( p \right)$ from the origin.
It is obvious that $\norm p = \min_{\card} \left[ p \right]_{\eq}$ for any $p \in \letters_N^*$: that is, among all strings equivalent to $p$, those with fewest letters have exactly $\norm p$ letters; such strings will be called \emph{short}.
An integer interval $\left[ i, j \right]$ is a \emph{detour} for $p \in \letters_N^*$ if 
$ \left[ i, j \right] \subseteq \dom p$ and $\factor p i j $ is not short (note that this implies $j > i$ and $\left| p \right| \ge 2$); 
in this case, $\factor p i j$ will also be called a detour for $p$, and its radius will be referred to as the radius of the detour.
Any detour of a string can be replaced with something shorter without changing the displacement: more precisely, $ \left( i, j, p \right)$ is an $m$-\emph{shortcut} for $q$ if $ \left[ i, j \right]$ is a detour for $q$, 
$m = \left| p \right| < \left| \factor {q} {i} {j} \right|$, and 
$ p \eq_N \factor q i j$; 
$ \norm{p}$ will be called the radius of the shortcut, and 
the string $ \factor {q} {0} {i-1} \ p \ \factor {q} {j+1} {\left| q \right| - 1}$ will also be called an $m$-shortcut for $q$.
A detour $\left[ i,j \right]$ for $p$ is an $x$-\emph{bump} for $p$ if no proper subset of $ \left[ i, j \right]$ is a detour for $p$ (i.e., any proper factor of $ \factor p i j$ is short) and $ p \left( i \right) = x $; 
note this implies that $ p\left( j \right) = \conj x$, and that 
$ p \left( k \right) \notin \left\{ x, \conj x \right\} \forall k \in \left] i, j \right[$.
$x$ is called the \emph{direction} of the bump.
$\bumps^x \left( p \right)$ is the set of all $x$-bumps of $p$, and we set 
$ \bumps \left( p \right) := \bigcup_{x \in \letters_N} \bumps^x \left( p \right)$.
If $\left[ i, j \right]$ is a detour (bump) for $p$, 
the string $ \factor p i j $
will also be called a detour (bump) for $p$.
$p$ is short if and only if it has no detours, and any detour contains a bump. 
Hence:
\begin{Rem}
$p$ is short if and only if $\bumps \left( p \right) = \ze$.
\end{Rem}

Note that if $\left[ i, j \right]$ is a detour for $q$, it is always possible to find $p$ such that $\left( i, j, p \right) $ is a shortcut for $q$.
In the particular case of $\left[ i, j \right]$ being a bump, there is a particularly simple way of performing this operation:
\begin{Rem}
\label{RefLmBumpCancel}
If $\left[ i, j \right]$ is a bump for $p$, then
$p - \left\{ i, j \right\} \eq p$, and $ p - \left\{ i, j \right\}$ is a shortcut for $p$.
\end{Rem}

Recall that $ p -\left\{ i, j \right\}$ denotes the operation of cancelling the letters of indices $i$ and $j$ from the string $p$.
This operation will be extensively used in this paper, especially in the case of $ \left[ i, j \right]$ being a bump for $p$ (\emph{bump cancelling}).
In particular, we will typically be dealing with a pair $ \left( p_0, p_1 \right)$ of strings, and wanting to perform bump cancelling on exactly one between $p_0$ and $p_1$.
To find a convenient notation in this situation, we first need to introduce a way of denoting all the bumps of $p_0$ and $p_1$ together, but with the possibility of distinguishing which bumps belongs to which string.
A notation to attain this goal is the following:
\begin{Def}
\begin{align*}
\bumps^x \left( p_0, p_1 \right) := 
\bumps^x \left( p_0 \right) \cup \left\{ \left[ i + \left| p_0 \right|, j + \left| p_0 \right| \right]. \ \left[ i, j \right] \in \bumps^x \left( p_1 \right) \right\},
&& 
\bumps \left( p_0, p_1 \right) := \bigcup_{x \in \letters_N} \bumps^x \left( p_0, p_1 \right)
\end{align*}
\end{Def}

Now, if we define 
\begin{Def}
$ \left( p_0, p_1 \right) - X := \left( p_0 - X, 
\factor {p_0 p_1 - X}{\left| p_0 \right|}{\left|p_0 p_1  \right| - 1} 
\right)$,
\end{Def}
then, when writing $ \left( p_0, p_1 \right) - \left\{ i, j \right\}$ for some $ \left[ i, j \right] \in \bumps \left( p_0, p_1 \right)$, it will be clear which bump we are cancelling from which string.
Given $x \in \letters_N$, by an $x$-bump for $ \left( p_0, p_1 \right)$ we mean any element of $ \bumps^x \left( p_0, p_1 \right)$, while by a \emph{minimal} bump for $ \left( p_0, p_1 \right) $ we mean an element of $\argmin_{\card} \bumps^y \left( p_0, p_1 \right)$ for some $y \in \letters_N$.
Note that $\bumps^x \left( p_0, p_1 \right) \subseteq \bumps^x \left( p_0 p_1 \right)$,
and that the converse inclusion is not always true.
We will say that $\left( q_0, q_1 \right)$ is an $m$-shortcut for $ \left( p_0, p_1 \right)$ if there is $k \in \left\{ 0, 1 \right\}$ such that 
$q_k$ is an $m$-shortcut for $p_k$ and $q_{\left| k - 1 \right|} = p_{\left| k - 1 \right|}$.

\begin{Ex}
\label{RefExampleFirst}
Let $p_0 := a b a a \conj b \conj b $, 
$p_1 := \conj a \conj a \conj b \conj a b \conj a \conj b \conj a \conj a b \conj a b a a a a$ 
$\in \letters_2^*$.
Then 
$ \left( p_0, p_1  \right) - \left\{ 8, 10 \right\}$ is the $1$-shortcut of $ \left( p_0, p_1 \right)$ given by $ \left( p_0, p_1' \right)$, where
$p_1' := \conj a \conj a \conj a \conj a \conj b \conj a \conj a b \conj a b a a a a$.

$ \bumps^b \left( p_0, p_1 \right) = \left\{ \left[ 1, 4 \right], \left[ 10, 12 \right] \right\}$, 
and the minimal bumps for $ \left( p_0, p_1 \right)$ are exactly $3$: 
$ \left[ 8, 10 \right]$,
$ \left[ 10, 12 \right]$, and
$ \left[ 16, 18 \right]$; note that $ \left[ 3, 6 \right] \in 
\bumps^a \left( p_0 p_1 \right) \sdiff 
\bumps^a \left( p_0, p_1 \right)
$ is a bump for $ p_0 p_1$ but not for $ \left( p_0, p_1 \right)$.
Another $1$-shortcut of $ \left( p_0, p_1 \right)$ is given by
$\left(  p_0,
\factor {p_1} 0 4
\ 
b \right) =
\left( p_0, p_1 \right) -
\left( \left[ 11, 16 \right] \cup
\left[ 
18, \left| p_0 p_1 \right| - 1
 \right]
\right)
$.
The first shortcut is obtained by applying Remark~\ref{RefLmBumpCancel}, while the second is of more general nature, being obtained by substituting a detour of $p_1$ (having radius $1$) which is not a bump.
\end{Ex}

Equivalent strings (i.e., strings for which the value of $\mu_N$ is the same) will result in the same displacement in different ways, and the definition of $\mu_N$ does not 
include the exact points of $\Z^N$ that the string touches to arrive to its displacement; 
we now introduce a notation to represent this additional information.

\begin{Def}
Given $p \in \letters_N^*$, define 
$\iso {p} := \left[ 0, \left| p \right| \right] \ni k \mapsto \mu_N \left( 
\factor p 0 {k-1}
\right).
$
\end{Def}

$\iso p \left( 0 \right)$ is the origin, $ \iso p \left( 1 \right)$ is the displacement after stepping according to the first letter of $p$, and so on; 
and, naturally, 
$ \iso p \left( \left| p \right| \right) = \mu \left( p \right)$.
An $N$-dimensional \emph{walk} is any string $q \in \left( \Z^N  \right)^*$ such that
$q = \factor {\iso p} {i} j$ for some $p \in \letters_N^*$, $i, j \in \N$.
Note that $\iso p$ is a sequence of sequences in $\Z^N$, and that not every
sequence of sequences in $\Z^N$ is a walk.
Referring back to Example~\ref{RefExampleFirst}, two possible bidimensional walks we can construct are $\iso {p_0} = \left(  
\left( 0,0 \right),
\left( 1,0 \right),
\left( 1,1 \right),
\left( 2,1 \right),
\left( 3,1 \right),
\left( 3,0 \right),
\left( 3,-1 \right)
\right) $
and
$ \factor{\iso {p_0 p_1}} 5 7=
\left( 
\left( 3, 0 \right),
\left( 3, -1 \right),
\left( 2, -1 \right)
\right).
$
To familiarise with the concepts introduced, we present a result which will be useful in the sequel.
It shows that two bidimensional short walks joining pairs of suitably positioned points must meet.

\begin{Lm}
\label{RefLmGeometric}
Given $p_0$, $p_1$, $q_0$, $q_1 \in \letters_2^*$, assume:
\begin{enumerate*}
\item
\label{RefAssFirst}
\label{RefAssPositionP}
$\left\{ \iso {p_0} \left( \left| p_0 \right| \right),
\iso {p_0 p_1} \left( \left| p_0 p_1 \right| \right) \right\} 
= \left\{ \left( x_0, y_0 \right), \left( x_2, y_1 \right) \right\};$
\item
\label{RefAssPositionQ}
$\left\{ \iso {q_0} \left( \left| q_0 \right| \right),
\iso {q_0 q_1} \left( \left| q_0 q_1 \right| \right) \right\} 
= \left\{ \left( x_1, y_0 \right), \left( x_3, y_1 \right) \right\};$
\item
\label{RefAssCross}
$x_1 \geq x_0$ and $x_2 \geq x_3;$
\item
\label{RefAssShort}
\label{RefAssLast}
$ \bumps \left( p_1 \right) = \bumps \left( q_1 \right) = \ze{}.$
\end{enumerate*}
Then
$ \ran \left( \iso {p_0 p_1}|_{\left[ \left| p_0 \right|, \left| p_0 p_1 \right| \right]} \right) 
\cap
\ran \left( \iso {q_0 q_1}|_{\left[ \left| q_0 \right|, \left| q_0 q_1 \right| \right]} \right) \neq \ze{}.
$
\end{Lm}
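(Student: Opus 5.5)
The plan is to read the lemma as a discrete intermediate value statement about two monotone lattice paths. Write $P := \restrict{\iso{p_0p_1}}{\left[ \left| p_0 \right|, \left| p_0p_1 \right| \right]}$ and $Q := \restrict{\iso{q_0q_1}}{\left[ \left| q_0 \right|, \left| q_0q_1 \right| \right]}$.

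\emph{Step 1: the walks are monotone.} By assumption~\ref{RefAssShort} and the Remark characterising short strings, $p_1$ and $q_1$ are short. A string in $\letters_2^*$ is short exactly when, for each direction, it does not contain both a letter and its inverse. Otherwise it would contain a detour, and hence a bump. So along $P$ each coordinate is monotone, and likewise along $Q$. In particular, $P$ visits every horizontal level $y$ between its extreme ordinates, and the points of $P$ at a fixed level form one contiguous run of consecutive entries. Their abscissae form an integer interval $I_P(y)$, and the same holds for $Q$ with an interval $I_Q(y)$.

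\emph{Step 2: normalise the setting.} Reversing a walk (replace $p_1$ by $\conj{\rev{p_1}}$ and translate) preserves both shortness and the range. Hence I may assume that $P$ runs from $(x_0,y_0)$ to $(x_2,y_1)$ and $Q$ from $(x_1,y_0)$ to $(x_3,y_1)$. The hypotheses only constrain the unordered endpoint sets, so this orientation is free. Reflecting the second coordinate does not affect assumption~\ref{RefAssCross}, so I may also assume $y_0 \le y_1$. Then both $I_P(y)$ and $I_Q(y)$ are defined and non-empty for every $y \in [y_0,y_1]$.

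\emph{Step 3: the crossing argument.} If $I_P(y) \cap I_Q(y) \neq \ze$ for some $y$, a common integer $x$ gives the common point $(x,y)$ in both ranges, and we are done. Suppose instead that at every level one interval lies strictly to the left of the other.
\begin{itemize}
\item At level $y_0$ we have $x_0 \in I_P(y_0)$ and $x_1 \in I_Q(y_0)$ with $x_1 \ge x_0$. So $I_Q(y_0)$ is not strictly left of $I_P(y_0)$; hence $P$ is strictly left of $Q$ at $y_0$.
\item Symmetrically, $x_2 \ge x_3$ forces $Q$ to be strictly left of $P$ at $y_1$.
\end{itemize}
This settles the degenerate case $y_0 = y_1$ immediately by contradiction. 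Otherwise there is a $y$ with $P$ strictly left of $Q$ at level $y$ and $Q$ strictly left of $P$ at level $y+1$.

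Consider the unique vertical step of $P$ between these levels. It occurs at an abscissa $\xi_P \in I_P(y) \cap I_P(y+1)$, since it joins the last point of $P$ at level $y$ to the first point at level $y+1$. Let $\xi_Q$ be defined likewise for $Q$. Then level $y$ gives $\xi_P < \xi_Q$, while level $y+1$ gives $\xi_Q < \xi_P$, a contradiction.

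\emph{Main obstacle.} The step needing most care is Step 1 together with the bookkeeping of Step 2. Specifically, one must verify that "no bumps" really yields coordinatewise monotonicity in the sense used, and that each level is visited in a single contiguous run. One must also check that the reversal and reflection reductions preserve all four hypotheses. The crossing argument itself is then short.
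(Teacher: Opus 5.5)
Your proof is correct, and it takes a different route from the paper's.

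The paper argues by a doubly minimal counterexample:
\begin{itemize}
\item It first chooses a counterexample minimising the height $\left| \ol{y_1} - \ol{y_0} \right|$, and among those one minimising $\left| p_1' q_1' \right|$.
\item It cuts both walks at their last visit to the level $\ol{y_1} - 1$.
\item Height-minimality then forces the $p$-walk to reach that level strictly to the left of the $q$-walk.
\item Minimality is applied once more to the two final pieces. If $\ol{y_1} - \ol{y_0} > 1$ this is height-minimality. Otherwise it is length-minimality, after first ruling out that both $p_1'$ and $q_1'$ start with $b$.
\end{itemize}
Throughout, the monotonicity of short walks is used only tacitly, for instance in ``it is easy to see that $\ol{y_1} \neq \ol{y_0}$''.

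You instead make monotonicity explicit at the outset. This turns each walk into a stack of integer intervals $I_P(y)$, $I_Q(y)$, one per level. You then locate the crossing directly by a discrete intermediate-value argument: the left/right order of the intervals must flip between two consecutive levels. The abscissa of each walk's unique vertical step between those levels belongs to both of that walk's intervals, which gives the contradiction at once.

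Your version is shorter and more transparent. It avoids the nested induction and treats the degenerate case $y_0 = y_1$ explicitly, where the paper only asserts it. The paper's version avoids introducing level intervals and stays within the minimal-counterexample style it uses elsewhere. The price is that it re-derives the local ``flip'' contradiction indirectly, through recursion on height and then on length.
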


Hypothesis~\eqref{RefAssPositionP} requires that the walk whose steps are given by $p_1$ and whose starting point is given by $\mu_2 \left( p_0 \right)$ goes from 
$ \left( x_0, y_0 \right)$ to $ \left( x_2, y_1 \right)$ 
or vice versa.
Similarly, hypothesis~\eqref{RefAssPositionQ} requires the same for $q_0$, $q_1$, 
$ \left( x_1, y_0 \right)$, $ \left( x_3, y_1 \right)$.
Hypothesis~\eqref{RefAssCross} imposes a condition on the relative horizontal position of the endpoints of the walk described by hypothesis~\eqref{RefAssPositionP} with respect to the horizontal position of the endpoints of the walk described by hypothesis~\eqref{RefAssPositionQ}.
Finally, hypothesis~\eqref{RefAssShort} requires that both the walks are short.
Lemma~\ref{RefLmGeometric} then states that the two walks must intersect: 
indeed, 
$
\ran \left( \iso {p_0 p_1}|_{\left[ \left| p_0 \right|, \left| p_0 p_1 \right| \right]} \right) 
$
is the set of points of $\Z^2$ touched by the first walk.
\begin{figure}
\includegraphics[trim={24mm 5mm 4.7cm 6cm}, clip, scale=1.1]{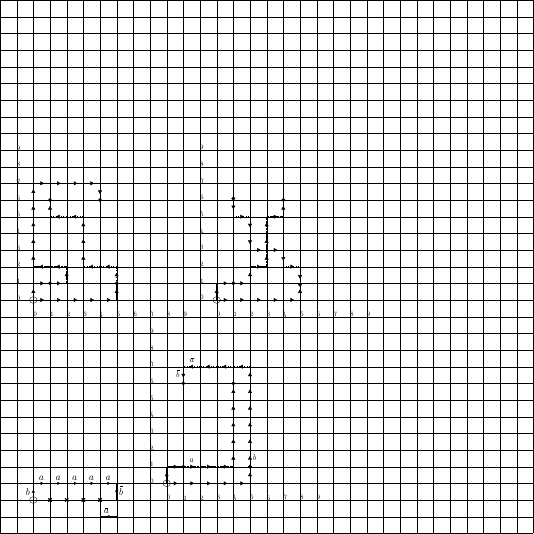}
\hfill
\includegraphics[trim={2mm 36mm 70mm 29mm}, clip, scale=1.1]{f00}
\hfill
\includegraphics[trim={33mm 36mm 37mm 32mm}, clip, scale=1.1]{f00}
\caption{Three applications of Lemma~\ref{RefLmGeometric} to the case 
$x_0 = 1$, $ y_0 = 1$, $x_2=4$, $y_1=6$, $x_1=5$, $x_3=1$.}
\label{RefFigCross}
\end{figure}%
Before giving the proof, we loiter on possible applications of Lemma~\ref{RefLmGeometric}
in the configuration of points represented in Figure~\ref{RefFigCross}, where 
the circled point locates the origin, 
$x_0 = 1$, $ y_0 = 1$, $x_2=4$, $y_1=6$, $x_1=5$, $x_3=1$, the dashed line represents
the walk given by $p_1$, and the dotted line that given by $q_1$;
the endpoints of the walks are marked with solid dots.
In the first case (Figure~\ref{RefFigCross}, left), the two walks do not intersect because, while $p_1$ is short, $q_1$ is not, being $\bumps \left( q_1 \right) =
\left\{ \left[ 5, 10 \right] \right\}.
$
In the second case (Figure~\ref{RefFigCross}, middle), they do not intersect because, while $q_1$ is short, $p_1$ is not, being $\bumps \left( p_1 \right) =
\left\{\left[ 0, 2 \right], \left[ 3, 9 \right] \left[ 8, 13 \right] \right\}.$
In the third case (Figure~\ref{RefFigCross}, right), they do intersect as mandated by the lemma, because both $p_1$ and $q_1$ are short.
Note that the only role of $p_0$ and $q_0$ in the lemma is to set the displacement; that is, the application of the lemma depends only on $\mu \left( p_0 \right)$ and $\mu \left( q_0 \right)$ rather than on $p_0$ and $q_0$ themselves.

\begin{proof}
Consider the map $f$ associating to each $ \left( y_0, y_1 \right) \in \Z^2$ 
the set consisting of all the tuples\\
$
\left( p_0, p_1, q_0, q_1, x_0, x_1, x_2, x_3 \right)
$
whose entries, together with $y_0$ and $y_1$, satisfy hypotheses \eqref{RefAssFirst}--\eqref{RefAssLast}, together with
$
\ran \left( \restrict {\iso { p_0 p_1}}
{\left[ \left| p_0 \right|, \left| p_0 p_1 \right| \right]} \right) 
\cap
\ran \left( \restrict {\iso { q_0 q_1}}
{\left[ \left| q_0 \right|, \left| q_0 q_1 \right| \right]} \right) 
= \ze$.
If, by contradiction, the set $M$ made of the pairs  $  \left( y_0, y_1 \right)$
for which $f$ returns a non-empty set were non-empty, then we could take
$ \left( \ol {y_0}, \ol{y_1} \right) \in M$ such that $ \left| \ol{y_1} - \ol{y_0} \right|$ is minimal.
It is easy to see that $\ol{y_1} \neq \ol{y_0}$, so that we can assume 
$\ol{y_1} > \ol{y_0}$.
$f \left( \left( \ol{y_0}, \ol{y_1} \right) \right)$ being non empty, we consider 
$ \left( p_0', p_1', q_0', q_1', \ol{x_0}, \ol{x_1}, \ol{x_2}, \ol{x_3} \right)$ in it such that
$ \left| p_1' q_1' \right|$ is minimal.
Set $p := p_0' p_1'$ and $q := q_0' q_1'$; by possibly applying the reversal operation, we can assume 
$ \mu \left( p_0' \right) = \left( \ol{x_0}, \ol{y_0} \right)$,
$ \mu \left( p \right) = \left( \ol{x_2}, \ol{y_1} \right)$,
$ \mu \left( q_0' \right) = \left( \ol{x_1}, \ol{y_0} \right)$,
$ \mu \left( q \right) = \left( \ol{x_3}, \ol{y_1} \right)$.
Further, set 
$i := \max \left( {\iso p }^{-1} \left( \Z \times \left\{ y_1 - 1 \right\} \right) \right)$
(i.e., $i$ is the last step at which the walk $\iso p$ reaches a point of ordinate $\ol{y_1} - 1$), 
$j := \max \left( {\iso q }^{-1} \left( \Z \times \left\{ y_1 - 1 \right\} \right) \right)$, 
$p_1'' := \factor {p} {\left| p_0' \right|} {i-1}$,
$p_1''' := \factor {p} i {\left| p \right|} $,
$q_1'' := \factor {q} {\left| q_0' \right|} {j-1}$,
$q_1''' := \factor {q} j {\left| q \right|}$.
Now, it cannot be $ p_1' \left( 0 \right) = b = q_1' \left( 0 \right)$, otherwise the minimality of $\left| \ol{y_1} - \ol{y_0} \right| $ would be violated.
As a consequence, $\max \left\{ i - \left| p_0' \right|, j - \left| q_0' \right|  \right\} > 0$, and therefore
\begin{align}
\label{RefFormula22}
\left|  p_1''' q_1''' \right| < \left| p_1' q_1' \right|.
\end{align}

We also note that
\begin{enumerate*}
\item
$
\ran \left( \restrict {\iso p} 
{\left[ \left| p_0'  \right|, \left| p_0' p_1''  \right| \right]}
 \right)
\cap 
\ran {\restrict {\iso q}
{\left[ \left| q_0'  \right|, \left|  q_0' q_1''  \right| \right]}
} 
= \ze$ 
because\\
$
\ran \left( \restrict {\iso p}
{\left[ \left| p_0' \right|, \left| p \right| \right]}
\right)
\cap
\ran \left( 
\restrict {\iso q}
{\left[ \left| q_0' \right|, \left| q \right| \right]}
\right) 
= \ze.
$
\item
$\bumps \left( p_1'' \right) = \ze = \bumps \left( q_1'' \right)$
because $\bumps \left( p_1' \right) = \ze = \bumps \left( q_1' \right)$ 
and $p_1''$, $q_1''$ are factors, respectively, of $p_1'$ and of $q_1'$.
\item
$
\left( \proj_1 \circ \mu \right) \left( p_0' p_1'' \right) 
<
\left( \proj_1 \circ \mu \right) \left( q_0' q_1'' \right) 
$
by minimality of $\left| \ol{y_1} - \ol{y_0} \right| $ and the two previous points.
\item
$\bumps \left( p_1''' \right) = \ze = \bumps \left( q_1''' \right)$
because $\bumps \left( p_1' \right) = \ze = \bumps \left( q_1' \right)$
and $p_1'''$, $q_1'''$ are factors, respectively, of $p_1'$ and of $q_1'$.
\end{enumerate*}
If $\ol{y_1} > \ol{y_0} + 1$, then we can use the minimality of $ \left| \ol{y_1} - \ol{y_0} \right|$ and the last two points above to draw that
$
\ran \left( \restrict {\iso {p}} {\left[ \left| p_0' p_1'' \right|, \left| p \right| \right]} \right)
\cap
\ran \left( \restrict {\iso {q}} {\left[ \left| q_0' q_1'' \right|, \left| q \right| \right]} \right) \neq \ze.
$
If $\ol{y_1} = \ol{y_0} + 1$, then we can use~\eqref{RefFormula22} and the last two points in the list above to draw the same conclusion, due to the minimality of $ \left| p_1' q_1' \right|$.
This conclusion clashes with 
$ 
\ran \left( \restrict {\iso p} {\left[ \left| p_0' \right|,\left| p \right|  \right]} \right)
\cap
\ran \left( \restrict {\iso q} {\left[ \left| q_0' \right| , \left| q \right|  \right]} \right)
= \ze{}.
$
\end{proof}

We conclude this section with a final, elementary
\begin{Rem}
\label{RefRem}
Let $p$ be a factor of $q u r \in \letters_N$, with $\left| u \right| \leq 1$.
If $u' \eq u$, then there is a factor $p' \eq p$ of $q u' r$.
\end{Rem}
\begin{proof}
Since $\left| u \right| \leq 1$, $u$ is either fully included in $p$ or not. 
If not, we just take $p' := p$. 
Otherwise, $p'$ is the string obtained by replacing $u$ with $u'$ in $p$.
\end{proof}

\subsection{Proof of Theorem~\ref{RefLmMain}}
\label{RefSectProofSkeleton}
Rules \eqref{RefRule0}--\eqref{RefRuleRight} are invariant with respect to swapping $a$'s with $\conj a$'s, $b$'s with $\conj b$'s, or $a$'s with $b$'s and $\conj a$'s with $\conj b$'s, 
as well as under reversal of both the strings in the pairs involved, and under swapping the words in the pairs involved.
All the notions we will be interested in enjoy the same invariances;
for example, $ \left( p_0, p_1 \right)$ has a $1$-shortcut if and only if $ \left( \rev p_1, \rev p_0 \right)$ has.
We will often use these invariances silently; 
a first consequence is that 
we can impose that both the components of $\mu \left( \ceP \right)$ are non-negative without affecting the other properties we proved up to this point.
We state this new property explicitly in the following proposition, along with others.

\begin{Prop}
\mbox{}
\begin{enumerate*}[label=P\arabic*]
\setcounter{enumi}{\value{propertyCounter}}
\item
\label{RefReqPositive}
If $ \left( m, n \right) = \mu \left( \ceP \right)$, then $\min \left\{ m, n \right\} \ge 0$;
\item
\label{RefReqSimple}
$ \ceP \ceQ $ is closed and simple;
\item
\label{RefReqSelfFact}
$\ceP$ and $\ceQ$ are both non-self-factoring.
\end{enumerate*}
\end{Prop}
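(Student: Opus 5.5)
The plan is to establish the three properties one at a time. Each is either a normalisation via the invariances listed at the start of Section~\ref{RefSectProofSkeleton}, or a contradiction with \ref{RefReqNoReach} obtained from the minimality \ref{RefReqMin}.

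\ref{RefReqPositive}. Swapping $a$ with $\conj a$ (respectively $b$ with $\conj b$) changes the sign of the first (respectively second) component of $\mu\left(\ceP\right)$. It maps the rule set \eqref{RefRule0}--\eqref{RefRuleRight} onto itself, so it preserves derivability. It also preserves lengths and closedness, and the condition in \ref{RefReqRectangular}, since inversion commutes with these swaps. Applying the swap componentwise where needed therefore yields a counterexample that still satisfies \ref{RefReqBalanced}--\ref{RefReqRectangular} and has $\min\left\{m,n\right\}\ge 0$.

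\ref{RefReqSimple}. Closedness is \ref{RefReqBalanced}. For simplicity, assume a loop $u$ exists, i.e.\ a proper nonempty closed factor of $\ceP\ceQ$, and split according to where $u$ lies.
\begin{enumerate*}
\item If $u$ straddles the boundary, write $\ceP = r s$ and $\ceQ = t w$ with $u = s t$. Then $(r,w)$ and $(s,t)$ are both balanced and strictly shorter than $(\ceP,\ceQ)$, so they are derivable by \ref{RefReqMin}. Rule~\eqref{RefRuleSandwich} then derives $(rs, tw) = (\ceP,\ceQ)$, contradicting \ref{RefReqNoReach}.
\item If $u$ is a left factor of $\ceP$, write $\ceP = u p_1$. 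By minimality, $(u,\ze)$ and $(p_1,\ceQ)$ are derivable, and rule~\eqref{RefRuleLeft} gives $(u p_1 \ze, \ceQ)$. Right factors of $\ceP$, and the symmetric cases inside $\ceQ$, are handled by reversal and swapping.
\item If $u$ lies strictly inside $\ceP$, write $\ceP = p_0 u p_1$ with $p_0, p_1 \neq \ze$.
\end{enumerate*}
The third case is the obstacle I expect. The natural one-step derivations need auxiliary pairs such as $(p_0,p_1)$ and $(u,\ceQ)$, which are balanced only when $\ceQ$ is itself closed. In that sub-case $\ceP$ is closed too, and the loop can be handled directly. In general, I would first try replacing $u$ by a shortest loop, or by the complementary closed factor $\ceQ$-plus-rest of the cyclic word, so as to move into the straddling or prefix cases. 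If that fails, I would chain two rule applications: for instance, derive $(p_0 u, p_1\ceQ)$ from $(p_0, p_1\ceQ)$ and $(u,\ze)$ via rule~\eqref{RefRuleSandwich}, and then look for a rule that moves $p_1$ back into the first component.

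\ref{RefReqSelfFact}. Suppose $\ceP$ has a proper factor $\factor{\ceP}{i}{j} \eq \ceP$. Then $\factor{\ceP}{0}{i-1}\factor{\ceP}{j+1}{\left|\ceP\right|-1}$ is closed. Now use $\left|\ceP\right| \ge 2$ and $\ceQ \neq \ze$ from \ref{RefReqSize}. If $i>0$ and $j<\left|\ceP\right|-1$, the closed left factor $\factor{\ceP}{0}{i-1}$ together with the right factor $\factor{\ceP}{j+1}{\left|\ceP\right|-1}$ recombine, cyclically through $\ceQ$, into a configuration handled by the rules above. If one of the two pieces is empty, the other is a nonempty closed left or right factor of $\ceP$, hence a proper closed factor of $\ceP\ceQ$, i.e.\ a loop. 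This contradicts \ref{RefReqSimple}, so this direction reduces to \ref{RefReqSimple} once the general recombination is checked. The same argument applies to $\ceQ$ by the swap invariance.
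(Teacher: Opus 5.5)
\textbf{Gap in \ref{RefReqSimple}.} Your argument for \ref{RefReqPositive} is correct and matches the paper. So is your handling of loops that contain the junction of $\ceP$ and $\ceQ$, or that are left or right factors of one component; for the latter you use rule~\eqref{RefRuleLeft} where the paper uses rule~\eqref{RefRuleSandwich}, and both work.

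The case you flag yourself, $\ceP = p_0 u p_1$ with $u$ a loop and $p_0 \neq \ze \neq p_1$, is the real content of \ref{RefReqSimple}, and your fallbacks do not close it:
\begin{itemize}
\item taking a shortest loop gives no control over where it sits;
\item the closed word $p_1 \ceQ p_0$ is only a cyclic factor of $\ceP\ceQ$, not a factor;
\item no rule re-factorises an already derived pair such as $(p_0 u, p_1 \ceQ)$ into $(p_0 u p_1, \ceQ)$.
\end{itemize}
No single rule can produce $(\ceP,\ceQ)$ from closed premises here. Rule~\eqref{RefRuleSandwich} needs a loop through the junction, rules~\eqref{RefRuleLeft} and \eqref{RefRuleRight} need a self-factoring component, and rule~\eqref{RefRuleRectangular} needs end letters that \ref{RefReqRectangular} excludes. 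So a buried loop cannot be used in one step.

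The missing idea is to delete the loop and look inside a derivation. The pair $(p_0 p_1, \ceQ)$ is closed and strictly shorter, so $\proves{} (p_0 p_1, \ceQ)$ by \ref{RefReqMin}. Take the last rule of a shortest such derivation and re-insert $u$ at position $|p_0|$ into the premise containing that position. That premise stays closed and strictly shorter than $(\ceP,\ceQ)$, since the other premise is non-empty. Hence it is derivable by \ref{RefReqMin}, and the same rule then yields $(\ceP,\ceQ)$, contradicting \ref{RefReqNoReach}. Rule~\eqref{RefRule0} cannot be the last rule, since $|p_0 p_1| \ge 2$. The hypothesis $p_0 \neq \ze \neq p_1$ is exactly what preserves the side condition of rule~\eqref{RefRuleRectangular}. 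A length-$1$ premise at the front is $(p_0(0), \conj{p_0(0)})$, and one at the back involves the last letter of $p_1$. So $u$ always lands in the other premise.

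\textbf{Vagueness in \ref{RefReqSelfFact}.} The main subcase is not actually argued. $\factor{\ceP}{0}{i-1}$ need not be closed; only its concatenation with $\factor{\ceP}{j+1}{|\ceP|-1}$ is. ``Recombining cyclically through $\ceQ$'' does not specify a derivation. No reduction to \ref{RefReqSimple} is needed. Set $q_0 := \factor{\ceP}{0}{i-1}$, $q_1 := \factor{\ceP}{j+1}{|\ceP|-1}$ and $p := \factor{\ceP}{i}{j}$. Both $(q_0, q_1)$ and $(p, \ceQ)$ are closed and strictly shorter than $(\ceP,\ceQ)$, hence derivable by \ref{RefReqMin}. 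Rule~\eqref{RefRuleLeft} then gives $(q_0 p q_1, \ceQ) = (\ceP, \ceQ)$, and rule~\eqref{RefRuleRight} does the same for $\ceQ$.
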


\begin{proof}
We discussed \ref{RefReqPositive} above, while closure of $ \ceP \ceQ$ was already established (\ref{RefReqBalanced}).
Assume $ \ceP \ceQ$ has a loop; then the loop must be a factor of either $\ceP$ or $\ceQ$, otherwise we could use rule 
\eqref{RefRuleSandwich} to derive $ \left( \ceP, \ceQ \right)$.
We can suppose then that $\ceP$ has a loop (the other case being symmetric):
$\ceP = p_1 p_0 p_2$, 
$p_0 \in \left[ \ze \right]_{\eq} \sdiff \left\{ \ze, \ceP \right\}$.
We can further strengthen this to $p_1 \neq \ze \neq p_2$ for otherwise we could use rule~\eqref{RefRuleSandwich} to derive $\left( \ceP, \ceQ \right)$.
By \ref{RefReqMin}, $\proves{} \left( p_1 p_2, \ceQ \right)$, hence let us proceed by cases on the last rule applied in the derivation of $\left( p_1 p_2, \ceQ \right) $;
we suppose this rule is~\eqref{RefRuleRectangular}, the other cases being simpler.
Consider $p_3, p_4, q_1, q_2$ such that $p_1 p_2 = p_3 q_1$, $\ceQ = p_4 q_2$, 
$\proves{} \left( p_3, p_4 \right)$, $ \proves{} \left( q_1, q_2 \right)$,
$ \left\{ 1 \right\} \in \left\{ \left\{  \left| p_3 \right|, \left| p_4 \right| \right\}, \left\{
\left| q_1  \right|, \left|  q_2 \right|
\right\} \right\}$.
We only show the subcase $ \left| p_3 \right| = \left| p_4 \right| = 1$, the other being symmetric. 
Therefore, we have $p_1 p_2 = x q_1$ and $\ceQ = \conj x q_2$ for some letter $x \in \letters_2$; since $p_1 \neq \ze$, this implies that $ \ceP = p_1 p_0 p_2 = x q_1' p_0 q_1''$ with $q_1' q_1'' = q_1$.
By~\ref{RefReqMin}, then, $\proves{} \left( q_1' p_0 q_1'', q_2 \right)$, so that we can apply rule~\eqref{RefRuleRectangular} to contradictorily derive $\left( \ceP, \ceQ \right)$.
Hence, $\ceP \ceQ$ cannot have a loop.

It $\ceP$ were self-factoring, then $\ceP = q_0 p q_1$ for some $p$, $q_0$, $q_1$ 
such that $q_0 q_1 \in \left[ \ze \right]_{\eq} \sdiff \left\{ \ze, \ceP \right\}$ (if it were $q_0 q_1 = \ceP$, then $\ceP \ceQ$ would have a loop, which we escluded).
Again by~\ref{RefReqMin}, this implies $\proves{} \left( q_0, q_1 \right)$ and $\proves{} \left( p, \ceQ \right)$, which yields to the contradiction $ \proves{} \left( \ceP, \ceQ \right)$ using rule~\eqref{RefRuleLeft}. 
Symmetrically, using rule~\eqref{RefRuleRight}, $\ceQ$ cannot be self-factoring.
\end{proof}

\ref{RefReqSimple}, \ref{RefReqRectangular} and \ref{RefReqSelfFact} state three properties which $ \left( \ceP, \ceQ \right) $ satisfies; in the sequel, we will be interested in studying whether such properties apply to other string pairs.
Therefore, it is convenient to introduce dedicated definitions to help concisely expressing these properties.

\begin{Def}
\mbox{}
\begin{itemize}
\item
$\climple
:= \left\{ \left( p, q \right) \in \letters_2^* \times \letters_2^* . \
p q \text{ is closed and simple }
\right\} $
\item
$ \trapezoidals 
:= \left\{ \left( p, q \right) \in \letters_2^* \times \letters_2^* . \ 
p \neq \ze \neq q \wedge \left( p \left( 0 \right) = \conj { q \left( 0 \right) } \vee  
\rev p \left( 0 \right) = \conj{ \rev q \left( 0 \right)}
\right)
\right\} $
\item
$
\selffact 
:= \left\{ \left( p, q \right) \in \letters_2^* \times \letters_2^* . \
\text{either } p \text{ or } q \text{ is self-factoring}
\right\} $
\end{itemize}
\end{Def}

\ref{RefReqSimple}, \ref{RefReqRectangular} and \ref{RefReqSelfFact} can thus be condensed into the statement
$
 \left( \ceP, \ceQ \right) \in \climple \sdiff \trapezoidals \sdiff \selffact.
$
We are now ready to prove Theorem~\ref{RefLmMain} using the following two theorems, whose proofs are, respectively, in Sections~\ref{RefSectLmMain1} and \ref{RefSectLmMain2}.

\begin{Thm}
\label{RefLmMain1}
Let $ \left( p_0, p_1 \right) \in \climple \sdiff \trapezoidals \sdiff \selffact$.
Assume
\begin{enumerate*}
\item
\label{RefAss1}
$\bumps \left( p_0, p_1 \right) \neq \ze$, and
\item
\label{RefAss2}
Given any minimal bump $\left[ i, j \right]$ for $ \left( p_0, p_1 \right)$, $ \left( p_0, p_1 \right) - \left\{ i, j \right\} \notin \climple \sdiff \trapezoidals $.
\end{enumerate*}
Then $ \left( p_0, p_1 \right)$ admits a $1$-shortcut which belongs to $\climple \sdiff \trapezoidals$.
\end{Thm}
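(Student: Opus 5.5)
Take a minimal bump $\left[ i, j \right]$ for $\left( p_0, p_1 \right)$; by the symmetries of Section~\ref{RefSectProofSkeleton} it is an $x$-bump of $p_0$ with $x = a$. Let $\left( p_0', p_1 \right) := \left( p_0, p_1 \right) - \left\{ i, j \right\}$. By Remark~\ref{RefLmBumpCancel} this is a $2$-shortcut, so $p_0' p_1$ is still closed. Hypothesis~\ref{RefAss2} then leaves two cases: either $p_0' p_1$ has a loop, or $\left( p_0', p_1 \right) \in \trapezoidals$. In each case I will modify only the bump, replacing $\factor{p_0}{i}{j}$ by a string of length one less, and obtain a $1$-shortcut in $\climple \sdiff \trapezoidals$.

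The bump has the form $a\, w\, \conj a$ with $w \in \left\{ b, \conj b \right\}^*$ short, so $w = b^k$ (say) with $k \ge 1$ by minimality. The natural $1$-shortcut candidates are $b^k$ with one extra detour letter, realized as $\factor{p_0}{0}{i-1}\, s\, \factor{p_0}{j+1}{\left| p_0 \right| - 1}$ where $s \eq a b^k \conj a$ and $\left| s \right| = k+1$. This forces $k = 1$ and $s \in \left\{ b, \ldots \right\}$, so this family is too small. I therefore reconsider. A $1$-shortcut may replace any detour $\factor{p_0}{u}{v}$ by a string one letter shorter with the same displacement. So I will choose a detour containing the bump together with one neighbouring letter, which is a detour of radius parity compatible with length minus one. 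For example, $\factor{p_0}{i}{j+1} = a b^k \conj a y$ can be replaced by a reordering such as $b^k y'$ or $y\, b^k$ when $y // b$, or by $b^{k}$ with one $a$-letter moved across, so that the walk is shifted by one unit relative to the fully cancelled version $p_0'$.

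In the loop case, a loop of $p_0' p_1$ must straddle the cancellation point, since $p_0 p_1$ is simple. Its walk passes through the translated segment $\iso{p_0}$ restricted to $\left[ i+1, j \right]$, shifted by $-\mu(a)$. I plan to show that some unit-shifted variant avoids every loop. Here I would use Lemma~\ref{RefLmGeometric}: if both the shifted variant and $p_0' p_1$ acquired loops, the two short walks involved would be forced to meet in a way contradicting simplicity of $p_0 p_1$, or showing that $p_0$ or $p_1$ is self-factoring, which is excluded since $\left( p_0, p_1 \right) \notin \selffact$. In the $\trapezoidals$ case, the cancellation must have exposed a new first or last letter, so $i = 0$ or $j = \left| p_0 \right| - 1$. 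Then I choose the $1$-shortcut that keeps an $a$ or $\conj a$ at that end, which preserves the non-trapezoidal status inherited from $\left( p_0, p_1 \right)$ via~\ref{RefReqRectangular}-type reasoning. I still have to check that this choice creates no loop, again by the geometric lemma.

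The main obstacle is the loop case. It requires a careful case split on where the loop's endpoints lie relative to the bump and to the boundary between $p_0$ and $p_1$, and on whether the neighbouring letters are parallel or orthogonal to $a$. Minimality of the bump among $a$-bumps, and non-self-factoring, are what exclude the bad configurations. I expect to first treat bumps strictly inside $p_0$ with both neighbours present, and to handle the boundary configurations afterwards.
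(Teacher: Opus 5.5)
\textbf{There is a genuine gap: the proposal misreads the definition of ``$1$-shortcut''.}

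\textbf{What a $1$-shortcut is.} In the paper, $(i,j,s)$ is an $m$-shortcut for $q$ when $[i,j]$ is a detour, $s \eq \factor{q}{i}{j}$ and $m = |s| < |\factor{q}{i}{j}|$. So $m$ is the length of the replacement string, not the number of letters removed. Cancelling a bump $a b^k \conj a$ is therefore a $k$-shortcut; Example~\ref{RefExampleFirst} calls the cancellation of the length-$3$ bump $[8,10]$ a $1$-shortcut. The theorem asks you to find a detour of radius $1$ in $p_0$ or in $p_1$, and collapse it to a single letter so that the pair stays in $\climple \sdiff \trapezoidals$.

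This is not cosmetic. The proof of Theorem~\ref{RefLmMain} transfers self-factoring back from the shortcut via Remark~\ref{RefRem}, which needs the inserted piece to have length at most $1$.

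\textbf{Why your version cannot work.} Replacing a detour by ``a string one letter shorter with the same displacement'' is never possible. Every letter changes $\proj_1 + \proj_2$ of the displacement by $\pm 1$, so equivalent strings have lengths of equal parity. The ``bump plus one neighbour'' replacements you list shorten by two letters, and the replacements are not single letters.

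With the correct notion, your main worry disappears. Collapsing a detour to one letter gives a walk whose vertices are a subsequence of those of $\iso{p_0 p_1}$, so closedness and simplicity are automatic. The whole problem is to find a radius-$1$ detour at all, and to keep the end letters from becoming mutually inverse.

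\textbf{Why the detour cannot be found locally.} The only bump-local $1$-shortcut, cancelling a length-$3$ bump, is excluded by hypothesis~(\ref{RefAss2}), since such a bump is automatically minimal. A factor made of $a b^k \conj a$ plus $t$ further letters has radius at least $k-t$. Hence any radius-$1$ detour containing a long minimal bump must extend far beyond it.

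\textbf{How the paper finds it.} The paper classifies each minimal bump of a pair in $\candidatempty$ as clear, double, corner or clogged (Lemma~\ref{RefLmBumpClassification4}).
\begin{itemize}
\item Clogged case: simplicity gives a unique index $l$ at which $\iso{p_0 p_1}$ visits $\mu(\alpha)$. The whole prefix of $p_0$ before $l$ (or the suffix of $p_1$ from $l$) is collapsed to $\alpha$ (or $\conj\alpha$), by Proposition~\ref{RefLmCloggedShortcut}. A similar long collapse appears in Lemma~\ref{RefLmCloggedBumps}.
\item Clear, double and corner cases (Lemmas~\ref{RefLmClearbumps2} and~\ref{RefLmCornerBumps}): the given bump is set aside. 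One chases minimal bumps in other directions, whose existence comes from Proposition~\ref{RefLmUpDown} and Lemma~\ref{RefLmTopological}; the latter is where Lemma~\ref{RefLmGeometric} is actually used. Configurations are ruled out by simplicity, non-self-factoring and membership in $\candidatempty$, until a clogged bump is reached. That bump yields a radius-$1$ collapse that keeps the first and last letters non-inverse.
\end{itemize}
Your appeal to Lemma~\ref{RefLmGeometric} to make ``some unit-shifted variant'' avoid all loops has no argument behind it, and does not engage with this mechanism.
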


\begin{Thm}
\label{RefLmMain2}
There is no minimal bump $ \left[ i, j \right]$ for $ \left( \ceP, \ceQ \right)$ such that
$ \left( \ceP, \ceQ \right) - \left\{ i, j \right\} \in \climple \sdiff \trapezoidals.$
\end{Thm}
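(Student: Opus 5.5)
Suppose, for contradiction, that $\left[ i, j \right]$ is a minimal bump for $\left( \ceP, \ceQ \right)$ with $\left( p, q \right) := \left( \ceP, \ceQ \right) - \left\{ i, j \right\} \in \climple \sdiff \trapezoidals$. By the invariances of Section~\ref{RefSectProofSkeleton} (swapping components and reversing) we may assume the bump lies in $\ceP$, so $q = \ceQ$ and $p = \ceP - \left\{ i, j \right\}$. Write $x := \ceP \left( i \right)$, so $\ceP \left( j \right) = \conj x$. Since $\left| pq \right| = \left| \ceP \ceQ \right| - 2$ and $pq \in O_2$, property~\ref{RefReqMin} gives $\proves{} \left( p, \ceQ \right)$. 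The plan is to look at the last rule used in this derivation and to lift it to a derivation of $\left( \ceP, \ceQ \right)$ by reinserting $x$ and $\conj x$. This contradicts~\ref{RefReqNoReach}. The argument has the same flavour as the loop-removal argument in the proof of~\ref{RefReqSimple}.

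The lifting proceeds case by case.
\begin{itemize}
\item Rule~\eqref{RefRule0} cannot be the last rule, because $\left| p \right| + \left| \ceQ \right| \ge 2 + 2 - 2 \ge 2$ and $p \neq \ze$ would need checking. It is easier to note that $\left( p, \ceQ \right) \in \climple$ forces nontrivial length.
\item If the last rule is~\eqref{RefRuleSandwich}, \eqref{RefRuleLeft} or~\eqref{RefRuleRight}, we get $\left( p, \ceQ \right)$ from premises $\left( u_0, v_0 \right)$ and $\left( u_1, v_1 \right)$. Each of $u_0, v_0, u_1, v_1$ is a factor of $p$ or of $\ceQ$, and each premise has closed concatenation. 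Reinserting $x, \conj x$ at the original positions produces pieces $u_k'$ whose concatenations are still closed only if both letters fall into the same premise.
\begin{itemize}
\item If both letters fall into the same premise, that premise (with its length increased by $2$) is still strictly shorter than $\left( \ceP, \ceQ \right)$, hence derivable by~\ref{RefReqMin}. Applying the same rule then derives $\left( \ceP, \ceQ \right)$.
\item If the letters fall into different premises, I would use~\ref{RefReqSimple}. The premise concatenations $u_k v_k$ are closed and, under rules~\eqref{RefRuleLeft} and~\eqref{RefRuleRight}, some $u_k$ or $v_k$ is a factor of $\ceP$ or $\ceQ$ that is closed. This makes $\ceP$ or $\ceQ$ self-factoring, or gives a loop of $\ceP \ceQ$, contradicting~\ref{RefReqSelfFact} or~\ref{RefReqSimple}. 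For rule~\eqref{RefRuleSandwich}, a closed middle factor $u_1 v_1$-type piece again yields a loop unless a piece is empty, and the empty case reduces to the previous one.
\end{itemize}
\item If the last rule is~\eqref{RefRuleRectangular}, then $\left| u_0 \right| = \left| v_0 \right| = 1$ (or symmetrically for the last letters). Then $p \left( 0 \right) = \conj{ \ceQ \left( 0 \right) }$, so $\left( p, \ceQ \right) \in \trapezoidals$, contrary to the choice of $\left[ i, j \right]$. This explains the hypothesis $\notin \trapezoidals$. The only subtlety is when $p$ or $\ceQ$ is empty, and that is excluded by the lengths.
\end{itemize}

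I expect the main obstacle to be the split case: the two reinserted letters land in different premises of rule~\eqref{RefRuleSandwich}, \eqref{RefRuleLeft} or~\eqref{RefRuleRight}. Closedness of the premises is then lost after reinsertion, and one must use simplicity of $p \ceQ$ (from $\left( p, \ceQ \right) \in \climple$) together with the bump structure. Recall that the interior of $\factor{\ceP}{i}{j}$ is short and contains no $x$-letters. The aim is to show that a premise boundary cannot separate $i$ from $j$. Here I would try Lemma~\ref{RefLmGeometric}: the walks of the two premise pieces on either side of the cut would have to cross, which produces a loop in $p\ceQ$.

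Minimality of the bump $\left[ i, j \right]$ would enter here. It ensures that no shorter bump along the orthogonal direction sits inside, so that the interior of the bump is a monotone walk in the direction orthogonal to $x$. I do not yet know exactly how the argument goes through at this step.
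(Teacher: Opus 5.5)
\textbf{There is a genuine gap: the split case you leave open is the whole content of the theorem, and the mechanism you suggest for it cannot work.}

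Your case analysis of the last rule is correct as far as it goes. Stripped of the lifting language, though, it only reproduces the paper's opening step. A derivable pair in $\climple \sdiff \trapezoidals$ cannot end with rule~\eqref{RefRuleRectangular} or~\eqref{RefRuleSandwich}, so it ends with~\eqref{RefRuleLeft} or~\eqref{RefRuleRight} and therefore lies in $\selffact$. Since $\ceQ$ is untouched and not self-factoring, $p = \ceP - \left\{ i, j \right\}$ is self-factoring: $p = u_0 u_1 v_0$ with $u_0 v_0 \eq \ze \neq u_0 v_0$.

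Your claim that in the split case some $u_k$ or $v_k$ is a closed factor of $\ceP$ or $\ceQ$ is false. The closed object $u_0 v_0$ is not a factor, and $u_1$ is equivalent to $p$, not to $\ceP$. What remains to prove is that self-factoring of $\ceP - \left\{ i, j \right\}$ forces self-factoring of $\ceP$, contradicting~\ref{RefReqSelfFact}. Your ``same premise'' subcase is the easy, even case of Lemma~\ref{RefLmSelfFact2}.

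In the odd case the paper does not show that the cut cannot separate $i$ from $j$. Instead, Proposition~\ref{RefLmSelfFact1} builds a \emph{different} factorization of $\ceP$. It slides the letter $\alpha$ past the run $\beta^{m_1+m_2}$ left by the cancelled bump and re-cuts. Minimality of the bump enters there as hypothesis 5 (every $\alpha$-bump of the tail is at least as long). It is not needed to make the interior straight, since that holds for every bump over $\letters_2$.

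Looking for a loop in $p \ceQ$ is hopeless, because $p \ceQ$ is simple by hypothesis. The cut really can separate $i$ from $j$ with nothing locally wrong. Take $\ceP = b a a \conj b a a b \conj a$ and its $b$-bump $\left[ 0, 3 \right]$.
\begin{itemize}
\item $\ceP - \left\{ 0, 3 \right\} = a \cdot a a a b \cdot \conj a$ is simple and self-factoring, and this is its only self-factorization.
\item The reinserted $b$ and $\conj b$ fall on opposite sides of the cut.
\item Yet $\ceP$ is not self-factoring: no proper factor of length $4$ or $6$ has displacement $\left( 3, 1 \right)$.
\end{itemize}

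This bump is in $\criticalbumps \left( \ceP \right)$. For such bumps, self-factoring of $\ceP - \left\{ i, j \right\}$ says nothing about $\ceP$, so the contradiction must come from somewhere else. The paper handles this in two steps. It first excludes minimal bumps of cardinality $3$ via Lemma~\ref{RefLmSelfFact2}. It then uses Proposition~\ref{RefLmNonCriticalBumps} to replace a critical bump by a non-critical minimal bump in the orthogonal direction whose cancellation is still in $\climple \sdiff \trapezoidals$, and applies Lemma~\ref{RefLmSelfFact2} to that bump. Your plan fixes the given bump and tries to repair one derivation, so it has no room for this switch.
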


\begin{proof}[Proof of Theorem~\ref{RefLmMain}]
Theorem~\ref{RefLmMain2} grants that $ \left( \ceP, \ceQ \right)$ satisfies hypothesis~\eqref{RefAss2} of Theorem~\ref{RefLmMain1}; if we also manage to prove 
$ \bumps \left( \ceP, \ceQ \right) \neq \ze$, we can apply Theorem~\ref{RefLmMain1} to 
$ \left( \ceP, \ceQ \right)$, obtaining 
a $1$-shortcut $\left( p, q \right)$ of $ \left( \ceP, \ceQ \right)$, and we can assume $q=\ceQ$.
Then $ \left( p, q \right) \in \selffact$, otherwise $ \left( p, q \right)$ would not be derivable by virtue of~\ref{RefReqMin}.
This means that $p$ is self-factoring and is also a $1$-shortcut of $\ceP$.
Now, it is simple to check that a string admitting a $1$-shortcut which is self-factoring is also self factoring, by just applying Remark~\ref{RefRem}.
This contradicts $ \left( \ceP, \ceQ \right) \notin \selffact$: $ \left( \ceP, \ceQ \right)$ cannot exist, and Theorem~\ref{RefLmMain} holds.

We reduced to prove $ \bumps \left( \ceP, \ceQ \right) \neq \ze$.
Assuming the contrary equates to say that $\ceP$ and $\ceQ$ are both short.
This implies, due to \ref{RefReqPositive}, that $ \left\{ \ceP \left( 0 \right), \rev \ceP \left( 0 \right) \right\} \subseteq \left\{ a, b \right\}$ and 
$
\left\{ \ceQ \left( 0 \right), \rev \ceQ \left( 0 \right) \right\} \subseteq \left\{ 
\conj a, \conj b
\right\}.
$
Properties~\ref{RefReqSimple} and~\ref{RefReqRectangular} further restrict the possibilities to only two: 
$ 
\left(  
\ceP \left( 0 \right), \rev \ceP \left( 0 \right),
\ceQ \left( 0 \right), \rev \ceQ \left( 0 \right)
\right) 
\in 
\left\{ 
\left( a, a, \conj b, \conj b \right),
\left( b, b, \conj a, \conj a \right)
\right\}.
$
In both these configurations, we apply Lemma~\ref{RefLmGeometric} to see that 
there must be distinct $i, j \in \left] 0, \left| \ceP \ceQ \right| - 1 \right[$ such that 
$\mu \left( \factor {\ceP \ceQ}{0}{i} \right) =
\mu \left( \factor {\ceP \ceQ}{0}{j} \right).$
This contradicts the simplicity of $\ceP \ceQ$.
\end{proof}

\section{Proof of Theorem~\ref{RefLmMain1}}
\label{RefSectLmMain1}
For the rest of the paper, we will mostly restrict ourselves to the 
case of strings in $ \letters_2^* = \left\{ a, \conj a, b, \conj b \right\}^*$.
Let us first introduce the set of all the possible pairs satisfying the hypotheses of Theorem~\ref{RefLmMain1}:

\begin{center}
$
\candidatempty \left( \x  \right) := \{ 
\left( p_0, p_1 \right) \in \climple \sdiff \trapezoidals{} \sdiff \selffact{}. \ 
\forall x, i, j. \ 
\left[ i, j \right] \in 
\argmin_{\card} \bumps^x \left( p_0, p_1 \right) \ra{} 
\left( p_0, p_1 \right) - \left\{ i, j \right\} \notin \climple \sdiff \trapezoidals
\}.
$
\end{center}

Informally, Theorem~\ref{RefLmMain1} tells us that if the cancelling of all the minimal bumps breaks either simplicity of  $p_0 p_1$or the property $ \left( p_0, p_1 \right) \notin \trapezoidals$, then one of the bi-dimensional walks $\iso {p_0}$ and $\iso {p_1}$ has a detour of radius $1$.
While it is not obvious that such a detour exists, it is intuitive that such a detour can always be shortcut without compromising simplicity.
However, Theorem~\ref{RefLmMain1} additionally tells us that also the property $ \left( p_0, p_1 \right) \notin \trapezoidals$ is preserved after shortcutting.
To give an overview of the proof, we illustrate a classification of the possible reasons why cancelling a minimal bump from $ \left( p_0, p_1 \right) \in \candidatempty{}$ breaks the property of being in $\climple \sdiff \trapezoidals$.
Let us take, to fix our ideas, a minimal bump of indices $\left[ i, j \right]$, belonging to $p_0$, and having the form $b a^{m+2} \conj b$ for some $m \in \N$ (note that cancelling a bump of length less than $4$ cannot break simplicity).
Consider the points of $\Z^2$ which are ``underneath'' the bump (marked in Figure~\ref{RefFigBumps}, I)): 
it is clear that a necessary condition for 
$ \left( p_0, p_1  \right) - \left\{ i, j \right\} $ 
to be in $\climple \cap \trapezoidals$ is that the walk $\iso {p_0 p_1}$ does not touch any of those points, but the canceling of bump $\left[ i, j \right]$ changes 
the first or the last letter of $p_0$ or $p_1$, which means that
the first or the last point touched by the walk given by the bump $\left[ i, j \right]$ is in $ \left\{ 0, \mu \left( p_0 \right) \right\}$.
This configuration is depicted in for the case of $\mu \left( p_0 \right)$ (whose location is represented by a square) in Figure~\ref{RefFigBumps}~II).
Another possibility implying that the cancelling of $\left[ i, j \right]$ from $ \left( p_0, p_1 \right)$ breaks the property of being in $\climple \sdiff \trapezoidals$ is that 
$ \left( p_0, p_1  \right) - \left\{ i, j \right\} $ 
is no longer in $\climple$: this happens when the walk $\iso {p_0 p_1}$ touches at least one of the points marked in Figure~\ref{RefFigBumps}~I.
One configuration which is prevented when $\left[ i, j \right]$ is minimal is the one depicted in 
Figure~\ref{RefFigBumps}~III, where another bump 
$ \left[ i', j' \right] \in
\bumps^b \left( p_0, p_1 \right)$ 
manages to occupy some of those points having strictly smaller length than $\left[ i, j \right]$:
this should make clear that the points marked in Figure~\ref{RefFigBumps}~I cannot belong to another bump in $\bumps^b \left( p_0, p_1 \right)$.
This rules out a number of possibilities, and should help the reader to convince herself that only three configurations remain possibly making 
$ \left( p_0, p_1 \right) - \left\{ i, j \right\} \notin \climple$.
The first is when there is a bump $\left[ i', j' \right] \in \bumps^b \left( p_0 p_1 \right) \sdiff \bumps^b \left( p_0, p_1 \right)$ shorter than $\left[ i, j \right]$ (Figure~\ref{RefFigBumps}~IV).
\begin{figure}[]
\includegraphics[trim={3mm 5mm 67mm 78mm}, clip, scale=1.5]{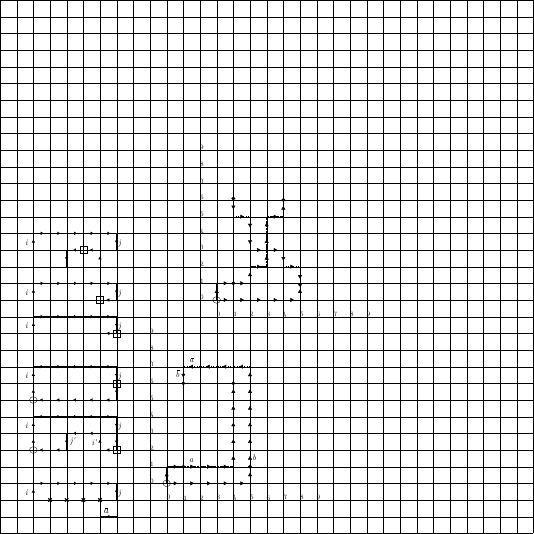}
I)
\hfill
\includegraphics[trim={3mm 22mm 67mm 60mm}, clip, scale=1.5]{f01}
II), type \clearbumps 
\hfill
\includegraphics[trim={3mm 13mm 67mm 70mm}, clip, scale=1.5]{f01}
III)
\\
\includegraphics[trim={3mm 44mm 67mm 38mm}, clip, scale=1.5]{f01}
IV), type \cloggedbumps 
\hfill
\includegraphics[trim={3mm 32mm 67mm 53mm}, clip, scale=1.5]{f01}
V), type \cornerbumps 
\hfill
\includegraphics[trim={3mm 37mm 67mm 45mm}, clip, scale=1.5]{f01}
VI), type \doublebumps 
\caption{Configurations of a minimal bump $\left[ i, j \right] \in \bumps^b \left( p_0, p_1 \right)$. $p_0$ goes from the circle (the origin) to the square ($\mu \left( p_0 \right)$), while $p_1$ goes from the square to the circle. }
\label{RefFigBumps}
\end{figure}

The other two are visualised in Figure~\ref{RefFigBumps}~V~and~VI.
Each of the configurations in Figure~\ref{RefFigBumps} II, IV, V and VI admits a number of symmetric variations (see the discussion on the transformational invariants at the beginning of Section~\ref{RefSectProofSkeleton}).
We will denote with $\clearbumps_1 \left( p_0, p_1 \right)$ the family of configurations represented by that in Figure~\ref{RefFigBumps}~II, with 
$\cloggedbumps_1 \left( p_0, p_1 \right)$ the family of configurations represented by that in Figure~\ref{RefFigBumps}~IV, with 
$\cornerbumps_1 \left( p_0, p_1 \right)$ the family of configurations represented by that in Figure~\ref{RefFigBumps}~V, 
and with $\doublebumps_1 \left( p_0, p_1 \right)$ the ones represented by that in Figure~\ref{RefFigBumps}~VI.
Further variations can be obtained by substituting in Figures~\ref{RefFigBumps} II, IV, V and VI the square representing $\mu \left( p_0 \right)$ with the circle representing the origin: this formally corresponds to swapping $p_0$ with $p_1$.
The configurations we obtain in this way from
$\clearbumps_1 \left( p_0, p_1 \right)$,
$\cloggedbumps_1 \left( p_0, p_1 \right)$, 
$\cornerbumps_1 \left( p_0, p_1 \right)$, 
and
$\doublebumps_1 \left( p_0, p_1 \right)$ 
will be called, respectively,
$\clearbumps_0 \left( p_0, p_1 \right)$,
$\cloggedbumps_0 \left( p_0, p_1 \right)$, 
$\cornerbumps_0 \left( p_0, p_1 \right)$,
 and
$\doublebumps_0 \left( p_0, p_1 \right)$. 
In the next subsection, we will give formal definitions for
$\clearbumps$,
$\cloggedbumps$,
$ \cornerbumps$,
$ \doublebumps$,
and will show that any minimal bump for $\left( p_0, p_1 \right)$ must indeed fall into one of these families, as soon as $ \left( p_0, p_1 \right) \in \candidatempty$.
First, however, we state
the following result, formalising the intuition that, if $\iso p$ 
``first goes up and then goes down'', then $\bumps^b \left( p \right) \neq \ze$; 
recall that $\proj_j$ is the projector extracting the $j$-th component of a tuple, 
while $\enum \left( m \right)$ is the $m$-th letter of $\letters^{\left( +  \right) }$.
\begin{Prop}
\label{RefLmUpDown}
Be given $p \in \letters_N^*$, $m \in \left[ 1, \ldots, N \right]$, and $i$, $j$, $k$ such that $j \in \left] i, k \right[ \subseteq \left[ 0, \left| p \right| \right]$.
\begin{align*}
\text{If } 
\left( \proj_m \circ \iso p  \right) \left( i \right) < \left( \proj_m \circ \iso p \right) \left( j \right) 
\text{ and }
\left( \proj_m \circ \iso p \right) \left( j \right) > \left( \proj_m \circ \iso p  \right) \left( k \right)
\text{, then } 
\left[ i, k-1 \right] \cap
\bumps^{\enum \left( m \right)} \left( p \right) \neq \ze;
\\
\text{if } 
\left( \proj_m \circ \iso p  \right) \left( i \right) > \left( \proj_m \circ \iso p \right) \left( j \right) 
\text{ and }
\left( \proj_m \circ \iso p \right) \left( j \right) < \left( \proj_m \circ \iso p  \right) \left( k \right)
\text{, then } 
\left[ i, k-1 \right] \cap
\bumps^{\conj{\enum \left( m \right)}} \left( p \right) \neq \ze.
\end{align*}
\end{Prop}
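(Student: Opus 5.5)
The plan is to locate, inside the window $\left[ i, k-1 \right]$, a pair of indices $s < t$ with $p \left( s \right) = x := \enum \left( m \right)$, $p \left( t \right) = \conj x$ and no letter parallel to $x$ strictly between them, and then to argue that $\left[ s, t \right]$ is an $x$-bump. I read the conclusion $\left[ i, k-1 \right] \cap \bumps^{x} \left( p \right) \neq \ze$ as asserting the existence of a bump $\left[ s, t \right] \in \bumps^x \left( p \right)$ with $\left[ s, t \right] \subseteq \left[ i, k-1 \right]$. The second implication follows from the first by the inversion symmetry $x \leftrightarrow \conj x$, which reverses the sign of $\proj_m$, so I only treat the first.

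The first step is to find the pair. Since $\proj_m \circ \iso p$ changes by exactly $\pm 1$ on a letter in $\left\{ x, \conj x \right\}$ and stays constant otherwise, the inequality $\left( \proj_m \circ \iso p \right) \left( i \right) < \left( \proj_m \circ \iso p \right) \left( j \right)$ yields an index in $\left[ i, j-1 \right]$ carrying $x$. Likewise, $\left( \proj_m \circ \iso p \right) \left( j \right) > \left( \proj_m \circ \iso p \right) \left( k \right)$ yields an index in $\left[ j, k-1 \right]$ carrying $\conj x$. Now list, in increasing order, all indices of $\left[ i, k-1 \right]$ carrying a letter parallel to $x$. In this list some occurrence of $x$ precedes some occurrence of $\conj x$, so somewhere there are two consecutive entries $s < t$ with $p \left( s \right) = x$ and $p \left( t \right) = \conj x$. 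By construction $p \left( r \right) \in x^{\ortho}$ for every $r \in \left] s, t \right[$.

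The second step is to show that $\factor p s t$ is a detour and then a bump. The factor has $m$-th displacement $0$ but contains two letters in direction $x$, so $\norm{\factor p s t} \le \left| \factor p s t \right| - 2$; hence $\left[ s, t \right]$ is a detour. For $\left[ s, t \right]$ to be a bump I need every proper factor to be short. A proper factor containing at most one of $s$ and $t$ has at most one letter parallel to $x$, so it is short precisely when its orthogonal part is short. The whole problem therefore reduces to showing that $w := \factor p {s+1}{t-1}$ is short. Given this, any proper factor $u$ of $\factor p s t$ satisfies $\norm u = \left| u \right|$: the $x$-component contributes exactly the number of parallel letters, since there is at most one, and the orthogonal part is a factor of the short word $w$.

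This reduction is the main obstacle, and as written it is where the statement is at risk. The inner word $w$ consists only of orthogonal letters, but nothing in the hypotheses forces it to be short. In dimension $2$, with $x = b$, the word $b a \conj a \conj b$ goes up and then down, yet its only bump is $\left[ 1, 2 \right] \in \bumps^{a}$ and $\bumps^b$ is empty. My first attempt will be to choose the consecutive pair $\left( s, t \right)$ with $w$ short whenever such a pair exists, for example by minimising $t - s$ among all admissible pairs. If every admissible pair has a non-short inner part, the best the argument gives is a bump of orthogonal direction inside $\left[ s, t \right]$, obtained because every detour contains a bump. I would then check whether the paper's later applications only invoke the proposition on walks in which that orthogonal bump is excluded, for instance on short factors or in the minimal configurations of $\candidatempty$. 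If they do, I would add the hypothesis that the relevant factor has no orthogonal bumps, which makes $w$ short and the argument above complete.
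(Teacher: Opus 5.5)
Your counterexample is correct, and it settles the matter: for $p = b a \conj{a} \conj{b}$, $m = 2$ and $(i,j,k) = (0,1,4)$ the hypotheses hold. Yet the only bump of $p$ is $[1,2] \in \bumps^{a}(p)$, so $\bumps^{b}(p) = \ze$, and the proposition is false as stated. The paper's proof reads ``$\factor{p}{i}{k-1}$ is not short, so it must have a detour; then use the definition of bump''. It skips exactly the step you isolated: a detour contains some bump, but nothing forces that bump to have direction $\enum(m)$. So you are not missing an idea. The gap is in the statement and in the paper's one-line proof. Your reduction is the right proof once the missing hypothesis is added: take consecutive letters parallel to $x := \enum(m)$ inside $[i,k-1]$ with $p(s) = x$ and $p(t) = \conj{x}$, then show that $w := \factor{p}{s+1}{t-1}$ is short. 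Your reading of the conclusion as ``some $x$-bump lies inside $[i,k-1]$'' is the sensible one; the later uses only need $\bumps^{x}(p) \neq \ze$.

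Drop the ``minimise $t-s$'' step. It cannot rescue the general statement, since your counterexample has only one admissible pair, and under the right hypothesis every admissible pair works.

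The hypothesis that matches all uses in the paper is $N = 2$ with $p$ simple. Lemmas~\ref{RefLmClearbumps2}, \ref{RefLmCornerBumps}, \ref{RefLmCloggedBumps} and Proposition~\ref{RefLmNonCriticalBumps} apply the result only to $p_0$ or $p_1$ with $(p_0,p_1) \in \climple$. These are factors of a simple word in $\letters_2^*$, hence simple. Under this hypothesis, $w$ is a word over $\{y, \conj{y}\}$ for the unique axis $y \ortho x$. If $w$ contained both letters, it would contain an adjacent $y \conj{y}$ or $\conj{y} y$. That factor is a loop of $p$, and it is proper because index $s$ lies outside it. So $w \in y^* \cup \conj{y}^*$ is short, and your argument goes through verbatim.

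Simplicity alone does not suffice for $N \ge 3$. The word $b a c \conj{a} \conj{b}$ is simple and rises then falls in the $b$-coordinate, but its only bump is $[1,3] \in \bumps^{a}(p)$. For general $N$, your alternative hypothesis also works: assume $\factor{p}{i}{k-1}$ has no bumps with direction in $x^{\ortho}$. Any bump of $w$ would be such a bump, so $w$ is short.
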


\begin{proof}
Note that the conditions impose that $\factor {p} {i} {k-1}$ is not short, so that it must have a detour; then use the definition of bump. 
\end{proof}

\subsection{Classification of minimal bumps}

The goal of this subsection is to show that the classification of minimal bumps depicted in Figure~\ref{RefFigBumps}~II, IV, V, and VI is exhaustive, in the sense that any minimal bump of a pair in $ \candidatempty$ falls into one of those categories (Lemma~\ref{RefLmBumpClassification4}).
We start from a definition capturing the configuration depicted in Figure~\ref{RefFigBumps}~III.

\begin{Def}
Given $ \left[ i_0, j_0 \right], \left[ i_1, j_1 \right] \in \bumps^{\alpha} \left( p \right)$, we say that $\left[ i_1, j_1 \right]$ \emph{nests $ \left[ i_0, j_0 \right]$ in $p$} iff
$ \iso p \left( \left[ i_0 + 1, j_0 \right] \right) + 
\left\{  \mu \left( \alpha  \right) \right\}
\subseteq \iso p \left(  \left[ i_1+2, j_1 - 1 \right]
 \right)$.
Often, it will be safe to just say that $\left[ i_1, j_1  \right]$ nests $ \left[ i_0, j_0 \right]$, without explicitly specifying $p$.
\end{Def}

Now we introduce definitions for the families of bumps 
$ \clearbumps_m$, $\cloggedbumps_m$, $\cornerbumps_m$ and $\doublebumps_m$ 
(with $m \in \left\{ 0, 1 \right\}$) already informally described at the end of last subsection.
We start with helper definitions.

\begin{Def}
$
\cornerbumps^{\alpha} \left( p \right) :=
\\
\left\{ \left[ i, j \right] \in \bumps^{\alpha} \left( p \right).
\left( i = 0 \wedge p \left( \left| p \right| - 1 \right) = \conj{p \left( 1 \right)} \right)
\vee
\left( j = \left| p \right| - 1 \wedge p\left( 0 \right) = \conj{p \left( \left| p \right| - 2 \right)} \right)
\right\}.
$
\\
$
\cloggedbumps^{\alpha} \left( p \right) := \{ \left[ i, j \right] \in \bumps^{\alpha} \left( p \right). 
\exists m, n \in \N, \beta \ortho \alpha, p_1.
\\
\left( 
p = \beta^n \conj \alpha p_1 \alpha \beta^m \wedge m+n > 0 \wedge
\left( 
\iso p \left[ 0, n \right] \cup \iso p \left[ \left| p \right| - m, \left| p \right| \right]  \right) 
+ \left\{ \iso \alpha \right\} 
\subseteq 
\iso p \left[ i+2, j-1  \right]
\right)
\}$
\end{Def}

The helper definitions above allow us to introduce 
$ \clearbumps_m$, $\cloggedbumps_m$, $\cornerbumps_m$ and $\doublebumps_m$ 
($m \in \left\{  0, 1\right\}$) as follows.
\begin{Def}
\label{RefDefCornerBumps2}
$
\clearbumps_0^{\alpha} \left( p_0, p_1 \right) :=
\left\{ \left[ i, j \right] \in \bumps^{\alpha} \left( p_0, p_1 \right). \ 
\left( p_0, p_1 \right) - \left\{ i, j \right\} \in \climple \wedge
\left( i=0 \vee j=\left| p_0 p_1 \right| - 1 \right)
\right\}
$
\\
$
\clearbumps_1^{\alpha} \left( p_0, p_1 \right) :=
\left\{ \left[ i, j \right] \in \bumps^{\alpha} \left( p_0, p_1 \right). \ 
\left( p_0, p_1 \right) - \left\{ i, j \right\} \in \climple \wedge
\left( i=\left| p_0 \right| \vee j=\left| p_0 \right| - 1 \right)
\right\}
$
\\
$ 
\cornerbumps_0^{\alpha} \left( p_0, p_1 \right) :=
\cornerbumps^{\alpha} \left( p_0 p_1 \right) \cap \bumps^{\alpha} \left( p_0, p_1 \right).
$
\\
$ 
\cornerbumps_1^{\alpha} \left( p_0 , p_1 \right) :=
\left\{ \left[ i,j \right] \in \bumps^{\alpha} \left( p_0, p_1 \right).
\left\{ \left[ i-1, i+1 \right], \left[ j-1, j+1 \right] \right\} \cap 
\left( \bumps^{\alpha^\ortho} \left( p_0 p_1 \right) \sdiff \bumps^{\alpha^\ortho} \left( p_0, p_1 \right) \right) \neq \ze{}
\right\}
$
\\
$
\cloggedbumps_0^{\alpha} \left( p_0, p_1 \right):=
\cloggedbumps^{\alpha} \left( p_0 p_1 \right) \cap \bumps^{\alpha} \left( p_0, p_1 \right)
$
\\
$
\cloggedbumps_1^{\alpha} \left( p_0, p_1 \right) := 
\left\{ \left[ i, j \right] \in \bumps^{\alpha} \left( p_0, p_1 \right).
\text{ there is } \left[ i_0, j_0 \right] \in \bumps^{\alpha} \left( p_0 p_1 \right) \sdiff \bumps^{\alpha} \left( p_0, p_1 \right) \text{ nested by } \left[ i, j \right]
\right\}
$
\\
$
\doublebumps_0^{\alpha} \left( p_0, p_1 \right) := 
\{ \left[ i, j \right] \in \bumps^{\alpha} \left( p_0, p_1 \right).
\\
\left( i = 1 \wedge \left[ 0, 2 \right] \in \bumps_0^{\alpha^\ortho} \left( p_0, p_1 \right) \right) \vee \left( j = \left| p_0 p_1 \right| - 2 \wedge \left[ \left| p_0 p_1 \right| -3, 
\left| p_0 p_1 \right| - 1
\right] \in \bumps_1^{\alpha^\ortho} \left( p_0, p_1 \right)\right)
\}
$
\\
$
\doublebumps_1^{\alpha} \left( p_0, p_1 \right) :=
\{ \left[ i, j \right] \in \bumps^{\alpha} \left( p_0, p_1 \right).
\\
\left( i = \left| p_0 \right| + 1 \wedge \left[ \left| p_0 \right|, \left| p_0 \right| + 2 \right] \in \bumps_1^{\alpha^\ortho} \left( p_0, p_1 \right) \right)
\vee 
\left( j = \left| p_0 \right| - 2 \wedge \left[ \left| p_0 \right| -3, 
\left| p_0 \right| - 1
\right] \in \bumps_0^{\alpha^\ortho} \left( p_0, p_1 \right)\right)
\}
$
\end{Def}

\begin{Not}
We will write $\bumps^X \left( p \right)$ for $\bigcup_{x \in X} \bumps^x \left( p \right)$, 
$\bumps \left( p \right)$ for $\bumps^{\letters} \left( p \right)$, 
and similarly for $\bumps^X \left( p_0, p_1 \right)$, $\clearbumps^X_0 \left( p_0, p_1 \right)$,
$ \bumps \left( p_0, p_1 \right)$, $ \clearbumps_0 \left( p_0, p_1 \right)$
and so on.
Also, we will write $ \clearbumps^X \left( p_0, p_1 \right)$ for
$\clearbumps^X_0 \left( p_0, p_1 \right) \cup \clearbumps^X_1 \left( p_0, p_1 \right)$, and so on.
Finally, we will often use the following more suggestive notations:
$\bumps^{\Right}$ for $\bumps^a$, $\bumps^{\Left}$ for $\bumps^{\conj a}$,
$\bumps^{\Up}$ for $\bumps^b$ and $\bumps^{\Down}$ for $\bumps^{\conj b}$.
Similarly for $\clearbumps$ and the other notions introduced in Definition~\ref{RefDefCornerBumps2}.
When it is clear from the context, the argument can be dropped, writing, e.g., $\bumps^{\alpha}$ in lieu of $\bumps^{\alpha} \left( p_0, p_1 \right)$. The next result is by symmetry.
\end{Not}

\begin{Prop}
\label{RefLmSymmetry}
{\small{}
For $ \left( p_0, p_1 \right) \in \climple$, 
$
\cloggedbumps_0^{\alpha} \left( p_0, p_1 \right) = \ze 
\to {}
\cloggedbumps_1^{\alpha} \left( p_1, p_0 \right) = \ze
$
and
$
\cornerbumps_0^{\alpha} \left( p_0, p_1 \right) = \ze 
\to {}
\cornerbumps_1^{\alpha} \left( p_1, p_0 \right) = \ze
$.
}
\end{Prop}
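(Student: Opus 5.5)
We prove both implications in contrapositive form. Given a witness bump in $\cloggedbumps_1^{\alpha}\left( p_1, p_0 \right)$ (respectively $\cornerbumps_1^{\alpha}\left( p_1, p_0 \right)$), we exhibit a bump in $\cloggedbumps_0^{\alpha}\left( p_0, p_1 \right)$ (respectively $\cornerbumps_0^{\alpha}\left( p_0, p_1 \right)$). The basic tool is the cyclic rotation $p_1 p_0 \mapsto p_0 p_1$. Since $p_0 p_1$ is closed, $\iso{p_1 p_0}\left( k \right) = \iso{p_0 p_1}\left( k + \left| p_0 \right| \bmod \left| p_0 p_1 \right| \right) - \mu\left( p_0 \right)$. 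Up to one global translation, the two walks therefore visit the same points in cyclically shifted order. Any bump $\left[ i, j \right] \in \bumps^{\alpha}\left( p_1, p_0 \right)$ lies entirely inside $p_1$ or entirely inside $p_0$. Hence, after shifting its indices by $\pm\left| p_0 \right|$ (resp.\ $\left| p_1 \right|$), it becomes an element $\left[ i', j' \right] \in \bumps^{\alpha}\left( p_0, p_1 \right)$. Indeed, bumpness only depends on the factor itself, and the point sets $\iso{\cdot}\left[ i+2, j-1 \right]$ are preserved up to the same translation.

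For $\cloggedbumps$, let $\left[ i, j \right]$ nest $\left[ i_0, j_0 \right] \in \bumps^{\alpha}\left( p_1 p_0 \right) \sdiff \bumps^{\alpha}\left( p_1, p_0 \right)$. Then $\left[ i_0, j_0 \right]$ straddles the junction of $p_1$ and $p_0$. By minimality of a bump in $\letters_2$, its factor is $\alpha \beta^k \conj\alpha$ with $\beta \ortho \alpha$: its interior contains no $\alpha$-letters and must be short, hence a power of one orthogonal letter. Splitting at the junction, $p_1$ ends with $\alpha \beta^m$ and $p_0$ starts with $\beta^n \conj\alpha$, with $m+n=k$. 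Therefore $p_0 p_1 = \beta^n \conj\alpha\, u\, \alpha \beta^m$, which is exactly the shape required in the definition of $\cloggedbumps^{\alpha}\left( p_0 p_1 \right)$. The points $\iso{p_1p_0}\left[ i_0+1, j_0 \right]$ are precisely the translates of $\iso{p_0 p_1}\left[ 0, n \right] \cup \iso{p_0 p_1}\left[ \left| p_0 p_1 \right| - m, \left| p_0 p_1 \right| \right]$. So the nesting inclusion transfers verbatim to $\left[ i', j' \right]$ once both sides are shifted by the same vector. The remaining requirement is $m+n>0$. If $k=0$, then $p_0 p_1 = \conj\alpha\, u\, \alpha$. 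When $u \neq \ze$, $u$ is a closed proper nonempty factor, i.e.\ a loop, contradicting $\left( p_0, p_1 \right) \in \climple$. When $u = \ze$, both words have length $1$ and carry no bumps at all. Thus $\left[ i', j' \right] \in \cloggedbumps_0^{\alpha}\left( p_0, p_1 \right)$.

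For $\cornerbumps$, suppose $\left[ i-1, i+1 \right]$ is a straddling $\alpha^{\ortho}$-bump of $p_1 p_0$ (the case $\left[ j-1, j+1 \right]$ is the mirror image under reversal). It reads $\beta \alpha \conj\beta$ with $\left( p_1 p_0 \right)\left( i \right) = \alpha$. The junction is either between $i-1$ and $i$, or between $i$ and $i+1$. The second option would force $i = \left| p_1 \right| - 1$, so the bump $\left[ i, j \right] \subseteq p_1$ would have length $1$, which is impossible. In the first option, $\left[ i, j \right]$ lies in $p_0$ and starts at its first letter. Rotating, it becomes a bump of $p_0 p_1$ with $i' = 0$, $\left( p_0 p_1 \right)\left( 1 \right) = \conj\beta$, and last letter $\beta$. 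This is the first clause in the definition of $\cornerbumps^{\alpha}\left( p_0 p_1 \right)$, so $\left[ i', j' \right] \in \cornerbumps_0^{\alpha}\left( p_0, p_1 \right)$.

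I expect the main care to be needed in the $\cloggedbumps$ case. There one must check that a straddling bump in two dimensions really has the form $\alpha\beta^k\conj\alpha$, and that the nesting inclusion is translation-invariant under the rotation. Both facts reduce to the definitions together with the closure of $p_0 p_1$.
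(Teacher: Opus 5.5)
Your proof is correct and takes the paper's approach: the paper only says the result holds ``by symmetry'', meaning the swap $\left( p_0, p_1 \right) \mapsto \left( p_1, p_0 \right)$, which for a closed walk turns the junction point $\mu \left( p_0 \right)$ into the origin. Your cyclic-rotation argument verifies exactly this symmetry explicitly, and it correctly handles the two delicate points: the degenerate case $m+n=0$, excluded by simplicity, and the impossible junction position in the $\cornerbumps$ case.
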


\begin{Prop}
\label{RefLmCloggedShortcut}
Let $ \left( p_0, p_1 \right) \in \climple{}$, set $p := p_0 p_1$ and assume 
$ \cloggedbumps_0^{\alpha} \left( p_0 ,p_1 \right) \cap \bumps_k^{\alpha} \left( p_0 , p_1 \right) \neq \ze{}$ for some $\alpha \in \Sigma_2$, $k \in \left\{ 0, 1 \right\}$.
Then $ l := \iso p ^{-1} \left( \iso \alpha \right) $ 
is uniquely defined and belongs to 
$ k \left| p_0 \right| + \left[ 3 - k, \left| p_k \right| - 2 - k \right]$.
Moreover, if we define
\begin{align*}
\left( i_0, j_0, q_0 \right) := 
\left\{  
\begin{aligned}
&
\left( 0, l - 1, \alpha \right) 
& \text{ if } k=0
\\
&
\left( l - \left| p_0 \right|, \left| p_1 \right| - 1, \conj \alpha \right)
& \text{ if } k=1,
\end{aligned}
\right.
\end{align*}
then $ \left( i_0, j_0, q_0 \right)$ is a $1$-shortcut of $p_k$.
\end{Prop}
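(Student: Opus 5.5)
The plan is to read everything off the membership $\left[ i, j \right] \in \cloggedbumps_0^{\alpha} \left( p_0, p_1 \right) \cap \bumps_k^{\alpha} \left( p_0, p_1 \right)$ and to use that $p$ is closed and simple, since $\left( p_0, p_1 \right) \in \climple$.

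\emph{Step 1: the decomposition of $p$.} By the definition of $\cloggedbumps^{\alpha} \left( p \right)$ we can write $p = \beta^n \conj \alpha p' \alpha \beta^m$ with $\beta \ortho \alpha$ and $m + n > 0$. The inclusion in that definition says that the points $\iso p \left( \left[ 0, n \right] \right) + \left\{ \mu \left( \alpha \right) \right\}$ and $\iso p \left( \left[ \left| p \right| - m, \left| p \right| \right] \right) + \left\{ \mu \left( \alpha \right) \right\}$ lie in $\iso p \left( \left[ i+2, j-1 \right] \right)$. Since $p$ is closed, $\iso p \left( 0 \right) = \iso p \left( \left| p \right| \right) = 0$. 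Hence, whether $n > 0$ or $m > 0$, the point $\mu \left( \alpha \right)$ is reached by $\iso p$ at some index in $\left[ i+2, j-1 \right]$. This gives existence of $l$.

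\emph{Step 2: uniqueness of $l$.} Simplicity of the closed string $p$ means that $\iso p$ is injective on $\left[ 0, \left| p \right| - 1 \right]$: two equal values there would give a loop. The only value repeated by $\iso p$ is the origin, taken at $0$ and at $\left| p \right|$, and $\mu \left( \alpha \right) \neq 0$. So the preimage of $\iso \alpha$ is a single index $l$, and $l \in \left[ i+2, j-1 \right]$.

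\emph{Step 3: locating $l$.} Since $\left[ i, j \right]$ is an $\alpha$-bump, $p \left( i \right) = \alpha$ and $p \left( j \right) = \conj \alpha$. The letters $p \left( 0 \right), \dots, p \left( n \right)$ are $\beta, \dots, \beta, \conj \alpha$, none of which equals $\alpha$, so $i \ge n + 1 \ge 1$. Symmetrically, the letters $p \left( \left| p \right| - m - 1 \right), \dots, p \left( \left| p \right| - 1 \right)$ are $\alpha, \beta, \dots, \beta$, none of which equals $\conj \alpha$, so $j \le \left| p \right| - m - 2 \le \left| p \right| - 2$.
\begin{itemize}
\item If $k = 0$, the bump lies in $p_0$, so $j \le \left| p_0 \right| - 1$. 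This gives $l \in \left[ 3, \left| p_0 \right| - 2 \right]$.
\item If $k = 1$, we have $i \ge \left| p_0 \right|$, so $l \ge \left| p_0 \right| + 2$, and $l \le j - 1 \le \left| p \right| - 3 = \left| p_0 \right| + \left| p_1 \right| - 3$.
\end{itemize}
Both cases together are exactly $l \in k \left| p_0 \right| + \left[ 3 - k, \left| p_k \right| - 2 - k \right]$.

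\emph{Step 4: the shortcut.}
\begin{itemize}
\item For $k = 0$, the factor $\factor{p_0}{0}{l-1}$ has displacement $\iso p \left( l \right) = \mu \left( \alpha \right)$ and length $l \ge 3 > 1 = \norm{\alpha}$. So $\left[ 0, l-1 \right]$ is a detour of $p_0$, and $\left( 0, l-1, \alpha \right)$ is a $1$-shortcut.
\item For $k = 1$, the factor of $p_1$ from index $l - \left| p_0 \right|$ to $\left| p_1 \right| - 1$ has displacement $\mu \left( p \right) - \iso p \left( l \right) = - \mu \left( \alpha \right) = \mu \left( \conj \alpha \right)$. Its length is $\left| p \right| - l \ge 3$, so $\left( l - \left| p_0 \right|, \left| p_1 \right| - 1, \conj \alpha \right)$ is a $1$-shortcut of $p_1$.
\end{itemize}

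The only delicate point is Step 3. The bounds $i \ge n+1$ and $j \le \left| p \right| - m - 2$ must come from the letters of the prefix $\beta^n \conj \alpha$ and the suffix $\alpha \beta^m$ not being equal to $\alpha$ and $\conj \alpha$ respectively. It is these bounds, rather than the trivial ones, that yield the extra unit in $3 - k$ and $\left| p_k \right| - 2 - k$.
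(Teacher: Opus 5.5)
Your proof is correct and follows the paper's argument essentially step for step. You obtain $\mu(\alpha)\in\iso p\left[i+2,j-1\right]$ from the inclusion in the definition of $\cloggedbumps^{\alpha}$ at $\iso p(0)=0$, then uniqueness and $l\in\left[i+2,j-1\right]$ by simplicity, then $i\ge 1$ and $j\le\left|p\right|-2$, and finally factors of length at least $3$ equivalent to $\alpha$ or $\conj\alpha$. The one difference is that you justify $i>0$ and $j<\left|p\right|-1$ explicitly from the prefix $\beta^n\conj\alpha$ and suffix $\alpha\beta^m$, whereas the paper just attributes them to the definition of $\cloggedbumps_0$; your case split on $n>0$ or $m>0$ in Step 1 is unnecessary, since $0\in\left[0,n\right]$ always.
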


\begin{proof}
Consider $ \left[ i, j \right] \in \cloggedbumps_0^{\alpha} \cap \bumps_k^{\alpha}$.
By definition of $\cloggedbumps_0^{\alpha}$, $\iso \alpha \in \iso p \left[ i + 2, j - 1 \right]$.
By simplicity of $p$, this implies that $l$ is uniquely defined and lies in 
$ \left[ i+2, j-1 \right]$.
Moreover, the definition of $\cloggedbumps_0$ also implies $i > 0$ and $j < \left| p \right| - 1$, so that
\begin{align}
\label{RefFormula13}
4 \leq i+3 \leq j 
&&
l \in \left[ 3, \left| p \right| -3 \right].
\end{align}

When $k=0$, $\iso p_0 \left( l \right) = \iso \alpha$ and $\left| p_0 \right| \geq j$, so that, using~\eqref{RefFormula13},  
$ \factor {p_0}{0}{l-1} \sim \alpha$ 
and 
$ \left| \factor{p_0}{0}{l-1} \right| \geq 
 \left| \factor{p_0}{0}{2} \right| = 3
> \left| \alpha \right|$.
When $k=1$, $i \geq \left| p_0 \right|$, $\iso p \left( l \right) = \iso p_0 + \iso p_1 \left( l - \left| p_0 \right| \right) = \iso \alpha$ and $\iso p_0 + \iso p_1 \left( \left| p_1 \right| \right) = \ve 0$, so that
$ \factor{p_1}{l - \left| p_0 \right|}{\left| p_1 \right| - 1} \sim \conj \alpha$.
Furthermore,  using~\eqref{RefFormula13} yields $l \leq \left| p \right| - 3 = \left| p_0 \right| + \left| p_1 \right| - 3$, so that
$ 0 \leq l - \left| p_0 \right| \leq \left| p_1  \right| - 3$ and 
$\left| \factor{p_1}{l - \left| p_0 \right|}{\left| p_1 \right| - 1} \right| 
\geq \left| \factor{p_1}{\left| p_1 \right| - 3}{\left| p_1 \right| - 1} \right| 
= 3 > \left| \conj \alpha \right|$.
\end{proof}

\begin{Prop}
\label{RefLmNest2}
Let $p \in \Sigma_2^*$ be simple, and $ \left[ i_1, j_1 \right], \left[ i_0, j_0 \right] \in \bumps^{\alpha} \left( p \right)$. 
Suppose that $\iso p \left[ i_0+1, j_0 \right] + \alpha \cap \iso p \left[ i_1 + 2, j_1 - 1 \right] \neq \ze$.
Then $ \left[ i_1, j_1 \right]$ nests $ \left[ i_0, j_0 \right]$.
\end{Prop}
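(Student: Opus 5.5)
The plan is to exploit the very rigid shape of bumps in two dimensions, and then use simplicity of $p$ to forbid the only way containment could fail. By the symmetries discussed at the beginning of Section~\ref{RefSectProofSkeleton}, I may assume $\alpha = b$.

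First I will record the shape of a $b$-bump. If $\left[ i, j \right] \in \bumps^b \left( p \right)$, then $p\left( i \right) = b$ and $p\left( j \right) = \conj b$. Every interior letter lies in $\left\{ a, \conj a \right\}$, and since every proper factor is short, the interior letters are all equal. So $\factor p i j = b\, a^m \conj b$ or $b\, \conj a^m \conj b$ for some $m \geq 1$. Write $\iso p \left( i \right) = \left( x, y \right)$. Then the points $\iso p \left[ i+1, j \right]$ form the horizontal integer segment at ordinate $y+1$ with endpoints $\left( x, y+1 \right)$ and $\left( x \pm m, y+1 \right)$. The base points $\iso p \left( i \right) = \left( x, y \right)$ and $\iso p \left( j+1 \right) = \left( x \pm m, y \right)$ lie directly below these endpoints. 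Removing the two endpoints $\iso p \left( i+1 \right)$, $\iso p \left( j \right)$ leaves $\iso p \left[ i+2, j-1 \right]$, the open segment strictly between them.

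Apply this to both bumps. Let $S_0 := \iso p \left[ i_0+1, j_0 \right] + \mu \left( b \right)$, a segment at ordinate $y_0 + 2$ with abscissae filling an interval $I_0$. Let $S_1 := \iso p \left[ i_1+2, j_1-1 \right]$, at ordinate $y_1+1$ with abscissae filling the open interval $J_1$ strictly between $x_1$ and $x_1 \pm m_1$. The hypothesis $S_0 \cap S_1 \neq \ze$ forces $y_0 + 1 = y_1$ and $I_0 \cap J_1 \neq \ze$.

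Suppose, for contradiction, that $S_0 \not\subseteq S_1$, i.e. $I_0 \not\subseteq J_1$. Since $I_0$ and $J_1$ are integer intervals that meet, $I_0$ must contain one of the two endpoints $x_1$ or $x_1 \pm m_1$. Hence the unshifted top segment $\iso p \left[ i_0+1, j_0 \right]$, which lies at ordinate $y_0 + 1 = y_1$, contains one of the base points $\iso p \left( i_1 \right)$, $\iso p \left( j_1 + 1 \right)$ of the second bump. So there is $k \in \left[ i_0+1, j_0 \right]$ and $k' \in \left\{ i_1, j_1+1 \right\}$ with $\iso p \left( k \right) = \iso p \left( k' \right)$. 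Simplicity of $p$ forbids this whenever $k \neq k'$, since the factor of $p$ between the two indices would be a loop; I will check that this factor is indeed proper.

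The remaining step, and the only real obstacle, is to rule out $k = k'$ by letter bookkeeping. If $k = i_1$, then $p \left( i_1 \right) = b$. But the letters of $p$ at indices in $\left[ i_0+1, j_0 \right]$ are $a^{\pm 1}$ or $\conj b$, so this is impossible. If $k = j_1 + 1 \leq j_0$, then $p \left( j_1 \right) = \conj b$ with $j_1 \in \left[ i_0, j_0 - 1 \right]$. The only index in that range carrying a $b$-type letter is $i_0$, where $p \left( i_0 \right) = b \neq \conj b$, a contradiction. Care is needed with the endpoint conventions: the indices of $\iso p$ run over $\left[ 0, \left| p \right| \right]$ while those of $p$ run over $\left[ 0, \left| p \right| - 1 \right]$. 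I will also check that the coincidence of $\iso p$ values at two distinct indices really produces a proper nonempty closed factor, using that $k$ and $k'$ are interior enough. This gives $S_0 \subseteq S_1$, i.e. $\left[ i_1, j_1 \right]$ nests $\left[ i_0, j_0 \right]$.
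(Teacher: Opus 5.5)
Your proposal is correct and follows essentially the same route as the paper. Both normalise to $\alpha = b$, observe that the two segments lie on a common horizontal line, and argue that if the abscissa interval of the shifted top of $\left[ i_0, j_0 \right]$ were not contained in the open interior of $\left[ i_1, j_1 \right]$, the top of $\left[ i_0, j_0 \right]$ would pass through a base point $\iso p \left( i_1 \right)$ or $\iso p \left( j_1 + 1 \right)$, contradicting simplicity. Two of your checks go beyond the paper's rather terse treatment of this step: the letter bookkeeping that excludes the case $k = k'$, and the observation that $k \in \left[ 1, \left| p \right| - 1 \right]$, which makes the closed factor between $k$ and $k'$ proper.
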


\begin{proof}
It must be $\left| p \right| > 2$, $i_0 + 1 < j_0$, $i_1+2 < j_1 $. 
Consider the functions 
$ f_0 : \left[ i_0 + 1, j_0 \right] \ni n \mapsto \iso p \left( n \right)$ and 
$ f_1 : \left[ i_1 + 2, j_1 - 1 \right] \ni n \mapsto \iso p \left( n \right) - \iso \alpha$.
Without loss of generality, we set $\alpha := b$, so that 
$ \ran \left( \pi_2 \circ f_0 \right) = \ran \left( \pi_2 \circ f_1 \right) = \left\{ y \right\}$ for some $y \in \Z$.
$\ran \left( \pi_1 \circ f_0 \right)$ and
$\ran \left( \pi_1 \circ f_1 \right)$ being integer intervals, we can set
$ \left[ k_0, l_0 \right] := \ran \left( \pi_1 \circ f_0 \right)$ 
and
$ \left[ k_1, l_1 \right] := \ran \left( \pi_1 \circ f_1 \right)$.
If it were $k_0 < k_1$, then it would be $l_0 \geq l_1$ due to the hypothesis.
Hence, $k_1 - 1$ would be in $ \left[ k_0, l_0 \right]$, so that $ \left( k_1 - 1, y \right) = 
\iso p \left( j_1 \right) \in \ran f_0 = \iso p \left[ i_0 + 1, j_0 \right]$.
This would imply, by simplicity, that $j_0 = \left| p \right|$ and $j_1 = 0$, which is impossible.
Analogously, one shows $l_0 \leq l_1$.
Ultimately, $\ran f_0 \subseteq \ran f_1$, which is the thesis.
\end{proof}

\begin{Prop}
\label{RefLmBumpClassification1}
Let $p \in \climple{}$, and 
$ \left[ i_1, j_1 \right] \in \bumps^{\alpha} \left( p \right) \sdiff \cornerbumps^{\alpha} \left( p \right)$.
Assume that
\begin{enumerate}
\item
\label{RefAssNoOrthBumps}
$ \left\{ \left[ i_1 - 1, i_1+1 \right], \left[ j_1 - 1, j_1 + 1 \right] \right\} \cap \bumps^{\alpha^{\ortho}} \left( p \right) = \ze{}$
and that
\item
\label{RefAssNotSimple}
$p - \left\{ i_1, j_1 \right\}$ is not simple.
\end{enumerate}
Then either 
$ \left[ i_1, j_1 \right] \in \cloggedbumps^{\alpha} \left( p \right)$ 
or 
$ \left[ i_1, j_1 \right]$ nests some 
$\left[ i_0, j_0 \right] \in \bumps^{\alpha} \left( p \right)$. 
\end{Prop}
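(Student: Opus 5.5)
Throughout, I would take $\alpha = b$ by the invariances of Section~\ref{RefSectProofSkeleton}, and write the bump as $\factor p {i_1}{j_1} = b\, w\, \conj b$. Since every proper factor of a bump is short and contains no $b$ or $\conj b$, $w$ is a power of a single letter. Again by symmetry I take $w = a^k$. Here $k \ge 2$, because otherwise cancelling the bump introduces no new point and $p - \left\{ i_1, j_1 \right\}$ would stay simple. Set $(x,y) := \iso p (i_1)$. Then the bump visits $(x,y)$, followed by $(x,y+1), \ldots, (x+k,y+1)$, and returns to $(x+k,y)$. Cancelling the bump replaces the points $\iso p \left[ i_1+1, j_1 \right]$ by the points $\iso p(n) - \mu(b)$ for $n \in \left[ i_1+2, j_1-1 \right]$, that is, by $(x+1,y), \ldots, (x+k-1,y)$. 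The string $p - \left\{ i_1, j_1 \right\}$ is still closed. So, since $p$ is simple, hypothesis~\ref{RefAssNotSimple} means exactly this: some index $r \notin \left[ i_1, j_1+1 \right]$ (read cyclically, with $0$ and $\left| p \right|$ identified) satisfies $\iso p(r) = (x+t, y)$ for some $t \in \left] 0, k \right[$.

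The next step is to look at the maximal horizontal run of $\iso p$ on row $y$ that contains $r$. I read this run cyclically, which is legitimate because $p$ is closed. First I would show that the run cannot reach the columns $x$ or $x+k$. The points $(x,y) = \iso p(i_1)$ and $(x+k,y) = \iso p(j_1+1)$ are already visited, so by simplicity the run could touch them only as the neighbouring step of $i_1$ or of $j_1$. In that case $\factor p {i_1-1}{i_1+1} = \conj a\, b\, a$, or the analogous pattern holds at $j_1$. Such a factor is a length-$3$ bump in an orthogonal direction, which is excluded by hypothesis~\ref{RefAssNoOrthBumps}. If that step instead wraps around index $0$, the pattern is exactly the one defining $\cornerbumps^{b}(p)$, which is excluded by assumption. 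Second, the run cannot leave row $y$ upward: every point $(x+s, y+1)$ with $0 \le s \le k$ belongs to the bump, and simplicity forbids revisiting it. Third, the run cannot be a single point entered from below and left downward, since that would make $b \conj b$ a loop.

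Hence the run enters row $y$ from below by a $b$ at some column in $\left] x, x+k \right[$, moves horizontally, and leaves by a $\conj b$ at another such column. There are two cases, according to whether the run wraps around the endpoint of $p$. If it does not wrap, the run together with its entering and leaving steps is a $b$-bump $\left[ i_0, j_0 \right]$ of $p$. Its points $\iso p \left[ i_0+1, j_0 \right]$ lie on row $y$ with columns in $\left] x, x+k \right[$. Translating them by $\mu(b)$ places them inside $\iso p \left[ i_1+2, j_1-1 \right]$, so Proposition~\ref{RefLmNest2} (or directly the definition of nesting) gives that $\left[ i_1, j_1 \right]$ nests $\left[ i_0, j_0 \right]$. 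If the run does wrap through the origin, then $p$ begins with a horizontal power $\beta^n$ followed by $\conj b$, and ends with $b$ followed by $\beta^m$, where $\beta \ortho b$ and $m+n>0$. The points $\iso p\left[0,n\right] \cup \iso p \left[ \left| p \right| - m, \left| p \right| \right]$, translated by $\mu(b)$, again fall inside $\iso p\left[ i_1+2, j_1-1 \right]$, so $\left[ i_1, j_1 \right] \in \cloggedbumps^{b}(p)$.

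I expect the main obstacle to be the careful treatment of the endpoints in the second step. One has to see that every way the run can touch $(x,y)$ or $(x+k,y)$, including through the identification of $\iso p(0)$ with $\iso p(\left| p \right|)$, is excluded either by hypothesis~\ref{RefAssNoOrthBumps} or by $\left[ i_1, j_1 \right] \notin \cornerbumps^{b}(p)$. One also has to check the degenerate cases $m=0$ or $n=0$ in the clogged case against the exact index conventions of the definition.
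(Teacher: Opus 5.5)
Your proposal is correct and takes essentially the same route as the paper. Your maximal horizontal runs on row $y$ are the paper's maximal intervals of $M$. Keeping the run away from columns $x$ and $x+k$, using hypothesis~\ref{RefAssNoOrthBumps} and non-cornerness, is the paper's fact~\eqref{RefFormula07}, and the ``enter from below, leave downward'' step is~\eqref{RefFormula08}--\eqref{RefFormula09}. Your wrap/non-wrap dichotomy is the paper's final split between a nested bump and membership in $\cloggedbumps^{\alpha}(p)$.

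One detail needs tidying. In the wrapped case with $m=n=0$, the single-point run does not produce a literal factor $b\conj b$, so your loop argument does not apply as stated. That case is excluded instead because $\iso p \left( 1 \right) = \iso p \left( \left| p \right| - 1 \right)$ with $\left| p \right| > 2$, which contradicts simplicity; this is exactly how the paper rules out $i_0 + j_0 = 0$.
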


\begin{proof}
From hypothesis~\ref{RefAssNotSimple}, 
$ \left| p \right|>2$, $j_1 > i_1 + 2$ and 
$ \factor{p}{i_1}{j_1} = \alpha \beta^{j_1 - i_1 - 1} \conj \alpha $ for some 
$\beta \ortho \alpha$.
Set 
\\
$M:= \iso p ^{-1} \left( \iso p \left[ i_1 + 2, j_1 - 1 \right] - \iso \alpha \right).$
\\
$M$ is not empty because $ p - \left\{ i_1, j_1 \right\}$ is not simple.
Define
$
h : M \ni i \mapsto \bigcup \left\{ J \subseteq M. J \text{ is an interval } \wedge i \in J \right\}.
$
$ \ran h$ is the coarsest partition of $M$ among those consisting of intervals.
As such, it is non empty.
We consider a generic $\left[ i, j \right] \in \ran h$, and prove the following facts:
\begin{align}
\label{RefFormula07}
i \leq j < i_1 - 1 < j_1 \text{ or } 
i_1 < j_1 < j_1 + 2 < i \leq j
\\
\label{RefFormula08}
i - 1 \in \dom {\iso p} \ra{} p \left( i - 1 \right) = \alpha
\\
\label{RefFormula09}
j + 1 \in \dom {\iso p} \ra{} p \left( j \right) = \conj \alpha.
\end{align}
\begin{description}
\item[Proof of \eqref{RefFormula07}: ]
since 
$ \iso p \left[ i_1+2, j_1 - 1 \right] - \iso \alpha \cap 
\iso p \left[ i_1 , j_1 + 1 \right] = \ze,$ 
it must be 
$ \left[ i_1, j_1 + 1 \right] \cap \left[ i, j \right] = \ze$, 
which means that either 
$i \leq j < i_1 < j_1$ or 
$ i_1 < j_1 + 1 < i \leq j$.
This can be strengthened to the wanted inequality: indeed, if it were $j = i_1 - 1$, then 
$\iso p \left( i_1 \right) - \iso p \left( i_1 - 1 \right) = - \iso \beta$, so that 
$ \factor p {i_1 - 1} {i_1 + 1} = \conj \beta \alpha \beta$, which is prevented by hypothesis~\ref{RefAssNoOrthBumps}.
Similarly, $i$ can be shown being not equal to $j_1 + 2$.
\item[Proof of \eqref{RefFormula08}: ]
we note that it cannot be $p \left( i - 1 \right) = \conj \alpha$, for then it would be
$ \iso p \left( i - 1 \right) \in 
\iso p \left[ i_1 + 2, j_1 - 1 \right]$, which would imply $ i - 1 \in \left[ i_1 + 2, j_1 - 1 \right]$ by simplicity, thus violating \eqref{RefFormula07}.
By exclusion, it hence suffices to show that it cannot be $p \left( i - 1 \right) // \beta$, which we do by contradiction.
$ \iso p \left( i - 1 \right) \in \iso p \left[ i_1 + 1, j_1 \right] - \iso \alpha$, while, by construction of $h$, it cannot be 
$ \iso p \left( i - 1 \right) \in \iso p \left[ i_1 + 2, j_1 - 1 \right] - \iso \alpha$, so that
$ \iso p \left( i - 1 \right) \in \iso p \left\{ i_1 + 1, j_1 \right\} - \iso \alpha = 
\iso p \left\{ i_1, j_1 + 1 \right\}$.
Now, it cannot be $i - 1 \in \left\{ i_1, j_1 + 1 \right\}$ due to~\eqref{RefFormula07}, hence the only possibility is that $ \left\{ 0, \left| p \right| \right\}$ includes $i - 1$, together with one element of $\left\{ i_1, j_1 + 1 \right\} $.
Since $i \in \dom \iso p$, this implies $i - 1 = 0$, and hence $j_1 + 1 = \left| p \right|$.
Ultimately, we draw $p \left( \left| p \right| - 1 \right) = \conj \alpha$, $p \left( \left| p \right| - 2 \right) = \beta$, and $ p \left( 0 \right) = \conj \beta$.
The last fact follows from noting that
$ \iso p \left( i \right) \in \iso p \left[ i_1 + 2, \left| p \right| - 2 \right] - \iso \alpha 
\subseteq
\left\{ - n \iso \beta. n \in \Z^+ \right\} 
.
$
This violates the hypothesis $ \left[ i_1, j_1 \right] \notin \cornerbumps^{\alpha} \left( p \right)$.
\item[Proof of \eqref{RefFormula09}: ]
Analogous to that of~\eqref{RefFormula08}. 
Now we proceed by cases.
\end{description}
\begin{description}
\item[There is {$ \left[ i_0, j_0 \right] \in \ran h$} such that $i_0 - 1 \in \dom \iso p$ and 
$ j_0 + 1 \in \dom \iso p:$ ]
then, using~\eqref{RefFormula08} and \eqref{RefFormula09},
\begin{align*}
p \left( i_0 - 1 \right) = \alpha && \text{ and } && 
p \left( j_0  \right) = \conj \alpha,
\end{align*}
which implies $i_0 < j_0$ by simplicity.
Moreover, for every $i = i_0 \ldots j_0$, $\iso p \left( i \right) \in M$, so that 
for every $i = i_0 \ldots j_0 - 1$, $p \left( i \right)$ must be parallel to $\beta$.
This means that $ \left[ i_0, j_0 \right] \in \bumps^{\alpha} \left( p \right)$ and hence that $ \left[ i_1, j_1 \right]$ nests $\left[ i_0, j_0 \right]$ by definition of $M$.
\item[Remaining cases: ] consider 
$ \left[ i, j \right] \in \ran h \neq \ze$. 
Since we are excluding the previous case, it must be either 
$i = 0$ or $j = \left| p \right|$, 
which, $p$ being closed, implies that 
$ \left\{ 0, \left| p \right| \right\} \subseteq M$;
given that $ M \subset \left[ 0 , \left| p \right| \right]$, this yields that
$ \left\{ \left[ 0, j_0 \right], \left[ \left| p \right| - i_0  , \left| p \right| \right] \right\} \subseteq \ran h$ 
for some $i_0, j_0 \in \N$, with $ j_0 < \left| p \right| - i_0 $.
Since $ \left[ 0, j_0 \right] \subseteq M$, it must be $ p \left( i \right) \ortho \alpha$ for every $i = 0 \ldots j_0 - 1$, while $ p\left( j_0 \right) = \conj \alpha$ by~\eqref{RefFormula09};
$p$ being simple, $\beta_1^{j_0} \alpha$ must be a left factor of $p$ for some $\beta_1 \ortho \alpha$.
Analogously, $ p  \left[ \left| p \right| - i_0, \left| p \right| - 1 \right] \ortho \alpha$, and $ p\left( \left| p \right| - i_0 - 1 \right) = \alpha$ using~\eqref{RefFormula08}, so that 
$\conj \alpha \beta_2^{i_0}$ must be a right factor of $p$ for some $\beta_2 \ortho \alpha$.
Now, since $\left| p \right| > 2$ and $p$ is closed, it cannot be $ j_0 + i_0 = 0$, otherwise simplicity of $p$ would fail; this in turn implies that we can take $\beta_2 = \beta_1$.
$ \left[ i_1, j_1 \right] \in \cloggedbumps^{\alpha} \left( p \right)$ by definition.
\end{description}
\end{proof}

\begin{Prop}
\label{RefLmBumpClassification2}
Let $\left( p_0, p_1 \right) \in \climple$, $ \left[ i_1, j_1 \right] \in \bumps^{\alpha}\left( p_0, p_1 \right)$ and assume that 
\begin{enumerate}
\item
\label{RefAssNoBumps2}
$\left\{ \left[ i_1 - 1, i_1 + 1 \right], \left[ j_1 - 1, j_1 + 1  \right] \right\} \cap
\bumps^{\alpha^{\ortho}} \left( p_0, p_1 \right) = \ze$
\item
$ 
\left( p_0, p_1 \right) - \left\{ i_1, j_1 \right\} \notin \climple{}.
$
\end{enumerate}
Then either 
$ \left[ i_1, j_1 \right] \in \cornerbumps^{\alpha} \left( p_0, p_1 \right)
\cup \cloggedbumps^{\alpha} \left( p_0, p_1 \right)
$ or there is $ \left[ i_0, j_0 \right] \in \bumps^{\alpha} \left( p_0, p_1 \right)$ nested by $ \left[ i_1, j_1 \right]$.
\end{Prop}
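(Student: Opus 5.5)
The plan is to reduce the statement to Proposition~\ref{RefLmBumpClassification1} applied to the single string $p := p_0 p_1$. The notation conventions give $\cornerbumps^{\alpha}\left( p_0, p_1 \right) = \cornerbumps_0^{\alpha} \cup \cornerbumps_1^{\alpha}$ and $\cloggedbumps^{\alpha}\left( p_0, p_1 \right) = \cloggedbumps_0^{\alpha} \cup \cloggedbumps_1^{\alpha}$. The idea is that each hypothesis of Proposition~\ref{RefLmBumpClassification1} that can fail, and each conclusion of it that is too weak, lands us in one of these four families.

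\textbf{Preliminaries.} Since $\left( p_0, p_1 \right) \in \climple$, the string $p$ is closed and simple. Moreover $\left[ i_1, j_1 \right] \in \bumps^{\alpha}\left( p_0, p_1 \right) \subseteq \bumps^{\alpha}\left( p \right)$. The concatenation of the two components of $\left( p_0, p_1 \right) - \left\{ i_1, j_1 \right\}$ is exactly $p - \left\{ i_1, j_1 \right\}$. By Remark~\ref{RefLmBumpCancel}, $p - \left\{ i_1, j_1 \right\} \eq p$, so it is still closed. Hence the hypothesis $\left( p_0, p_1 \right) - \left\{ i_1, j_1 \right\} \notin \climple$ means precisely that $p - \left\{ i_1, j_1 \right\}$ is not simple, which is hypothesis~\ref{RefAssNotSimple} of Proposition~\ref{RefLmBumpClassification1}.

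\textbf{Exceptional cases.} Two situations are handled directly.
\begin{itemize}
\item If $\left[ i_1, j_1 \right] \in \cornerbumps^{\alpha}\left( p \right)$, then, being also in $\bumps^{\alpha}\left( p_0, p_1 \right)$, it lies in $\cornerbumps_0^{\alpha}\left( p_0, p_1 \right)$ and we are done.
\item Suppose one of $\left[ i_1 - 1, i_1 + 1 \right]$, $\left[ j_1 - 1, j_1 + 1 \right]$ belongs to $\bumps^{\alpha^{\ortho}}\left( p \right)$. By hypothesis~\ref{RefAssNoBumps2} it is not in $\bumps^{\alpha^{\ortho}}\left( p_0, p_1 \right)$. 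So it belongs to $\bumps^{\alpha^\ortho}\left( p \right) \sdiff \bumps^{\alpha^\ortho}\left( p_0, p_1 \right)$, and $\left[ i_1, j_1 \right] \in \cornerbumps_1^{\alpha}\left( p_0, p_1 \right)$ by definition.
\end{itemize}

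\textbf{Main case.} Otherwise all hypotheses of Proposition~\ref{RefLmBumpClassification1} hold for $p$ and $\left[ i_1, j_1 \right]$. It yields two alternatives.
\begin{itemize}
\item If $\left[ i_1, j_1 \right] \in \cloggedbumps^{\alpha}\left( p \right)$, then intersecting with $\bumps^{\alpha}\left( p_0, p_1 \right)$ puts it in $\cloggedbumps_0^{\alpha}\left( p_0, p_1 \right)$.
\item If $\left[ i_1, j_1 \right]$ nests some $\left[ i_0, j_0 \right] \in \bumps^{\alpha}\left( p \right)$, either $\left[ i_0, j_0 \right] \in \bumps^{\alpha}\left( p_0, p_1 \right)$, which is the last alternative of the thesis, or $\left[ i_0, j_0 \right] \in \bumps^{\alpha}\left( p \right) \sdiff \bumps^{\alpha}\left( p_0, p_1 \right)$. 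In the latter case $\left[ i_1, j_1 \right] \in \cloggedbumps_1^{\alpha}\left( p_0, p_1 \right)$ by definition.
\end{itemize}

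\textbf{Main obstacle.} I expect the only delicate point to be bookkeeping rather than mathematics. One must check carefully that the index conventions of $\bumps\left( p_0, p_1 \right)$, shifted by $\left| p_0 \right|$, agree with those used for $p$. One must also check that membership in $\climple$ for the pair reduces exactly to closedness plus simplicity of the concatenation, so that nothing is lost when passing between the pair and the single string.
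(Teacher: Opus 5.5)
Your proposal is correct and follows the paper's proof essentially verbatim. Like the paper, you apply Proposition~\ref{RefLmBumpClassification1} to $p_0 p_1$, send a bump adjacent to an orthogonal bump of $p_0 p_1$ that is not a bump of the pair into $\cornerbumps_1^{\alpha}$, and split the remaining outcomes into $\cornerbumps_0^{\alpha}$, $\cloggedbumps_0^{\alpha}$, a nested bump of the pair, or $\cloggedbumps_1^{\alpha}$. You are slightly more explicit than the paper on two points: you check that $p_0 p_1 - \left\{ i_1, j_1 \right\}$ stays closed, so that leaving $\climple$ means exactly losing simplicity, and you treat $\left[ i_1, j_1 \right] \in \cornerbumps^{\alpha}\left( p_0 p_1 \right)$ as a separate case before invoking the proposition, whose hypotheses exclude it.
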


\begin{proof}
Let $p := p_0 p_1$ and assume $ \left[ i_1, j_1 \right] \notin \cornerbumps_1^{\alpha} \left( p_0, p_1 \right)$.
Then, by definition~\ref{RefDefCornerBumps2} and hypothesis~\ref{RefAssNoBumps2},
\\
$\left\{ \left[ i_1 - 1, i_1 + 1 \right], \left[ j_1 - 1, j_1 + 1  \right] \right\} \cap
\bumps^{\alpha^{\ortho}} \left( p \right) = \ze$.
After applying Proposition~\ref{RefLmBumpClassification1}, there are three cases:
\begin{description}
\item[Case {$ \left[ i_1, j_1 \right] \in \cornerbumps^{\alpha} \left( p \right) \cup \cloggedbumps^{\alpha} \left( p \right): $}]
then $ \left[ i_1, j_1 \right] \in \cornerbumps_0^{\alpha} \left( p_0, p_1 \right) \cup 
\cloggedbumps_0^{\alpha} \left( p_0, p_1 \right)
$ by definition.
\item[ There is {$ \left[ i_0, j_0 \right] \in \bumps^{\alpha} \left( p_0, p_1 \right)$ nested by $ \left[ i_1, j_1 \right]$}: ]
immediate.
\item[ There is {$ \left[ i_0, j_0 \right] \in \bumps^{\alpha} \left( p \right) \sdiff \bumps^{\alpha} \left( p_0, p_1 \right)$ nested by $ \left[ i_1, j_1 \right]$}: ]
then $ \left[ i_1, j_1 \right] \in \cloggedbumps^{\alpha}_1 \left( p_0, p_1 \right)$ by definition.
\end{description}
\end{proof}

\begin{Prop}
\label{RefLmBumpClassification3}
Let $\left( p_0, p_1 \right) \in \climple \sdiff \trapezoidals{}$, $ \left[ i_1, j_1 \right] \in \bumps^{\alpha}\left( p_0, p_1 \right)$. 
Assume that \\
\begin{enumerate*}
\item
$\left\{ \left[ i_1 - 1, i_1 + 1 \right], \left[ j_1 - 1, j_1 + 1  \right] \right\} \cap
\bumps^{\alpha^{\ortho}} \left( p_0, p_1 \right) = \ze$ and 
\item
$ 
\left( p_0, p_1 \right) - \left\{ i_1, j_1 \right\} \notin \climple{} \sdiff \trapezoidals{}.
$
\end{enumerate*}
Then either 
$ \left[ i_1, j_1 \right] \in \clearbumps^{\alpha} 
\left( p_0, p_1 \right)
\cup \cornerbumps^{\alpha} \left( p_0, p_1 \right)
\cup \cloggedbumps^{\alpha} \left( p_0, p_1 \right)
$ or there is $ \left[ i_0, j_0 \right] \in \bumps^{\alpha} \left( p_0, p_1 \right)$ nested by $ \left[ i_1, j_1 \right]$.
\end{Prop}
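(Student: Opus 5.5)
The plan is to split according to which of the two properties is lost when the bump is cancelled. Hypothesis~2 says that $\left( p_0, p_1 \right) - \left\{ i_1, j_1 \right\}$ is either not in $\climple$ or lies in $\trapezoidals$. I will first treat the case $\left( p_0, p_1 \right) - \left\{ i_1, j_1 \right\} \notin \climple$. Here hypothesis~1 of the present statement is exactly hypothesis~\ref{RefAssNoBumps2} of Proposition~\ref{RefLmBumpClassification2}. Since $\left( p_0, p_1 \right) \in \climple$, that proposition applies directly. It yields either $\left[ i_1, j_1 \right] \in \cornerbumps^{\alpha} \cup \cloggedbumps^{\alpha}$, or a bump $\left[ i_0, j_0 \right] \in \bumps^{\alpha} \left( p_0, p_1 \right)$ nested by $\left[ i_1, j_1 \right]$. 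In both outcomes the conclusion holds.

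It remains to handle $\left( p_0, p_1 \right) - \left\{ i_1, j_1 \right\} \in \climple \cap \trapezoidals$. The aim is to show $\left[ i_1, j_1 \right] \in \clearbumps^{\alpha}$. Because the reduced pair is in $\climple$, it suffices by Definition~\ref{RefDefCornerBumps2} to check that $i_1 \in \left\{ 0, \left| p_0 \right| \right\}$ or $j_1 \in \left\{ \left| p_0 \right| - 1, \left| p_0 p_1 \right| - 1 \right\}$.

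I will argue this by contraposition. Suppose none of these four equalities holds. Membership in $\trapezoidals$ depends only on the first and last letters of the two components, together with their non-emptiness.

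\begin{itemize}
\item Since $\left[ i_1, j_1 \right]$ lies entirely inside $p_0$ or entirely inside $p_1$, cancelling the indices $i_1, j_1$ removes neither the first nor the last letter of $p_0$ or of $p_1$. So the four boundary letters of the reduced pair equal those of $\left( p_0, p_1 \right)$.
\item A bump has length at least $2$. Because neither endpoint of the bump touches a boundary index, the component containing the bump has length at least $4$, so both reduced components stay non-empty.
\end{itemize}

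Hence the reduced pair is in $\trapezoidals$ iff $\left( p_0, p_1 \right)$ is, and $\left( p_0, p_1 \right) \notin \trapezoidals$ by hypothesis, contradicting the case assumption.

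The main obstacle is the bookkeeping of the non-emptiness clause and of index shifts. The bump is indexed in $p_0 p_1$ and may belong to $p_1$, and the operation $\left( p_0, p_1 \right) - X$ re-splits at $\left| p_0 \right|$. I will check this carefully in both sub-cases, bump inside $p_0$ and bump inside $p_1$.

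Everything else reduces to the earlier propositions. The only other thing to confirm is that the union notation $\clearbumps^{\alpha} = \clearbumps^{\alpha}_0 \cup \clearbumps^{\alpha}_1$ matches the four endpoint conditions used above.
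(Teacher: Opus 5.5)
Your proposal is correct and follows the paper's proof. The case where the reduced pair leaves $\climple$ is delegated to Proposition~\ref{RefLmBumpClassification2}, and in the remaining case the reduced pair can only enter $\trapezoidals$ if the cancelled bump removes the first or last letter of its component, which is exactly the endpoint condition defining $\clearbumps^{\alpha}$. One small correction: your contrapositive makes the non-emptiness clause of $\trapezoidals$ explicit, but what you need is only that each component's emptiness status is preserved, not that both reduced components are non-empty, since $\left( p_0, p_1 \right) \notin \trapezoidals$ does not force the untouched component to be non-empty.
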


\begin{proof}
Thanks to Proposition~\ref{RefLmBumpClassification2} and hypotheses, we can assume
$ \left( p_0, p_1 \right) - \left\{ i_1, j_1 \right\} \in \climple \cap \trapezoidals{}$.
In the case $j_1 < \left| p_0 \right|$, this implies that the first or the last letter of $ p_0 - \left\{ i_1, j_1 \right\}$ (or both) are different than those of $p_0$; 
this can only happen when $ \left\{ 0, \left| p_0 \right| - 1 \right\} \cap \left\{ i_1, j_1 \right\} \neq \ze{}$, yielding 
$ \left[ i_1, j_1 \right] \in \clearbumps^{\alpha} \left( p_0, p_1 \right)$ by definition.
The other case, $ i_1 \geq \left| p_0 \right|$, is similar.
\end{proof}

\begin{Lm}
\label{RefLmBumpClassification4} 
Let $\left( p_0 , p_1 \right) \in \candidatempty$, and 
$ \left[ i, j \right] \in \argmin_{\size} \bumps^{\alpha} \left( p_0, p_1 \right)$ for some $\alpha$.
Then $ \left[ i, j \right] \in 
\clearbumps^{\alpha} \left( p_0, p_1 \right) 
\cup \doublebumps^{\alpha}  \left( p_0, p_1 \right)
\cup \cornerbumps^{\alpha}  \left( p_0, p_1 \right)
\cup \cloggedbumps^{\alpha} \left( p_0, p_1 \right).$
\end{Lm}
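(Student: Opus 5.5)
We first dispose of the case in which the orthogonal-neighbour hypothesis of Proposition~\ref{RefLmBumpClassification3} fails, and then apply that proposition in the remaining case.

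Fix $\left( p_0, p_1 \right) \in \candidatempty$ and $\left[ i, j \right] \in \argmin_{\size} \bumps^{\alpha} \left( p_0, p_1 \right)$, and write $p := p_0 p_1$. Since $\left( p_0, p_1 \right) \in \candidatempty$, we know $\left( p_0, p_1 \right) \in \climple \sdiff \trapezoidals$ and $\left( p_0, p_1 \right) - \left\{ i, j \right\} \notin \climple \sdiff \trapezoidals$. This is hypothesis 2 of Proposition~\ref{RefLmBumpClassification3}. We split according to whether hypothesis 1 of that proposition holds, namely whether $\left\{ \left[ i-1, i+1 \right], \left[ j-1, j+1 \right] \right\} \cap \bumps^{\alpha^{\ortho}} \left( p_0, p_1 \right) = \ze$.

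\emph{Case where hypothesis 1 holds.} Proposition~\ref{RefLmBumpClassification3} yields $\left[ i, j \right] \in \clearbumps^{\alpha} \cup \cornerbumps^{\alpha} \cup \cloggedbumps^{\alpha}$, which is what we want, or else $\left[ i, j \right]$ nests some $\left[ i_0, j_0 \right] \in \bumps^{\alpha} \left( p_0, p_1 \right)$. We rule out the latter by minimality. By definition of nesting, $\iso p \left( \left[ i_0+1, j_0 \right] \right) + \left\{ \mu \left( \alpha \right) \right\} \subseteq \iso p \left( \left[ i+2, j-1 \right] \right)$. By simplicity of $p$, $\iso p$ is injective away from the endpoints $\left\{ 0, \left| p \right| \right\}$, so comparing cardinalities gives $j_0 - i_0 \le j - i - 2 < j - i$. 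Hence $\left| \left[ i_0, j_0 \right] \right| < \left| \left[ i, j \right] \right|$, contradicting the choice of $\left[ i, j \right]$ in $\argmin_{\size}$. (The single possible coincidence $\iso p \left( 0 \right) = \iso p \left( \left| p \right| \right)$ needs a short separate check, in the style of the proof of Proposition~\ref{RefLmNest2}.)

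\emph{Case where hypothesis 1 fails.} By the symmetries listed at the start of Section~\ref{RefSectProofSkeleton}, it suffices to treat $\left[ i-1, i+1 \right] \in \bumps^{\beta} \left( p_0, p_1 \right)$ for some $\beta \ortho \alpha$.
\begin{itemize}
\item A bump of length $2$ would be a loop, which $\climple$ excludes. So bumps have length at least $3$, and $\left[ i-1, i+1 \right]$, having length $3$, lies in $\argmin_{\size} \bumps^{\beta} \left( p_0, p_1 \right)$.
\item Membership in $\candidatempty$ then gives $\left( p_0, p_1 \right) - \left\{ i-1, i+1 \right\} \notin \climple \sdiff \trapezoidals$.
\item Cancelling a bump of length $3$ cannot destroy simplicity: only one point of the walk is removed, and the rest are translated rigidly. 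Hence the cancellation must produce a pair in $\trapezoidals$.
\item Since we started outside $\trapezoidals$, this forces the cancelled indices to alter a first or last letter of $p_0$ or $p_1$. Thus $i-1 \in \left\{ 0, \left| p_0 \right| \right\}$ or $i+1 \in \left\{ \left| p_0 \right| - 1, \left| p \right| - 1 \right\}$.
\item The second option is impossible: $\left[ i, j \right]$ lies within a single component of the pair and $j > i+1$.
\end{itemize}
Therefore $i = 1$ with $\left[ 0, 2 \right]$ a $\beta$-bump of $p_0$, or $i = \left| p_0 \right| + 1$ with $\left[ \left| p_0 \right|, \left| p_0 \right| + 2 \right]$ a $\beta$-bump of $p_1$. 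These are exactly the defining clauses of $\doublebumps_0^{\alpha}$ and $\doublebumps_1^{\alpha}$, so $\left[ i, j \right] \in \doublebumps^{\alpha}$. The case $\left[ j-1, j+1 \right] \in \bumps^{\alpha^{\ortho}}$ is handled by reversal and gives the other clauses.

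The step I expect to be most delicate is the bookkeeping in the second case. It requires checking carefully that cancelling a length-$3$ bump preserves membership in $\climple$, including the closedness of the cancelled pair. It also requires matching the resulting boundary positions to the subscripts $\bumps_0$ and $\bumps_1$ used in Definition~\ref{RefDefCornerBumps2}, taking care to distinguish bumps of $p$ from bumps of $\left( p_0, p_1 \right)$.
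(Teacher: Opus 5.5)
Your proof is correct and follows essentially the same route as the paper. You split on whether an orthogonal length-$3$ bump sits at $[i-1,i+1]$ or $[j-1,j+1]$; if it does, the $\candidatempty$ condition plus preservation of $\climple$ forces the cancellation to create a $\trapezoidals$-pair, which places $[i,j]$ in $\doublebumps^{\alpha}$, and otherwise you apply Proposition~\ref{RefLmBumpClassification3} and exclude nesting by minimality. (Minor: cancelling $x y \conj x$ removes two walk points and moves none, since the new walk is a sub-walk of the old one, and that is the clean reason simplicity survives; the extra check you flag for $\iso p(0)=\iso p(\left| p \right|)$ is unnecessary because $[i_0+1,j_0] \subseteq [1,\left| p \right|-1]$.)
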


\begin{proof}
By cases.
\begin{description}
\item[Case 
{$ \left\{ \left[ i-1, i+1 \right], \left[ j-1, j+1 \right] \right\} \cap \bumps^{\alpha^\ortho} \left( p_0, p_1 \right) 
\neq \ze
$}: ]
consider $k \in \left\{ i, j \right\}$ such that 
$ \left[ k-1, k+1 \right] \in \bumps^{\alpha^\ortho}\left( p_0, p_1 \right)$.
Note that $ p_0 p_1 - \left\{ k-1, k+1 \right\}$ is still simple; moreover, no element of 
$ \bumps \left( p_0, p_1 \right)$ has cardinality less than $3 = \card \left[ k-1, k+1 \right]$, because $\left( p_0, p_1 \right) \in \climple \sdiff \trapezoidals \sdiff \selffact$.
Hence, by definition of $\candidatempty{}$, 
$ \left( p_0, p_1 \right) - \left\{ k-1, k+1 \right\} \in \climple \cap \trapezoidals{}$.
Since $ \left( p_0, p_1 \right) \notin \trapezoidals{}$, this implies that\\
$ \left\{ k-1, k+1 \right\} \cap \left\{ 0, \left| p_0 \right| - 1, \left| p_0 \right|, \left| p_0 p_1 \right| - 1 \right\} \neq \ze$
$ \ra{} k-1 \in \left\{ 0, \left| p_0 \right| \right\} \vee 
k+1 \in \left\{ \left| p_0 \right| - 1, \left| p_0 p_1 \right|- 1 \right\}
$
\\
$ \ra{} \left( k = i \wedge i \in \left\{ 1, \left| p_0 \right| + 1 \right\} \right)
\vee
$
$
\left( k = j \wedge j \in \left\{ \left| p_0 \right| - 2, \left| p_0 p_1 \right| - 2 \right\} \right):
$
$ \left[ i, j \right] \in \doublebumps^{\alpha} $ by definition.
\item[Case 
{$ \left\{ \left[ i-1, i+1 \right], \left[ j-1, j+1 \right] \right\} \cap \bumps^{\alpha^\ortho} \left( p_0, p_1 \right) = \ze
$}: ]
by definition of $\candidatempty{}$, $ \left( p_0, p_1 \right) - \left\{ i,j \right\} \notin \climple \sdiff \trapezoidals{}$.
We can hence invoke Proposition~\ref{RefLmBumpClassification3} to obtain that 
$ \left[ i, j \right] \in \clearbumps^{\alpha} \cup \cornerbumps^{\alpha} \cup \cloggedbumps^{\alpha}$, because if there were a bump nested by $ \left[ i, j \right] $, it would have a strictly smaller cardinality.
\end{description}
\end{proof}

\subsection{Bumps in $ \clearbumps \cup \doublebumps \cup \cornerbumps{}$}

\begin{Prop}
\label{RefLmSimpleBumps01} 
Let $(p_0, p_1) \in \climple{}$, and assume
$
\forall \left[ i, j \right] \in \bumps^{\left\{ \beta, \conj \beta  \right\} } \left( p_0, p_1 \right). \ 
\card \left[ i, j \right] > 3.
$
Then, given $ \left[ i_1, j_1 \right] \in \bumps^{\alpha} \left( p_0, p_1 \right)$, $\beta \ortho \alpha$, either
$ \left[ i_1, j_1 \right]$ nests some $ \left[ i_0, j_0 \right] \in \bumps^{\alpha} \left( p_0, p_1 \right)$, 
or
$ p_0 p_1 - \left\{ i_1, j_1 \right\}$ is simple and closed, 
or
$ \left[ i_1, j_1 \right] \in 
\cornerbumps^{\alpha} \cup \cloggedbumps{}^{\alpha}.$
\end{Prop}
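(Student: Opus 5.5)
The plan is to reduce the statement to Proposition~\ref{RefLmBumpClassification2}, with the cardinality hypothesis on $\beta$-parallel bumps replacing that proposition's hypothesis~\ref{RefAssNoBumps2}. Since $p_0 p_1$ is already closed, and removing a letter $\alpha$ together with a letter $\conj\alpha$ does not change the displacement, $p_0p_1 - \left\{ i_1, j_1 \right\}$ is automatically closed. Hence the second alternative of the thesis fails exactly when $\left( p_0, p_1 \right) - \left\{ i_1, j_1 \right\} \notin \climple$, and we may assume this. It then suffices to verify hypothesis~\ref{RefAssNoBumps2} of Proposition~\ref{RefLmBumpClassification2}, namely
$\left\{ \left[ i_1 - 1, i_1 + 1 \right], \left[ j_1 - 1, j_1 + 1 \right] \right\} \cap \bumps^{\alpha^{\ortho}} \left( p_0, p_1 \right) = \ze$.

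Here $\alpha^{\ortho} = \left\{ \beta, \conj \beta \right\}$ in $\letters_2$. Each of the two intervals $\left[ i_1-1, i_1+1 \right]$ and $\left[ j_1-1, j_1+1 \right]$ has cardinality $3$. The assumption says every bump of $\left( p_0, p_1 \right)$ in direction $\beta$ or $\conj\beta$ has cardinality strictly greater than $3$. So neither interval can belong to $\bumps^{\alpha^{\ortho}} \left( p_0, p_1 \right)$, and the hypothesis holds at once.

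Proposition~\ref{RefLmBumpClassification2} then gives one of two outcomes. Either $\left[ i_1, j_1 \right] \in \cornerbumps^{\alpha} \left( p_0, p_1 \right) \cup \cloggedbumps^{\alpha} \left( p_0, p_1 \right)$, which is the third alternative, or $\left[ i_1, j_1 \right]$ nests some $\left[ i_0, j_0 \right] \in \bumps^{\alpha} \left( p_0, p_1 \right)$, which is the first. Here $\cornerbumps^{\alpha}$ and $\cloggedbumps^{\alpha}$ are read, per the Notation, as unions of the indices $0$ and $1$.

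The only step needing care is the notational one. I must check that the $\cornerbumps^\alpha \cup \cloggedbumps^\alpha$ produced by Proposition~\ref{RefLmBumpClassification2} matches the sets in the present statement, including the $\cornerbumps_1$ case. That case arises inside that proposition's proof, when $\left[ i_1, j_1 \right]$ is assumed not to lie in $\cornerbumps_1^{\alpha}$. Since both statements use the same union, nothing further is needed, and I expect no real obstacle.
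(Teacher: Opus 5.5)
Your proposal is correct and follows the paper's own proof: you reduce to Proposition~\ref{RefLmBumpClassification2}, use the cardinality assumption to rule out the two length-$3$ intervals around $i_1$ and $j_1$ as $\beta$- or $\conj\beta$-bumps (its first hypothesis), and use the failure of simplicity for its second hypothesis. You also make explicit that $p_0p_1 - \left\{ i_1, j_1 \right\}$ is automatically closed, because a bump removes an $\alpha$ together with an $\conj\alpha$; the paper leaves this point implicit.
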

\begin{proof}
This is an immediate corollary of Proposition~\ref{RefLmBumpClassification2}: let us assume, by contradiction, that  
$ p_0 p_1 - \left\{ i_1, j_1 \right\}$ is not simple; this triggers the second hypothesis of Proposition~\ref{RefLmBumpClassification2}. 
The inequality hypothesis triggers the first one, and we obtain the thesis. 
\end{proof}

\begin{Prop}
\label{RefLmStraightPreserved}
Let $p \in \Sigma_2^*$ be simple, $\alpha \in \Sigma_2$, $i, j \in \Z^+$, and suppose $\bumps^{\alpha} \left( p \right) = \ze{}$. 
Then
\begin{align*}
\bumps^{\alpha} \left( p - i \right) = \ze, \text{ and }
\left( \left[ i, j \right] \in \bumps^{\beta}\left( p \right) \wedge \beta \ortho \alpha  \right) 
\ra{} \bumps^{\alpha}\left( p - \left\{ i, j \right\} \right) = \ze{}.
\end{align*}
\end{Prop}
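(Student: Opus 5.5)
The plan is to reduce the statement to a purely combinatorial characterisation of $\bumps^{\alpha}\left( p \right) = \ze$ that is visibly preserved under deleting letters. Both parts of the thesis concern a string obtained from $p$ by cancelling letters ($p - i$ and $p - \left\{ i, j \right\}$). So it will suffice to prove the following claim, for any $q \in \letters_2^*$:
\begin{center}
$\bumps^{\alpha}\left( q \right) = \ze$ if and only if there are no indices $k < l$ in $\dom q$ with $q\left( k \right) = \alpha$ and $q\left( l \right) = \conj \alpha$.
\end{center}
The right-hand condition only involves the subsequence of $q$ made of its letters parallel to $\alpha$, and it asserts that no $\alpha$ precedes an $\conj \alpha$ in it. Such a condition is inherited by every subsequence $\subsequence{q}{X}$, in particular by $p - i$ and $p - \left\{ i, j \right\}$.

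For the direction ``bump $\Rightarrow$ pattern'', recall that if $\left[ k, l \right] \in \bumps^{\alpha}\left( q \right)$ then $q\left( k \right) = \alpha$ and $q\left( l \right) = \conj \alpha$ with $k < l$, as noted right after the definition of bump. For the converse, suppose $q\left( k \right) = \alpha$ and some later letter is $\conj \alpha$, and let $l$ be the smallest index greater than $k$ with $q\left( l \right) = \conj \alpha$. Let $m$ be the coordinate index with $\alpha // \enum\left( m \right)$; by the invariances of Section~\ref{RefSectProofSkeleton} we may assume $\alpha = \enum\left( m \right)$. Along $\left[ k+1, l \right]$ the letters of $q$ are either orthogonal to $\alpha$ or equal to $\alpha$, so $\proj_m \circ \iso q$ is non-decreasing there. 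Hence
\begin{center}
$\left( \proj_m \circ \iso q \right)\left( k \right) < \left( \proj_m \circ \iso q \right)\left( k+1 \right) \le \left( \proj_m \circ \iso q \right)\left( l \right)$, while $\left( \proj_m \circ \iso q \right)\left( l \right) > \left( \proj_m \circ \iso q \right)\left( l+1 \right)$.
\end{center}
Applying Proposition~\ref{RefLmUpDown} with the triple $\left( k, l, l+1 \right)$ gives $\left[ k, l \right] \cap \bumps^{\alpha}\left( q \right) \neq \ze$, and in particular $\bumps^{\alpha}\left( q \right) \neq \ze$.

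The statement then follows at once. Since $\bumps^{\alpha}\left( p \right) = \ze$, no $\alpha$ precedes an $\conj \alpha$ in $p$. The strings $p - i$ and $p - \left\{ i, j \right\}$ are subsequences of $p$ and preserve the relative order of the surviving letters, so they have no such pattern either, and the claim gives $\bumps^{\alpha} = \ze$ for both. This argument needs neither the simplicity of $p$ nor the hypothesis that $\left[ i, j \right]$ is a $\beta$-bump with $\beta \ortho \alpha$; these hypotheses are presumably stated only to match the context in which the proposition is used.

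The only point I expect to need care is the application of Proposition~\ref{RefLmUpDown}: it requires $j \in \left] i, k \right[$, i.e.\ $k < l < l+1$, and $l+1 \le \left| q \right|$. Both hold because $l \in \dom q$ and $\dom \iso q = \left[ 0, \left| q \right| \right]$.
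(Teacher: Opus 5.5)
\textbf{Gap.} Your characterization fails for non-simple strings, and with it your closing remark that simplicity is not needed.

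Take $q = a b \conj b \conj a$. Here $a$ precedes $\conj a$, yet $\bumps^{a}\left( q \right) = \ze$. The only bump of $q$ is the $b$-bump $\left[ 1, 2 \right]$, which lies inside $\left[ 0, 2 \right]$ and $\left[ 0, 3 \right]$ and so prevents them from being bumps. Your argument with the triple $\left( k, l, l+1 \right) = \left( 0, 3, 4 \right)$ would nevertheless produce an $a$-bump inside $\left[ 0, 3 \right]$.

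The faulty step is the appeal to Proposition~\ref{RefLmUpDown}. Although it is phrased for arbitrary strings, its justification only shows that $\factor{p}{i}{k-1}$ is not short and hence contains \emph{some} bump, whose direction need not be $\enum\left( m \right)$.

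The same string refutes the statement itself once simplicity is dropped. For $p = a b \conj b \conj a$ and $\alpha = a$ one has $\bumps^{a}\left( p \right) = \ze$. But $p - 1 = a \conj b \conj a$ has the $a$-bump $\left[ 0, 2 \right]$. Likewise, cancelling the $b$-bump $\left[ 1, 2 \right]$ gives $a \conj a$, which has the $a$-bump $\left[ 0, 1 \right]$.

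\textbf{Repair and comparison.} You only use the hard direction (``pattern implies bump'') on $p$ itself, which is simple, and there it holds. Choose $k < l$ with $p\left( k \right) = \alpha$, $p\left( l \right) = \conj \alpha$ and $l - k$ minimal. Then $\factor{p}{k+1}{l-1}$ contains no letter parallel to $\alpha$. A factor $\beta \conj \beta$ or $\conj \beta \beta$ would be a loop, so it is a power of a single letter orthogonal to $\alpha$, and $\left[ k, l \right] \in \bumps^{\alpha}\left( p \right)$. The easy direction holds for every string, so it applies to $p - i$ and $p - \left\{ i, j \right\}$ even though these need not be simple.

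With the claim restated this way your proof is correct, and it differs from the paper's. The paper argues locally. Writing the cancelled bump as $b a^{m+1} \conj b$, it observes that a hypothetical $\alpha$-bump of $p - \left\{ i, j \right\}$ has interior parallel to $b$. Such a bump must therefore end at the first or begin at the last $a$ of $a^{m+1}$. The paper then derives either an $\alpha$-bump of $p$ or a factor $\conj b b$ contradicting simplicity.

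Your order-pattern argument uses simplicity once, globally, and proves more: $\bumps^{\alpha}$ stays empty after deleting any set of letters from $p$. So the hypothesis that $\left[ i, j \right]$ is a $\beta$-bump with $\beta \ortho \alpha$ is indeed superfluous; only the claim about simplicity has to be withdrawn.
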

\begin{proof}
Let us prove only the second thesis, since the first one is simpler and uses similar ideas. 
By simplicity, we can assume that $\factor p i j = b a^{m+1} \conj b$ for some $m \in \N$, so that $\beta = b$ and $\alpha \in \left\{ a, \conj a \right\}$. 
By contradiction, consider $\left[ h, k \right] \in \bumps^{\alpha}\left( q:=p - \left\{ i, j \right\} \right)$.
The interior $a^{m+1}$ of the canceled bump cannot be included between the positions $h$ and $k$, since this is a string whose alphabet is either $\left\{ b \right\}$ or $\left\{ \conj b \right\}$.
So it's either $q\left( k \right)$ coinciding with the first $a$ of that interior, or $h$ coinciding with its last $a$.
Let us take the first option (the second being analogous). 
Then the interior of $[h, k]$ must be made of $\conj b$, otherwise we would violate the hypothesis $\bumps^{\alpha} \left( p \right) = \ze{}$. 
But then $p$ would have a factor $\conj b b$, violating simplicity.
\end{proof}

\begin{Lm}
\label{RefLmTopological}
The set 
$\{ \left( p_0, p_1 \right) \in \climple{}. \ 
\exists i, j \in \Z^+, \alpha \ortho \beta . \  
\beta \alpha ^ i \conj \beta \text{ is a left factor of } p_0 \ \wedge $
\\
$\conj \beta \alpha^j \beta \text{ is a right factor of } p_0 \ 
\wedge \ 
\bumps^{\alpha, \conj \alpha} \left( p_0 \right) = \ze{} = \bumps^{\alpha, \conj \alpha} \left( p_1 \right) 
\}$
is empty. 
\end{Lm}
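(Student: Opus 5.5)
We take $\alpha = a$ and $\beta = b$; the other choices follow from the invariances listed at the beginning of Section~\ref{RefSectProofSkeleton}. Suppose, for contradiction, that $\left( p_0, p_1 \right)$ belongs to the set. The idea is that both $\iso{p_0}$ and $\iso{p_1}$ are monotone in the horizontal direction. The two caps $b a^i \conj b$ and $\conj b a^j b$ then force $p_0$ to lie above $p_1$ in the first column and below it in the last column. Two horizontally monotone walks can only swap their vertical order by meeting, and a meeting would contradict simplicity of $p_0 p_1$.

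\emph{Step 1 (monotonicity).} Suppose $\bumps^{\{a,\conj a\}}(p_0) = \ze$. Then, by Proposition~\ref{RefLmUpDown} applied with $m=1$, $\proj_1 \circ \iso{p_0}$ never goes up and then down, and never goes down and then up. Hence it is monotone. Since $p_0$ begins with $b a^i$ and $i>0$, it is non-decreasing, so $X := \proj_1(\mu(p_0)) \ge i \ge 1$. Similarly, $\proj_1$ is monotone along the factor of $\iso{p_0 p_1}$ that corresponds to $p_1$. This factor goes from abscissa $X$ to $0$, so it is non-increasing. Consequently each walk meets each column $\{c\}\times\Z$, for $c \in [0,X]$, in a contiguous block of indices. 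Inside such a block only letters $b, \conj b$ occur, and simplicity forbids $b\conj b$ and $\conj b b$. So the set of points a walk touches in column $c$ is a vertical integer interval: $I_0(c)$ for $p_0$ and $I_1(c)$ for $p_1$. By simplicity of $p_0p_1$, $I_0(c)\cap I_1(c)=\ze$ for $0<c<X$.

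\emph{Step 2 (the endpoint columns).} Write $\mu(p_0)=(X,Y)$. In column $0$, $p_0$ touches only $(0,0)$ and $(0,1)$, because its left factor is $b a^i$ with $i \ge 1$, and it leaves the column at height $h_0(0)=1$. The walk of $p_1$ ends at the origin with abscissa $\ge 0$ and cannot pass through $(0,1)$. Therefore its final portion in column $0$ is $b^k$ for some $k\ge 0$, entered from $(1,-k)$, and it crosses between columns $0$ and $1$ at height $h_1(0) = -k \le 0 < h_0(0)$. Symmetrically, in column $X$, $p_0$ ends with $a b$, coming from $(X-1,Y-1)$, so its crossing height between columns $X-1$ and $X$ is $h_0(X-1)=Y-1$. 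The walk $p_1$ starts at $(X,Y)$, cannot use $(X,Y-1)$, and never increases its abscissa. So it first moves up some $m \ge 0$ steps and then steps left, and its crossing height is $h_1(X-1) = Y+m > h_0(X-1)$.

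\emph{Step 3 (order propagation).} For $0 \le c < X$, let $h_k(c)$ be the unique height at which walk $k$ steps between columns $c$ and $c+1$. This height lies in both $I_k(c)$ and $I_k(c+1)$. We claim $h_0(c) > h_1(c)$ for every $c$, by induction on $c$. The base case $c=0$ is Step 2. For the inductive step, let $0<c<X$. The intervals $I_0(c)$ and $I_1(c)$ are disjoint intervals. The first contains $h_0(c-1)$ and the second contains $h_1(c-1)$, with $h_0(c-1) > h_1(c-1)$. So every element of $I_0(c)$ exceeds every element of $I_1(c)$, and in particular $h_0(c)>h_1(c)$. Taking $c=X-1$ contradicts Step 2, so the set is empty.

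The step needing most care is Step 2. There one must use exactly that the only shared points of the two walks are the origin and $\mu(p_0)$, that $(0,1)$ and $(X,Y-1)$ are already occupied by $p_0$, and that abscissae stay in $[0,X]$. One should also check the degenerate case $X=1$: then the induction is vacuous, and Step 2 gives the contradiction directly, because both crossing inequalities concern the single transition between columns $0$ and $1$ but point in opposite directions.
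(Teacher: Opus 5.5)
Your proof is correct, and it takes a genuinely different route from the paper's.

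\textbf{The paper's route.} The paper argues by a minimal counterexample with respect to $\left| p_0 p_1 \right|$. It uses minimality and Proposition~\ref{RefLmStraightPreserved} to normalise the ends of $p_1$ and to force $i=j=1$, and then applies Lemma~\ref{RefLmGeometric}. That lemma needs both walks to be short, so it only yields a vertical bump somewhere. This bump is analysed via Proposition~\ref{RefLmSimpleBumps01} and nesting, Lemma~\ref{RefLmGeometric} is applied a second time, and finally a shorter member of the set is exhibited.

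\textbf{Your route.} You observe instead that the hypothesis on $\bumps^{\alpha, \conj \alpha}$ amounts to monotonicity of both walks in the $\alpha$-coordinate, and that this alone suffices. In each interior column the two walks occupy disjoint vertical intervals. Hence the order of the crossing heights $h_0(c)$, $h_1(c)$ cannot flip as $c$ runs from $0$ to $X-1$, while the two ends of $p_0$ force it to flip. In effect, your Step~3 is a version of Lemma~\ref{RefLmGeometric}, with the axes exchanged, in which shortness (monotonicity in both coordinates) is weakened to monotonicity in one coordinate. This weakening is what lets you dispense with minimality, bump cancellation and nesting entirely.

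\textbf{What each buys.} Your argument also proves slightly more than stated. You only use that $p_0$ begins with $\beta\alpha$ and ends with $\alpha\beta$, never the closing letters $\conj\beta$ of the two caps. The paper's route mainly buys uniformity with the bump-calculus toolkit (Lemma~\ref{RefLmGeometric}, Proposition~\ref{RefLmStraightPreserved}) used throughout Section~\ref{RefSectLmMain1}.

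\textbf{One point to tighten.} Proposition~\ref{RefLmUpDown}, read literally, fails for words containing $b \conj b$ or $\conj b b$. For instance, $a b \conj b \conj a$ has a horizontal peak but no $a$-bump. In Step~1 you should therefore first note that $p_0$ and $p_1$ have no factor of the form $x \conj x$, since such a factor would be a loop of $p_0 p_1$. Then the maximal plateau of a horizontal peak reads $a b^n \conj a$ or $a \conj b^n \conj a$, which is an $a$-bump. Symmetrically, valleys give $\conj a$-bumps, so monotonicity follows as you claim.
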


\begin{proof}
Suppose it is not, call it $M$ and consider an element $ \left( p_0, p_1 \right) $ of $M$ such that $ \left| p_0 p_1 \right|$ is minimal. 
We can assume $ p_0 = b \conj a^{i_0} \conj b p_2 = p_4 \conj b \conj a^{j_0} b$ for some $i_0, j_0 \in \Z^+$.
By definition of $M$, 
\begin{align}
\label{RefFormula05}
\bumps^{\Left, \Right} \left( p_0 \right) = \ze{} = 
\bumps^{\Left, \Right} \left( p_1 \right).
\end{align}
Let us set $p := p_0 p_1$, $\left( x_1, y_1 \right) := \iso p   \left( i_0 + 2 \right) = 
\left( - i_0, 0 \right)$, 
$ \left( x_2, y_2 \right) := \iso p \left( \left| p_4 \right| \right)
$,
and
$ \left( x_3, y_3 \right) := \iso p \left( \left| p_0 \right| \right)
$.
Note that it must be $0 > x_1 \geq x_2 > x_3$, because $ \bumps^{\Left} \left( p_0 \right) = \ze$. 
If it were $p_1 \left( 0 \right) = a$, then $j_0 > 1$ by simplicity, so that $
\left( 
p_0 - \left( \left| p_0 \right| - 2  \right)
, p_1 - 0
 \right)
$ would still be in $M$,
contradicting $p$'s minimality; on the other hand, it cannot be $ p_1 \left( 0 \right) = \conj a$, because $\bumps^{\Left} \left( p_1 \right) = \ze$: we must conclude $p_1 \left( 0 \right) = b$.
Indeed, we can strengthen this by noting that the first letter of $p_1$ different from $b$ cannot be $\conj a$, because $\bumps^{\Left} \left( p_1 \right) = \ze{}$, so that $p_1$ must admit $b^{i_5} a$ as a left factor, with $i_5 \in \Z^+$; 
this can be made even stronger by noting that the first letter of $p_1$ following the left factor $b^{i_5} a$ cannot be $\conj b$, due to the minimality of $p$ and Proposition~\ref{RefLmStraightPreserved}: $p_1$ admits $b^{i_5} a \gamma{}$, where $\gamma \in \Sigma_2 \sdiff \left\{ \conj b \right\}$.
Analogously, $p_1$ admits $ \delta{} a b^{j_5}$ as a right factor for some $j_5 \in \Z^+$, $\delta \in \Sigma_2 \sdiff \left\{ \conj b  \right\}$. 
By a similar minimality argument we also draw that $j_0 = i_0 = 1$; to recapitulate:
\begin{align}
\label{RefFormula06}
b^{i_5} a \gamma \text{ is a left factor of } p_1, && 
\delta a b^{j_5} \text{ is a right factor of } p_1, 
&&
i_0 = j_0 = 1.
\end{align}
Now, $y_2 = y_3$, so that we can apply~\ref{RefLmGeometric} to the pairs 
$ \left( \left( 0, 0 \right), \left( x_1, y_1 \right) \right)$ and 
$ \left( \left( x_2, y_2 \right), \left( x_3, y_3 \right) \right),$
obtaining that 
$\bumps \left( p_6 \right) \cup \bumps \left( p_1 \right) \neq \ze{}, $
where $p_6$ is the proper factor of $p_0$ joining the points $\left( x_1, y_1 \right)$ and $\left( x_2, y_2 \right)$.
Combined with \eqref{RefFormula05}, this tells us that the following set is non-empty:
\begin{align*}
\left\{ \left[ i, j \right] \in
\bumps^{\Up, \Down} \left( p_0, p_1 \right) 
. \left[ i, j \right] \subseteq \left[ i_1 , j_1 \right] \vee \left[ i, j \right] \subseteq \left[ \left| p_0  \right|, \left| p \right| - 1 \right]
 \right\}
,
\end{align*}
where $i_1, j_1$ are such that $ \factor{p_0}{i_1}{j_1} = p_6 $.
Hence, consider an element $\left[ i_2, j_2 \right]$ of it having minimal cardinality; 
to fix the ideas, we assume 
$ \left[ i_2, j_2 \right] \in \bumps^{\Up} \left( p_0, p_1 \right),$ 
with the proof for the case
$ \left[ i_2, j_2 \right] \in \bumps^{\Down} \left( p_0, p_1 \right)$ being similar.  
$p - \left\{ i_2, j_2 \right\}$ cannot be in $\climple$ for, if it were, 
$ \left( p_0, p_1 \right) - \left\{ i_2, j_2 \right\}$ would also be in $M$ by Proposition~\ref{RefLmStraightPreserved}.
Moreover, $ \cornerbumps \left( p_0, p_1 \right) = \ze$ due to~\eqref{RefFormula06}, and 
$ \cloggedbumps \left( p_0, p_1 \right) = \ze$ because of~\eqref{RefFormula05}.
By Proposition~\ref{RefLmSimpleBumps01}, then, $ \left[ i_2, j_2 \right]$ must nest some $ \left[ i_3, j_3 \right] \in \bumps^{\Up} \left( p_0, p_1 \right).$
Due to the minimality of $\left[ i_2, j_2 \right]$, 
$ \left[ i_3, j_3 \right] \not\subseteq 
\left[ i_1 , j_1 \right]$ 
and 
$ \left[ i_3, j_3 \right] \not\subseteq  \left[ \left| p_0  \right|, \left| p \right| - 1 \right].$ 
Furthermore, $\left[ i_3, j_3 \right] \neq \left[ 0, 2 \right]$ due to $\bumps^{\Right} \left( p_0, p_1 \right) = \ze$.
The only possibility is $j_3 = j_1 + 1$ and $ i_3 < j_3$.
This implies that $i_2 = \left| p_0 \right|$, so that $b a^{i_4} \conj b$ is a left factor of $p_1$, where $i_4 := j_2 - i_2 - 1 > 2$; combining this with~\eqref{RefFormula06}, $p_1$ must have the form
$ p_1 = b a p_3 a b^{j_5}$.
Consequently, each of the pairs 
$ \left(  
\left( x_2, y_2 \right) , \iso p \left( \left| p_0 \right| + 2 \right)
\right) $
and
$ \left( 
\left( x_1, y_1 \right) , \iso p \left( \left| p_0 \right| + 1 + \left| p_3 \right| + 1 \right)
\right)$
consists of points having the same abscissas, so that we can invoke~\ref{RefLmGeometric} again to draw that 
$\bumps \left( p_3 \right) \cup \bumps \left( p_6 \right) \neq \ze{}$.
In particular, the set
$
\left\{ \left[ i, j \right] \in
\bumps^{\Up, \Down} \left( p_0, p_1 \right) 
. \left[ i, j \right] \subseteq \left[ i_1 , j_1 \right] \vee \left[ i, j \right] \subseteq \left[ \left| p_0  \right|, \left| p \right| - 1 \right]
 \right\} \sdiff \left\{ \left[ i_2, j_2 \right] \right\}
$
is still not empty, so that we can take an element $\left[ i_6, j_6 \right]$ of it having minimal cardinality.
By an argument similar to that brought forward for $\left[ i_2, j_2 \right]$, we can show that it must be 
$j_6 = \left| p \right| - 1$, $i_6 < j_6$, so that $ \conj b a^{j_4} b$ is a right factor of $p_1$, where $j_4 := j_6 - i_6 - 1$.
Adding~\ref{RefLmStraightPreserved}, this means that $ \left( p_1, p_0 - \left[ 0, 2 \right] \right)$ also belongs to $M$, thus contradicting the minimality of $\left( p_0, p_1 \right)$.
\end{proof}

\begin{Lm}
\label{RefLmClearbumps2}
Let $ \left( q_0, q_1 \right) \in \candidatempty{}$ and $\beta \in \Sigma_2$; assume 
$ \clearbumps^{\beta} \left( q_0, q_1 \right) \cap \argmin_{\size} \bumps^{\beta} \left( q_0, q_1 \right) \neq \ze{}$.
Then there is a $1$-shortcut of $ \left( q_0, q_1 \right)$ which belongs to 
$ \climple \sdiff \trapezoidals{}$. 
\end{Lm}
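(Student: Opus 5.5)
By the symmetries listed at the start of Section~\ref{RefSectProofSkeleton} (swapping the components, reversing both strings, and permuting and inverting letters), I reduce to a single normal configuration. Take $\left[ i, j \right] \in \clearbumps^{\beta}_1 \left( q_0, q_1 \right) \cap \argmin_{\size} \bumps^{\beta} \left( q_0, q_1 \right)$ with $\beta = b$. Assume the bump lies in $q_0$ and ends at its last letter, so $j = \left| q_0 \right| - 1$ and $\factor{q_0}{i}{j} = b a^{m+1} \conj b$ (the $\conj a$ case is symmetric). The case $\clearbumps_0$ (bump touching the origin) is the same after swapping $q_0$ and $q_1$.

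Since $\left[ i, j \right] \in \clearbumps$, the pair $\left( q_0, q_1 \right) - \left\{ i, j \right\}$ is in $\climple$. Because $\left( q_0, q_1 \right) \in \candidatempty$ and the bump is minimal, this pair must lie in $\trapezoidals$. Cancelling the bump leaves the first letters unchanged unless $i = 0$, and $i = 0$ would make $q_0$ a bump, hence self-factoring-like behaviour that is excluded. So the trapezoidality must come from the last letters: the last letter of $q_0 - \left\{ i, j \right\}$ is $a$, hence $\rev{q_1} \left( 0 \right) = \conj a$, and originally $\rev{q_0}\left( 0 \right) = \conj b \neq \conj{\rev{q_1}(0)}$, which is consistent.

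The natural $1$-shortcut is the one supplied by Remark~\ref{RefLmBumpCancel}, and it is exactly the one that fails. I therefore look for a different detour of radius $1$. Near $\mu(q_0)$ the walk ends $\ldots b a^{m+1} \conj b$ and $q_1$ ends with $\conj a$, returning to the origin. My first candidate is the factor $a^{m+1}\conj b$ followed by the beginning of $q_1$. It cannot cross the component boundary, though, so the shortcut has to stay inside one string.

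The candidate I commit to is this. In $q_0$, replace the final factor $a \conj b$, i.e.\ $\factor{q_0}{j-1}{j}$, by $\conj b a$. This is not shorter, so it does not work. The real option is to use Proposition~\ref{RefLmUpDown} on $q_1$: $q_1$ starts at $\mu(q_0)$ and must end with $\conj a$ at the origin, while the point $\mu(q_0) + \iso b$ is occupied by the bump. I would argue that $q_1$ contains a detour of radius $1$ whose replacement by a single letter keeps simplicity, because the replacement path runs along points already adjacent to the bump's interior, which simplicity of $q_0 q_1$ keeps free. It also keeps non-trapezoidality, because the first and last letters of both strings are untouched; to guarantee this I choose the detour strictly inside $q_1$, using $\left| q_1 \right| \ge 2$ and the minimality of $\left[ i, j \right]$ to exclude endpoints.

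Concretely, I apply Lemma~\ref{RefLmGeometric} to the walk of $q_1$ and to the segment underneath the bump to show that $q_1$ is not short in the direction $\conj a$ or $b$. I then take its smallest bump. Being minimal and of length $3$, or a radius-$1$ detour $\factor{q_1}{h}{k} \eq x$, it can be replaced by $x$ without creating a loop, by Proposition~\ref{RefLmStraightPreserved} and simplicity. Lemma~\ref{RefLmTopological} is used to exclude the configuration in which $q_0$ has matching left and right factors $b a^i \conj b$ / $\conj b a^j b$ with no horizontal bumps, which is the obstruction to finding such a detour.

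I expect the main obstacle to be proving that the chosen radius-$1$ detour can be shortcut while preserving simplicity of the concatenation. I would attack it by taking the detour with minimal length and arguing by contradiction via Proposition~\ref{RefLmNest2}, as in the proof of Proposition~\ref{RefLmBumpClassification1}.
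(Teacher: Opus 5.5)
\textbf{There is a genuine gap.} Your normalisation is fine, and so is your first deduction. With the bump $b a^{m+1}\conj b$ at the end of $q_0$ and $i\neq 0$ (otherwise $q_0$ is self-factoring), the definition of $\candidatempty$ forces $\rev{q_1}(0)=\conj a$; this is the paper's $p_1(0)=a$ after reversing and inverting.

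From there on, however, there is no proof. The whole content of the lemma is the existence of a suitable radius-$1$ detour, and you only assert it.
\begin{itemize}
\item The appeal to Lemma~\ref{RefLmGeometric} does not fit. The horizontal segment under the bump ends at $\mu(q_0)$, exactly where $q_1$ starts, so the two walks meet trivially and nothing about $q_1$ follows.
\item Your ``smallest bump of length $3$'' option cannot occur in the interior of $q_1$. A bump of cardinality $3$ is automatically in $\argmin_{\size}$ for its direction, and cancelling it preserves $\climple$. So by the definition of $\candidatempty$ the cancelled pair lies in $\trapezoidals$, i.e.\ that bump touches an end letter.
\item What remains is a general radius-$1$ detour that is not a bump, and for it you give neither existence nor preservation of simplicity. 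Proposition~\ref{RefLmNest2} concerns nesting of parallel bumps and does not address this. The minimality of the bump $[i,j]$ in $q_0$ says nothing about where detours of $q_1$ lie. ``Points adjacent to the bump's interior'' are irrelevant for a detour elsewhere in $q_1$.
\item Your requirement that all four end letters stay untouched is not how the shortcut actually arises: the one the paper constructs always changes an end letter.
\end{itemize}

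The missing ingredient is the clogged-bump mechanism, reached by applying Lemma~\ref{RefLmBumpClassification4} to minimal bumps in the other directions. In the paper's normalisation ($p_0=b\conj a^{k}\conj b\cdots$, $p_1(0)=a$) the argument runs as follows.
\begin{enumerate}
\item Pin down the remaining end letters, $\rev{p_0}(0)=\rev{p_1}(0)=b$. Excluding $\rev{p_0}(0)=a$ already needs Proposition~\ref{RefLmUpDown} together with the classification of a minimal $\conj a$-bump.
\item The walk of $p_0$ goes down and then up, so there is a $\conj b$-bump. The classification plus $\candidatempty$ force its minimal representative into $\cornerbumps_1^{\Down}$, so $p_0$ also has a right factor $\conj b\conj a^{r}b$.
\item This is exactly what triggers Lemma~\ref{RefLmTopological}, producing a horizontal bump. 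You cite that lemma but never establish its hypotheses.
\item Classifying a minimal horizontal bump via \eqref{RefFormula11} leaves, after reducing one subcase to the other, only $\cloggedbumps_0^{\Right}$ or $\cloggedbumps_1^{\Left}$.
\item Proposition~\ref{RefLmCloggedShortcut} then gives the $1$-shortcut explicitly: a whole end factor of $p_0$ or of $p_1$, of radius $1$, is replaced by a single letter. Non-trapezoidality is then checked directly from the end letters already determined.
\end{enumerate}
Without this chain you have no source for the radius-$1$ detour at all.
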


\begin{proof}
Consider $ \left[ i_0, j_0 \right] \in \clearbumps^{\beta} \left( q_0, q_1 \right) \cap \argmin_{\size} \bumps^{\beta} \left( q_0, q_1 \right)$. 
We define
\begin{align*}
\left( p_0, p_1 \right) := 
\left\{ 
\begin{aligned}
\left( q_0, q_1 \right) & \text{ if } i_0 = 0
\\
\left( \rev{q_1}, \rev{q_0} \right) & \text{ if } j_0 = \left| p_0 p_1 \right| - 1
\\
\left( q_1, q_0 \right) & \text{ if } i_0 = \left| p_0 \right|
\\
\left( \rev{q_0}, \rev{q_1} \right) & \text{ if } j_0 = \left| p_0 \right| - 1.
\end{aligned}
\right.
\end{align*}
Since, by definition of $\clearbumps$, 
$ i_0 = 0 \vee j_0 = \left| p_0 p_1 \right| - 1 
\vee i_0 = \left| p_0 \right| \vee j_0 = \left| p_0 \right| - 1$, we can always obtain at least one $ \left( p_0 , p_1 \right)$ from the definition above.
It is immediate to see that $ \left( p_0, p_1 \right)$ is still in $\candidatempty{}$.
By definition of $ \left( p_0, p_1 \right) $, 
$ \left[ 0, j_1 \right] \in \clearbumps_0^{\alpha} \left( p_0, p_1 \right) \cap \argmin_{\size} \bumps^{\alpha} \left( p_0, p_1 \right)$ for some $\alpha \in \Sigma_2$, $j_1 \in \N$.
We will show that there is a $1$-shortcut of $ \left( p_0, p_1 \right)$ not belonging to $\trapezoidals{}$. 
Since this property is invariant with respect to the operations through which $ \left( p_0, p_1 \right)$ was obtained from $ \left( q_0, q_1 \right)$, this will imply the thesis.
We note that, by hypothesis, $j_1 > 1$; without loss of generality, we can further set $ \alpha = p_0 \left( 0 \right) := b$ and 
$ p_0 \left( 1 \right) := \conj a$.
It must be $p_1 \left( 0 \right) = a$, otherwise $\left( p_0, p_1 \right) - \left\{  0, j_1  \right\}$ would be in $\climple \sdiff \trapezoidals{}$.
Moreover, since $p_0 \left( 0 \right) = b$, it must also be 
$p_0 \left( \left| p_0 \right| - 1 \right) \neq \conj b$.
Hence, $p_0 \left( \left| p_0 \right| - 1 \right) \in \left\{ a, b \right\}$ (because if 
$p_0 \left( \left| p_0 \right| - 1 \right) = \conj a$, simplicity is violated).
Through similar reasoning, we also conclude that 
$p_1 \left( \left| p_1 \right| - 1 \right) = b$.
We now proceed to see that 
\begin{description}
\item[ $p_0 \left( \left| p_0 \right| - 1 \right)$ is not equal to $a$.]
Indeed, if we assume it is, we have the following implications:
\begin{enumerate}
\item
\label{RefImpl1}
by Proposition~\ref{RefLmUpDown}, $\bumps^{\Left} \left( p_0  \right) \neq \emptyset$, which allows us to consider, $ \left[ i_2, j_2 \right] \in \argmin_{\card} \bumps^{\Left} \left( p_0, p_1 \right)$.
\item
\label{RefImpl2}
$ p_0 \left( \left| p_0 \right| - 2 \right) \neq \conj b$.
\item
\label{RefImpl3}
$p_0 \left( 2 \right) \neq \conj b$ (otherwise $p_0$ would be self-factoring).
\end{enumerate}
It is now easy to see that $\doublebumps_1^{\Left}$, $\cornerbumps^{\Left}$, $\cloggedbumps^{\Left}$ and $\clearbumps_0^{\Left}$ are all empty, and from \eqref{RefImpl3} it follows $\doublebumps_0^{\Left}$ also is. 
Moreover, if it were $\left[ i_2, j_2 \right] \in \clearbumps^{\Left}_1$, then $\left( p_0, p_1 \right) - \left\{ i_2, j_2 \right\} $ would be in $\climple{} \sdiff \trapezoidals{}$ (using \eqref{RefImpl2}).
Hence, through~\ref{RefLmBumpClassification4}, we got a contradiction, and must conclude that $p_0 \left( \left| p_0 \right| - 1 \right) = b$.
\end{description}
Let us recapitulate the letters of $p_0$ and $p_1$ we know up to now:
\begin{align}
\label{RefFormula12}
p_0 \left( 0 \right) = p_0 \left( \left| p_0 \right| - 1 \right) = 
p_1 \left( \left| p_1 \right| - 1 \right) = b, 
&& 
p_0 \left( 1 \right) = 
\conj a,
&&
p_1 \left( 0 \right) = a.
\end{align}
Since 
$b {\conj a}^{j_1 - 1} \conj b$ is a left factor of $p_0$, and 
$p_0 \left( \left| p_0 \right|-1 \right) = b$, $\bumps^{\Down} \left( p_0 \right) \neq \emptyset$, 
and we can thus consider 
$ \left( i_3, j_3 \right) \in \argmin_{\size} \bumps^{\Down}$.
It is easy to see that $\doublebumps^{\Down}$, $\cornerbumps^{\Down}_0$, $\cloggedbumps^{\Down}$ are all empty.
Moreover, if it were $\left( i_3, j_3 \right) \in \clearbumps^{\Down}_1$, then it would be $p_0 \left( \left| p_0 \right| - 2 \right) = a$, 
$ j_3 = \left| p_0 \right| - 1$, $i_3 > 0$, 
and consequently $\left( p_0, p_1 \right) - \left\{ i_3, j_3 \right\} \in \climple{} \sdiff \trapezoidals{}$, which is prevented by the definition of $\candidatempty{}$.
Similarly, if it were 
$\left[ i_3, j_3 \right] \in \clearbumps^{\Down}_0 $, it would also be 
$p_1 \left( \left| p_1 \right| - 2 \right) \in \left\{ b, \conj b  \right\}$, 
$ j_3 = \left| p_0 p_1 \right| - 1$, 
$ i_3 > \left| p_0 \right|$:
this would make 
$\left( p_0, p_1 \right) - \left\{ i_3, j_3 \right\}$ a member of $ \climple{} \sdiff \trapezoidals{}$, 
against the definition of $\candidatempty{}$.
$ \left[ i_3, j_3 \right]$
must then belong to $\cornerbumps^{\Down}_1$ by~\ref{RefLmBumpClassification4};
this implies that $\conj b \conj a^{j_3 - i_3 - 1} b$ is a right factor of $p_0$.
Moreover, $ b \conj a^{j_0 - 1} \conj b$ is a left factor of $p_0$, because 
$ \left[ 0, j_0 \right] \in \clearbumps^{b} \left( p_0, p_1 \right)$ and $p_0 \left( 1 \right) = \conj a$.
Invoking Lemma~\ref{RefLmTopological}, we draw that 
\begin{align}
\label{RefFormula10}
\bumps^{a, \conj a} \left( p_0, p_1 \right) \neq \ze{}.
\end{align}
We preliminarily observe that
\begin{align}
\label{RefFormula11}
 \clearbumps_0^{a, \conj a} =  
 \clearbumps_1^{\Left} =
 \doublebumps_0^{\Right} =  
 \doublebumps_1^{a} =  
 \doublebumps_1^{\Left} = 
 \cornerbumps_0^{a, \conj a} = 
 \cornerbumps_1^{a} = 
 \cornerbumps_1^{\Left} = 
 \cloggedbumps_0^{\Left} =
 \cloggedbumps_1^{\Right} = \ze{},
\end{align}
where
 $\doublebumps_0^{\Right} = \ze{}$ 
follows from $p_1$ not being self-factoring, 
$\doublebumps_1^{a} = \ze{} $ 
is due to the simplicity of $p_0 p_1$, and the remaining facts are immediate consequence of the respective definitions and of~\eqref{RefFormula12}.
Let us finally proceed by cases using~\eqref{RefFormula10}.
\begin{description}
\item[Case $\ze{} \neq \argmin_{\size} \bumps^{\Right} \left( p_0, p_1 \right)$: ]
consider $\left[ i_4, j_4 \right]$ in it.
If it were $ \left[ i_4, j_4 \right] \in \clearbumps_1^{\Right}$, then it would be 
$ p_1 \left( 1 \right) = b$ due to~\eqref{RefFormula12} and simplicity.
This would mean $\left( p_0, p_1 \right) - \left\{ i_4, j_4 \right\} \in \climple \sdiff \trapezoidals{}$, against the definition of $ \candidatempty{}$.
The only possibility left from~\eqref{RefFormula11} and~\ref{RefLmBumpClassification4} is then
$ \left[ i_4, j_4 \right] \in \cloggedbumps_0^{\Right}$, and thesis follows immediately from~\ref{RefLmCloggedShortcut} and~\eqref{RefFormula12}.
\item[Case $ \ze{} \neq \argmin_{\size} \bumps^{\Left} \left( p_0, p_1 \right)$: ]
consider $ \left[ i_5, j_5 \right]$ in it, and suppose 
$ \left[ i_5, j_5 \right] \in \doublebumps_0^{\Left}$.
This implies that $b \conj a \conj b^{k_0} a$ is a left factor of $p_0$ for some $k_0 \geq 3$.
Since $p_1 \left( 0 \right) = a $ and $ p_1 \left( \left| p_1 \right| - 1 \right) = b$, $\gamma b^{k_1}$ must be a right factor of $p_1$ for some $\gamma \neq b$, $k_1 \in \Z^+$.
Due to simplicity, $k_1 \leq k_0 - 2$; in turn, this implies that $\gamma$ cannot be $a$, otherwise simplicity would be again violated.
Ultimately, $\conj a b^{k_1}$ must be a right factor of $p_1$; via~\ref{RefLmUpDown}, this implies $ \bumps^{\Right} \left( p_1 \right) \neq \ze{}$.
In particular, we can trigger the previous case and get the thesis.
Hence, the only possibility left from~\eqref{RefFormula11} and~\ref{RefLmBumpClassification4} is 
$ \left[ i_5, j_5 \right] \in \cloggedbumps_1^{\Left}$.
This implies (Proposition~\ref{RefLmSymmetry}) that $\cloggedbumps_0^{\Left} \left( p_1, p_0 \right) \neq \ze{}$, so that there must be (Proposition~\ref{RefLmCloggedShortcut}) either a $1$- shortcut of $p_1$ having the form $(0, l, \conj a)$ or a $1$-shortcut of $p_0$ having the form 
$ \left( l, \left| p_0 \right| - 1, a \right)$.
In the first case, the first letter of the obtained shortcut of $p_1$ becomes $\conj a$, while the first letter of $p_0$ remains $b$, and therefore the obtained pair is still not in $\trapezoidals{}$.
Similarly, in the second case, the last letter of the obtained shortcut of $p_0$ is $a$, while the last letter of $p_1$ remains $b$, and hence the obtained shortcut pair is still not in $\trapezoidals{}$.
\end{description}
\end{proof}

\begin{Cor}
\label{RefLmClearDoubleBumps}
\sloppy{} 
Assume 
$( \clearbumps^{\beta} \left( p_0, p_1 \right)$ $\cup$
$ \doublebumps^{\beta} \left( p_0 ,p_1 \right))$ $\cap$
$\argmin_{\size} \bumps^{\beta} \left( p_0, p_1 \right) 
\neq \ze{}$. 
for some $\left( p_0, p_1 \right) $$ \in$$ \candidatempty{}$ and $\beta $$\in$$ \Sigma_2$.
Then there is a $1$-shortcut of $ \left( p_0, p_1 \right)$ which belongs to 
$ \climple \sdiff \trapezoidals{}$.
\fussy{}
\end{Cor}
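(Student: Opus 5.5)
The $\clearbumps$ part of the hypothesis is exactly Lemma~\ref{RefLmClearbumps2}, so the plan is to reduce the $\doublebumps$ part to it. Fix $\left[ i, j \right] \in \left( \clearbumps^{\beta} \cup \doublebumps^{\beta} \right) \cap \argmin_{\size} \bumps^{\beta} \left( p_0, p_1 \right)$. If $\left[ i, j \right] \in \clearbumps^{\beta}$, Lemma~\ref{RefLmClearbumps2} gives the thesis at once. So assume $\left[ i, j \right] \in \doublebumps^{\beta} \left( p_0, p_1 \right)$.

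Using the invariances discussed at the beginning of Section~\ref{RefSectProofSkeleton} (reversal of both strings, swapping the two strings), replace $\left( p_0, p_1 \right)$ by an equivalent pair, exactly as at the start of the proof of Lemma~\ref{RefLmClearbumps2}. These operations preserve membership in $\candidatempty$ and the existence of a $1$-shortcut in $\climple \sdiff \trapezoidals$. After this normalisation we may assume $i = 1$ and that $\left[ 0, 2 \right] \in \bumps^{\gamma} \left( p_0, p_1 \right)$ for some $\gamma \ortho \beta$, with $\left[ 0, 2 \right]$ a factor of $p_0$.

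Next, show that $\left[ 0, 2 \right]$ is a minimal $\gamma$-bump lying in $\clearbumps^{\gamma}_0$. Minimality is immediate: since $\left( p_0, p_1 \right) \in \climple \sdiff \trapezoidals \sdiff \selffact$, no bump of $\left( p_0, p_1 \right)$ has cardinality below $3$, as already observed in the proof of Lemma~\ref{RefLmBumpClassification4}. For membership in $\clearbumps^{\gamma}_0$, note that $i = 0$. It remains to check that $\left( p_0, p_1 \right) - \left\{ 0, 2 \right\} \in \climple$. Closure is clear. Simplicity holds because cancelling a bump of length $3$ only removes a vertex of the walk and cannot create a repeated point, which is the remark made in the proof of Lemma~\ref{RefLmBumpClassification4}. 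Thus $\left[ 0, 2 \right] \in \clearbumps^{\gamma} \cap \argmin_{\size} \bumps^{\gamma} \left( p_0, p_1 \right)$, and Lemma~\ref{RefLmClearbumps2} applied with $\gamma$ in place of $\beta$ yields a $1$-shortcut in $\climple \sdiff \trapezoidals$. Undoing the normalisation transfers this shortcut back to the original pair.

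The main obstacle is the simplicity check for the length-$3$ bump. One must verify that $\iso{p_0 p_1}$ never revisits the point created when $\gamma \delta \conj\gamma$ is collapsed to $\delta$. I would argue this directly: the new walk visits a subset of the old vertices, shifted only along the two indices removed, so any repetition would already be a repetition in the old walk, or would force $p_0 p_1$ to contain a closed proper factor, contradicting \ref{RefReqSimple}-type simplicity. If this fails in some degenerate case, with $\left| p_0 \right| = 3$ say, I would fall back to excluding it explicitly. In that case $p_0$ would consist of the bump alone, and the $\trapezoidals$ or self-factoring conditions would rule it out.
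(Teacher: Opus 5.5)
Your proposal is correct and follows the paper's route. The $\doublebumps$ case is reduced to Lemma~\ref{RefLmClearbumps2} by observing that the adjacent orthogonal bump of cardinality $3$, sitting at an end of $p_0$ or $p_1$, is minimal (no bump has cardinality below $3$) and lies in $\clearbumps$ (its cancellation deletes the two intermediate vertices of the walk, not one, so closure and simplicity are preserved). Your normalisation step and the fallback for degenerate cases are harmless but unnecessary, since in each of the four cases of the definition of $\doublebumps$ the length-$3$ bump already lies in $\clearbumps_0$ or $\clearbumps_1$.
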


\begin{proof}
Consider 
$\left[ i, j \right] \in$
$( \clearbumps^{\beta} \left( p_0, p_1 \right)$ $ \cup$  
$\doublebumps^{\beta} \left( p_0 ,p_1 \right) )$
$\cap \argmin_{\size} \bumps^{\beta} \left( p_0, p_1 \right).$
If $\left[ i, j \right] \in \clearbumps^{\beta} \left( p_0, p_1 \right) 
\cap \argmin_{\size} \bumps^{\beta} \left( p_0, p_1 \right),$
we apply Lemma~\ref{RefLmClearbumps2}.
Otherwise, 
$\left[ i, j \right] \in
\doublebumps^{\alpha} \left( p_0 ,p_1 \right)
\cap \argmin_{\size} \bumps^{\alpha} \left( p_0, p_1 \right)$
for some $\alpha \ortho \beta$, and we can still apply Lemma~\ref{RefLmClearbumps2}. 
\end{proof}


\begin{Lm}
\label{RefLmCornerBumps}
Given $ \left( p_0, p_1 \right) \in \candidatempty{},$
assume that $\cornerbumps^{\beta} \left( p_0, p_1 \right) \neq \ze{}.$
Then there is a $1$-shortcut of $ \left( p_0, p_1 \right)$ which belongs to 
$ \climple \sdiff \trapezoidals{}$. 
\end{Lm}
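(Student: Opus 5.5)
Since $\cornerbumps^{\beta} = \cornerbumps_0^{\beta} \cup \cornerbumps_1^{\beta}$, I would split into these two cases. I would use the invariances listed at the beginning of Section~\ref{RefSectProofSkeleton} (reversal, swapping the two components, letter permutations) to fix a normal form, as in the proof of Lemma~\ref{RefLmClearbumps2}. None of these invariances affect membership in $\candidatempty$ or in $\climple \sdiff \trapezoidals$.

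\emph{Case $\cornerbumps_0^{\beta}$.} After normalisation I may assume $\left[ 0, j \right] \in \bumps^{\beta} \left( p_0, p_1 \right)$ with $\beta = b$, $p_0 = b \conj a^{j-1} \conj b \cdots$, and last letter of $p_1$ equal to $a$ (the inverse of $p_0 \left( 1 \right) = \conj a$). The resulting picture is a walk $\iso{p_0 p_1}$ that starts upward, runs left, comes down, and finally re-enters the origin from the left. Then $p_1$ ends with $a$ while $p_0$ starts with $b$; closure and simplicity force the walk to "wrap around" the origin.

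The shortcut I would aim for is a radius-$1$ replacement at one end of a component, in the style of Proposition~\ref{RefLmCloggedShortcut}. Concretely, take the unique index $l$ with $\iso{p_0 p_1} \left( l \right) = \iso{\conj a}$ (or $\iso b$), if it exists, and replace the segment of $p_1$ from $l$ to the end by the single letter $a$, or the segment of $p_0$ from $0$ to $l$ by $b$. This keeps the first letter of $p_0$ and the last letter of $p_1$ unchanged, so the pair stays outside $\trapezoidals$, and simplicity survives since we only delete a portion of a simple closed walk and add a single step. If no such point is touched, I would instead look at $\argmin_{\card} \bumps^{\Right}$ or $\bumps^{\Left}$. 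Such bumps exist by Proposition~\ref{RefLmUpDown}, since the walk goes left and must come back. I would then run the exclusion scheme of Lemma~\ref{RefLmClearbumps2}, listing which of $\clearbumps, \doublebumps, \cornerbumps, \cloggedbumps$ are empty for the relevant direction via the boundary letters. Lemma~\ref{RefLmBumpClassification4} then forces either a clear or double bump, handled by Corollary~\ref{RefLmClearDoubleBumps}, or a clogged bump, handled by Proposition~\ref{RefLmCloggedShortcut} together with Proposition~\ref{RefLmSymmetry}.

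\emph{Case $\cornerbumps_1^{\beta}$.} Here some $\left[ k-1, k+1 \right]$ with $k \in \left\{ i, j \right\}$ is a length-$3$ orthogonal bump of $p_0 p_1$ straddling the boundary between $p_0$ and $p_1$ (or the wrap-around). Proposition~\ref{RefLmSymmetry} converts this into a $\cornerbumps_0$-type configuration for the swapped pair $\left( p_1, p_0 \right)$, which still lies in $\candidatempty$ by the invariances, and the first case then applies.

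The main obstacle will be the first case when the point $\iso{\conj a}$ (or $\iso b$) is not touched. There Lemma~\ref{RefLmTopological} should be the key tool. The corner bump gives a left factor $b \conj a^{i} \conj b$, and the symmetric constraint at the other end should give a right factor of the form $\conj b \conj a^{j} b$ or its mirror image. Lemma~\ref{RefLmTopological} then forces $\bumps^{a, \conj a} \left( p_0, p_1 \right) \neq \ze$, which feeds the bump-classification argument above.
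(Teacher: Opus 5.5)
Your reduction of $\cornerbumps_1^{\beta}$ to $\cornerbumps_0^{\beta}$ via Proposition~\ref{RefLmSymmetry} matches the paper, as does the normal form $p_0 = b\conj a^{j-1}\conj b\cdots$ with $p_1$ ending in $a$. The concrete shortcut you then propose is vacuous, however. In this normal form $\iso{p_0p_1}$ visits $\mu(b)$ at index $1$ and $\mu(\conj a)$ at index $\left|p_0p_1\right|-1$, and by simplicity nowhere else. The segment you would replace is therefore the single step $b$ or $a$ itself, and nothing is shortened. The branch ``if no such point is touched'' never occurs, so everything rests on your fallback, and there the load-bearing claims are unsupported:
\begin{itemize}
\item Proposition~\ref{RefLmUpDown} applied to the closed walk $p_0p_1$ only yields bumps of the concatenation. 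These may straddle the boundary and need not lie in $\bumps\left(p_0,p_1\right)$.
\item The right factor $\conj b\conj a^{k}b$ of $p_0$ that Lemma~\ref{RefLmTopological} needs does not follow from any ``symmetric constraint''. In one of the two relevant cases it does not exist at all.
\item $\cornerbumps$ is not excluded ``by boundary letters'' in the directions you need, as your appeal to Lemma~\ref{RefLmBumpClassification4} assumes.
\end{itemize}

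The missing idea is the paper's case analysis. First one shows $p_1(0), p_0(\left|p_0\right|-1) \in \{a,b\}$: otherwise $p_1=\conj a\cdots a$ or $p_0=b\cdots\conj b$ is self-factoring, or the pair lies in $\trapezoidals$.

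\emph{If $p_0$ ends with $a$}, Proposition~\ref{RefLmUpDown} gives a $\conj a$-bump inside $p_0$. For a minimal such bump, $\cornerbumps_0^{\Left}$ and $\cloggedbumps^{\Left}$ are empty. But $\cornerbumps_1^{\Left}$ is not excluded by boundary letters: it forces $p_0=b\conj a\cdots\conj b a$, and is ruled out only because that string is self-factoring. This case therefore ends in Corollary~\ref{RefLmClearDoubleBumps}.

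\emph{If $p_0$ ends with $b$}, you must first study a minimal $\conj b$-bump, on the same axis as the corner bump. Everything except clear/double bumps and $\cornerbumps_1^{\Down}$ is excluded. The case $\cornerbumps_1^{\Down}$ is exactly what yields $p_1(0)=a$ and the right factor $\conj b\conj a^k b$ of $p_0$. Only then does Lemma~\ref{RefLmTopological} produce a horizontal bump, for which $\cornerbumps^{\{a,\conj a\}}=\ze$. The clogged case is then handled by Proposition~\ref{RefLmCloggedShortcut} together with Proposition~\ref{RefLmSymmetry}. The resulting shortcut does change one boundary letter into $a$ or $\conj a$, contrary to your claim that boundary letters are preserved. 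It stays outside $\trapezoidals$ precisely because $p_1(0)=a$ and $p_0$ ends with $b$, and your outline never establishes either fact.
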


\begin{proof}
We can assume $\beta = b$ and, by Proposition~\ref{RefLmSymmetry}, 
$\cornerbumps^b_0 \left( p_0, p_1 \right) \neq \ze{}$.
Consider then $\left[ i_0, j_0+1 \right] \in \cornerbumps^b_0 \left( p_0, p_1 \right)$.
By possibly employing the transformation 
$
\left( p_0, p_1 \right) \mapsto
\left( 
\rev { \conj {p_1} }, \rev {\conj {p_0} } 
\right),
$
we can further impose that $i_0 = 0$, and finally $p_0 \left( 1 \right) =\conj a$ by possibly swapping $a$'s with $\conj a$'s.
We therefore must conclude that $p_0$ has the form $p_0 = b \conj a ^ j_0 \conj b p_1$ for some $p_1 \in \letters_2^*$, and that $p_1$'s last letter is $a$.
Now, it cannot be $p_1 \left( 0 \right) = \conj a$, otherwise $p_1$ would be self-factoring.
Moreover, it cannot be $p_1 \left( 0 \right) = \conj b$, which would make 
$\left( p_0 , p_1 \right) \in \trapezoidals{}$.
Proceeding with similar reasoning, we obtain $
\left\{ 
p_0 \left( \left| p_0 \right| -1 \right)
, p_1 \left( 0 \right)
 \right\} \subseteq 
\left\{ a, b  \right\}$.
\begin{description}
\item
[Case $p_0 \left( \left| p_0 \right| - 1 \right) = a$]
Then $\bumps^{\Left}\left( p_0 \right) \neq \ze{}$ by Proposition~\ref{RefLmUpDown}, and we can consider 
$\left[ i_1, j_1 \right] \in$ \\ $\argmin_{\card} \bumps^{\Left} \left( p_0, p_1 \right)$.
If $\left[ i_1, j_1 \right] \in \clearbumps^{\Left}\left( p_0, p_1 \right) 
\cup \doublebumps^{\Left} \left( p_0, p_1 \right)$, then we invoke
Corollary~\ref{RefLmClearDoubleBumps}.
$\cornerbumps_0^{\Left} \left( p_0, p_1 \right) = \ze$ by simplicity, and
$\cloggedbumps^{\Left} \left( p_0, p_1 \right) = \ze$,
hence it only remains to check the case $ \left[ i_1, j_1 \right] \in \cornerbumps_1^{\Left} \left( p_0, p_1 \right)$, 
which implies $p_1 \left( 0 \right) = b$ and $p_0 \left( \left| p_0 \right| - 2 \right) = \conj b$.
But the latter would imply that $p_0$ is self factoring.
\item[Case $p_0 \left( \left| p_0 \right| - 1 \right) = b$]
This implies that $\bumps^{\Down} \left( p_0 \right) \neq \ze$ via Proposition~\ref{RefLmUpDown}, so that we can consider 
$ \left[ i_1, j_1 \right] \in \argmin_{\card} \bumps^{\Down} \left( p_0, p_1 \right)$.
If $\left[ i_1, j_1 \right] \in \clearbumps^{\Down} \left( p_0, p_1 \right) \cup \doublebumps^{\Down} \left( p_0, p_1 \right)$, 
then we invoke Corollary~\ref{RefLmClearDoubleBumps}.
Moreover, $\cornerbumps_0^{\Down} \left( p_0, p_1 \right) = \ze$, and $\cloggedbumps_1^{\Down} \left( p_0, p_1 \right) = \ze$.
We also note that if $\cloggedbumps_0^{\Down} \left( p_0, p_1 \right) \neq \ze$, we would violate simplicity.
Ultimately, then, we only need to check the case $\left[ i_1, j_1 \right] \in \cornerbumps_1^{\Down} \left( p_0, p_1 \right)$, 
which implies $p_1 \left( 0 \right) = a$ and $p_0=p_2 \conj b \conj a^k b$ for some $k \in \Z^+$, $p_2 \in \letters_2^*$;
as a consequence, $\bumps^{\left\{ a, \conj a \right\}} \left( p_0 \right) \neq \ze$ via 
Lemma~\ref{RefLmTopological}, and we can consider $ \left[ i_2, j_2 \right] \in 
\argmin_{\size} \bumps^{\left\{ a, \conj a \right\}} \left( p_0, p_1 \right)$.
If 
$ \left[ i_2, j_2 \right] \in 
\left( 
\clearbumps^{ \left\{ a, \conj a \right\}} \left( p_0, p_1 \right)
\cup
\doublebumps^{ \left\{ a, \conj a \right\} } \left( p_0, p_1 \right) 
\right),
$
then we apply Corollary~\ref{RefLmClearDoubleBumps}.
Otherwise, we note that $\cornerbumps^{\left\{ a, \conj a \right\}} \left( p_0, p_1 \right) = \ze$, so that it must be $ \left[ i_2, j_2 \right] \in 
\cloggedbumps^{\left\{ a, \conj a \right\}} \left( p_0, p_1 \right)
=
\cloggedbumps^a_0 \left( p_0, p_1 \right) \cup 
\cloggedbumps^{\conj a}_1 \left( p_0, p_1 \right),
$
If $\left[ i_2, j_2 \right] \in 
\cloggedbumps^a_0 \left( p_0, p_1 \right)$, we apply Proposition~\ref{RefLmCloggedShortcut} and check that the obtained shortcut has the wanted property.
Otherwise, if 
$ \left[ i_2, j_2 \right] \in  
\cloggedbumps^{\conj a}_1 \left( p_0, p_1 \right),
$
we consider $ \left( p_1, p_0 \right)$, apply Proposition~\ref{RefLmSymmetry} and again 
Proposition~\ref{RefLmCloggedShortcut}, finally checking
that the obtained shortcut has the wanted property.
\end{description}
\end{proof}

\subsection{Bumps in $\cloggedbumps$ and final proof}

\begin{Lm}
\label{RefLmCloggedBumps}
Given $ \left( p_0, p_1 \right) \in \candidatempty$, assume 
$ \cloggedbumps \left( p_0, p_1 \right) \neq \ze$.
Then there is a $1$-shortcut of $ \left( p_0, p_1 \right)$ which belongs to 
$ \climple \sdiff \trapezoidals$. 
\end{Lm}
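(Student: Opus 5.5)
The plan is to reduce everything to Proposition~\ref{RefLmCloggedShortcut}, and then check that the resulting $1$-shortcut keeps us in $\climple \sdiff \trapezoidals$.

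First, normalise the situation. Since $\cloggedbumps = \cloggedbumps_0 \cup \cloggedbumps_1$, Proposition~\ref{RefLmSymmetry} lets us pass from $\cloggedbumps_1^{\alpha}\left( p_0, p_1 \right) \neq \ze$ to $\cloggedbumps_0^{\alpha}\left( p_1, p_0 \right) \neq \ze$. Swapping the components of the pair preserves $\candidatempty$, the notion of $1$-shortcut, and membership in $\climple \sdiff \trapezoidals$. So we may assume some $\left[ i, j \right] \in \cloggedbumps_0^{\alpha}\left( p_0, p_1 \right) \cap \bumps_k^{\alpha}\left( p_0,p_1 \right)$. Using the invariances of Section~\ref{RefSectProofSkeleton} (letter swaps and reversal of both strings), we fix $\alpha = b$ and $k = 0$, so the bump lies in $p_0$. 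By the definition of $\cloggedbumps^{b}\left( p_0 p_1 \right)$, we then have $p_0 p_1 = \beta^n \conj b\, r\, b\, \beta^m$ with $\beta // a$ and $m+n>0$. Moreover, the initial and final horizontal runs, shifted up by $\iso b$, lie inside the top of the bump.

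Next, apply Proposition~\ref{RefLmCloggedShortcut}. It gives a unique $l$ with $\iso{p}(l) = \iso b$ and the $1$-shortcut $\left( 0, l-1, b \right)$ of $p_0$. Call the new pair $\left( p_0', p_1 \right)$, where $p_0' = b \, \factor{p_0}{l}{\left| p_0 \right| - 1}$.
\begin{itemize}
\item \emph{Closure.} This is immediate, since $p_0' \eq p_0$.
\item \emph{Simplicity.} The new walk is the old walk restricted to indices $\geq l$, preceded by the single step from the origin to $\iso b$. Any repeated point or loop would either already be a loop of $p_0 p_1$, or would force the old walk to revisit $\ve 0$ or $\iso b$, contradicting the uniqueness of $l$ or the simplicity of $p_0 p_1$.
\end{itemize}
So $\left( p_0', p_1 \right) \in \climple$.

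The main obstacle is showing $\left( p_0', p_1 \right) \notin \trapezoidals$. The first letter of $p_0'$ is now $b$, while the last letters of $p_0'$ and $p_1$ are unchanged. Since the original pair is not in $\trapezoidals$, it remains only to exclude $p_1(0) = \conj b$.
\begin{itemize}
\item If $n>0$, then $p_0$ begins with $\beta$. Because the clogged configuration has $\conj b$ immediately after the run $\beta^n$, the point $\iso{p}(n) + \iso b$ lies on the top of the bump. I would argue, using the shape $p_0 p_1 = \beta^n \conj b\, r\, b\, \beta^m$, that $p_1$ must end with $\beta^m$ or with $b$. In either case it cannot start with $\conj b$: a $\conj b$ step starting at $\mu(p_0)$ would point toward the region under the bump, which is blocked by simplicity.
\item If $n = 0$, then $p_0$ starts with $\conj b$, and so by $\left( p_0,p_1 \right) \notin \trapezoidals$ we get $p_1(0) \neq b$. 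Here I must check directly that $p_1(0) = \conj b$ would make $p_1$ contain the end run $b\beta^m$ positioned so as to produce a loop, or make $p_0$ self-factoring (excluded since $\left( p_0,p_1 \right) \notin \selffact$).
\end{itemize}
I expect this case analysis on $n$, $m$ and the letters adjacent to $\mu(p_0)$ to be the delicate part. If a case resists, the fallback is to use the alternative shortcut given in the $k=1$ branch of Proposition~\ref{RefLmCloggedShortcut}, applied to the symmetric pair, since one of the two ends always becomes $b$ or $\conj b$ in a compatible way.
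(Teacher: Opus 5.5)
Your reduction to the shortcut of Proposition~\ref{RefLmCloggedShortcut} and your check that it stays in $\climple$ are fine. Here ``reversal'' must be read as $(p_0,p_1)\mapsto(\rev{p_1},\rev{p_0})$, which is what keeps the bump in $\cloggedbumps_0$. The gap is exactly at the step you flag: nothing you write excludes $p_1(0)=\conj b$, and the reasons you sketch do not work.

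For $n>0$, simplicity does not block a $\conj b$-step out of $\mu(p_0)$. Take $p_0=a\conj b a b b\conj a\conj a\conj a\conj a\conj b\conj b$ and $p_1=\conj b a b b a$. Then $(p_0,p_1)\in\climple\sdiff\trapezoidals$, and the $b$-bump $[4,9]$ of $p_0$ lies in $\cloggedbumps_0^{b}(p_0,p_1)$ with $\beta=a$, $n=m=1$. Yet $p_1(0)=\conj b$, so your shortcut produces $(b\conj a\conj a\conj b\conj b,\,p_1)\in\trapezoidals$. This pair is not in $\candidatempty$. However, your argument for this case uses only simplicity and the shape $\beta^n\conj b\,r\,b\,\beta^m$, so it cannot be right as stated. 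Any exclusion, if true at all, would have to use the non-self-factoring and minimal-bump hypotheses, which you never invoke.

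For $n=0$, the configuration $p_1(0)=\conj b$ is, after undoing your normalisation, exactly the last case of the paper's proof: bump in $p_1$, $\rev{p_1}(0)=\rev{p_0}(0)=b$. The paper does not rule this case out; it handles it by a different argument.

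Your fallback gives nothing new. The $k=1$ branch of Proposition~\ref{RefLmCloggedShortcut} applied to $(\rev{p_1},\rev{p_0})$ returns the same shortcut read backwards. Swapping without reversing turns the bump into an element of $\cloggedbumps_1$, to which that proposition does not apply.

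Two ideas are missing.

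First, one should not always cut from the origin. The paper normalises to $\alpha=b$, $p_0(0)=a$ (so $n>0$). When the bump lies in $p_0$, it observes that $\iso p(1)+\mu(b)=(1,1)$ is also on the top of the bump. So it jumps from $(1,0)$ straight up to $(1,1)$, i.e.\ replaces $\factor{p_0}{1}{k-1}$ by $b$, where $\iso p(k)=(1,1)$. This $1$-shortcut keeps both end letters of $p_0$, so non-membership in $\trapezoidals$ is inherited for free; in the example above it gives $ab\conj a\conj a\conj a\conj b\conj b$. The symmetric move at the end of $p_1$ handles a bump in $p_1$ with $m>0$. The shortcut of Proposition~\ref{RefLmCloggedShortcut} (yours, up to symmetry) is used only when the bump is in $p_1$, $m=0$ and $\rev{p_0}(0)\neq b$.

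Second, in the remaining case the clogged bump is abandoned altogether. Proposition~\ref{RefLmUpDown} gives $\bumps^{\Down}(p_1)\neq\ze$. The final letters $b$ of both components force $\cloggedbumps^{\Down}(p_0,p_1)=\ze$. Hence, by Lemma~\ref{RefLmBumpClassification4}, a minimal $\conj b$-bump lies in $\clearbumps\cup\doublebumps\cup\cornerbumps$. Then Corollary~\ref{RefLmClearDoubleBumps} or Lemma~\ref{RefLmCornerBumps} supplies the required shortcut.
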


\begin{proof}
Set $p := p_0 p_1$.
We can assume $p_0 \left( 0 \right) = a$ and the existence of $i$, $j$ such that
$ \left[ i, j \right] \in \cloggedbumps_0^{\Up} \left( p_0, p_1 \right)$.
Now, by simplicity and by the definition of $\cloggedbumps_0$, it must be $ \rev{p_1} \left( 0 \right) \in \left\{ a, b \right\} $.
Additionally, by definition of $\bumps^{\Up} \left( p_0, p_1 \right)$, it must be
either $j < \left| p_0 \right|$ or $i \ge \left| p_0 \right|$.
\begin{description}
\item[Case $ j < \left| p_0 \right|$: ]
then there is a unique $k$ such that $\iso {p_0} \left( k \right) = \left( 1,1  \right)$, and $k$ must be in $ \left]i , j  \right[ $;
it is immediate to check that 
$ \left( 2, k-1, b \right) $ is a $1$-shortcut of $p_0$.
Since this shortcut does not change the first and last letters of $p_0$, we obtain the thesis.
\item[Case $i \ge \left| p_0 \right|$ and $\rev{p_1} \left( 0 \right) = a$: ]
similar to the previous case.
\item[Case $i \ge \left| p_0 \right|$, $\rev{p_1} \left( 0 \right) = b$ and 
$ \rev{p_0} \left( 0 \right) \neq b$: ]
applying Proposition~\ref{RefLmCloggedShortcut}, we obtain a shortcut of $p_1$ changing the last letter of $p_1$ into $\conj b$. 
This allows thesis since the last letter of $p_0$ is not $b$.
\item[Case $i \ge \left| p_0 \right|$, $\rev{p_1} \left( 0 \right) = b$ and 
$ \rev{p_0} \left( 0 \right) = b$: ]
then the sequence $\iso p$ reaches the point $\left( 0, 1 \right) $ at some index $k \in \left[ \left| p_0 \right|, \left| p \right| - 2 \right]$, while
$ p \left( \left| p \right| - 1 \right) = \left( 0, -1 \right)$ and 
$ p \left( \left| p \right| \right) = \left( 0, 0 \right)$.
By Proposition~\ref{RefLmUpDown}, then, $\bumps^{\Down} \left( p_1 \right) \neq \ze$.
But $\rev{p_1} \left( 0 \right) = b$ implies $\cloggedbumps_0^{\Down} \left( p_0, p_1 \right) = \ze$, and $\rev{p_0} \left( 0 \right) = b$ implies $\cloggedbumps_1^{\Down} \left( p_0, p_1 \right)= \ze$.
Hence it must be $\clearbumps^{\Down} \left( p_0, p_1 \right) \cup \doublebumps^{\Down} \left( p_0, p_1 \right) \cup \cornerbumps^{\Down} \left( p_0, p_1 \right) \neq \ze$ by Lemma~\ref{RefLmBumpClassification4}, and we can use
Corollary~\ref{RefLmClearDoubleBumps} or Lemma~\ref{RefLmCornerBumps}.
\end{description}
\end{proof}

\begin{proof}[Proof of Theorem~\ref{RefLmMain1}]
Using hypotheses, $ \left( p_0, p_1 \right) \in \candidatempty{}$ by definition of $\candidatempty{}$.
Using hypothesis~\eqref{RefAss1} and Lemma~\ref{RefLmBumpClassification4}, we have 
$ \left( 
\argmin_{\size} \bumps^{\alpha} \left( p_0, p_1 \right)  \right)
\cap
\left(
\clearbumps^{\alpha} \left( p_0, p_1 \right) 
\cup \doublebumps^{\alpha}  \left( p_0, p_1 \right)
\cup \cornerbumps^{\alpha}  \left( p_0, p_1 \right)
\cup \cloggedbumps^{\alpha} \left( p_0, p_1 \right)
\right) \neq \ze
$
for some $\alpha \in \letters_2$.
If $ \left( \argmin_{\size} \bumps^{\alpha} \left( p_0, p_1 \right)  \right)
\cap
\left(
\clearbumps^{\alpha} \left( p_0, p_1 \right) 
\cup \doublebumps^{\alpha}  \left( p_0, p_1 \right)
\right) \neq \ze,$
we apply Corollary~\ref{RefLmClearDoubleBumps}, if 
$\left( \argmin_{\size} \bumps^{\alpha} \left( p_0, p_1 \right)  \right)
\cap \cornerbumps^{\alpha}  \left( p_0, p_1 \right) \neq \ze$
we apply Lemma~\ref{RefLmCornerBumps}, and if 
$ \left( \argmin_{\size} \bumps^{\alpha} \left( p_0, p_1 \right)  \right)
\cap \cloggedbumps^{\alpha}  \left( p_0, p_1 \right) \neq \ze$ 
we apply Lemma~\ref{RefLmCloggedBumps}.
\end{proof}


\section{Proof of Theorem~\ref{RefLmMain2}}
\label{RefSectLmMain2}

\begin{Prop}
\label{RefLmSelfFact1}
Let $p_0, q_1, q_2 \in \Sigma_2^*$, $m_1, m_2 \in \N$, $\alpha, \beta \in \Sigma_2$, 
$p := p_0 \beta^{m_1+m_2} q_1 q_2 \in \Sigma_2^*$, and assume
\begin{enumerate}
\item
\label{RefAssFact2}
$p_0 \beta^{m_1} q_2 \in \left[ \ze \right]_{\sim} \sdiff \left\{ \ze \right\}.$
\setcounter{propertyCounter}{\value{enumi}}
\end{enumerate}
Moreover, suppose that, if 
$q_1 \neq \ze$ and $\min \left\{ m_1, m_2 \right\} > 0$,
then the following requirements all hold:
\begin{enumerate}
\setcounter{enumi}{\value{propertyCounter}}
\item
\label{RefAssOrtho2}
$\alpha \ortho \beta$;
\item
\label{RefAssNoInversions}
No factor of $q_1 q_2$ belongs to 
$\left\{ \beta \conj \beta, \conj \beta \beta \right\}$;
\item
\label{RefAssSimple}
no non-empty, proper left factor of $p_0 \beta^{m_1 + m_2}$ belongs to $\left[ \ze \right]_\sim$;
\item
\label{RefAssMinimalBump}
$\forall \left[ i, j \right] \in \bumps^{\alpha} \left( q_1 q_2 \right).\ 
j-i \ge m_1+m_2
$;
\item
\label{RefAssNoTrivialFactors}
If $p_0 = \ze$ and $\conj \beta$ is a right factor of $p$, then $\conj{\beta}^n$ is a right factor of $p$ for some $n \ge m_1 + m_2$.
\end{enumerate}
Then $p_0 \alpha \beta^{m_1 + m_2} \conj \alpha q_1 q_2$ is self-factoring.
\end{Prop}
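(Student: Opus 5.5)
The plan is to produce an explicit proper factor of $r := p_0 \alpha \beta^{m_1+m_2} \conj \alpha q_1 q_2$ that is equivalent to $r$. By hypothesis~\ref{RefAssFact2}, $\mu \left( r \right) = \mu \left( \beta^{m_2} q_1 \right)$. Equivalently, I will look for a decomposition $r = x y z$ with $xz$ closed and $xz \neq \ze$. The factor $y$ is then proper and $y \eq r$.

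\textbf{Degenerate cases} ($q_1 = \ze$ or $\min \left\{ m_1, m_2 \right\} = 0$). Here none of the extra hypotheses is needed and a factor can be read off directly.
\begin{itemize}
\item If $m_2 = 0$, take the factor $q_1$. It is proper because $r$ contains $\alpha$.
\item If $q_1 = \ze$ (and $m_2 > 0$), take $\beta^{m_2}$ inside the block $\beta^{m_1+m_2}$. It is proper because $\alpha$ lies outside it.
\item If $m_1 = 0$, take $\alpha \beta^{m_2} \conj \alpha q_1 \eq \beta^{m_2} q_1$. It is proper because $p_0 q_2 \eq \ze$ is non-empty by~\ref{RefAssFact2}, so $p_0 q_2 \neq \ze$.
\end{itemize}

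\textbf{Main case} ($q_1 \neq \ze$, $m_1, m_2 > 0$). I will search for the decomposition with
\[
x = p_0 \alpha \beta^{t} \quad (0 \le t \le m_1+m_2), \qquad z \text{ a right factor of } q_1 q_2 \text{ with } \mu(z) = -\mu(p_0) - \mu(\alpha) - t\,\mu(\beta).
\]
Since $\mu \left( q_2 \right) = - \mu \left( p_0 \right) - m_1 \mu \left( \beta \right)$, the suffix $z = q_2$ misses the target for $t = m_1$ by exactly $-\mu \left( \alpha \right)$.

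\begin{enumerate}
\item \emph{Fix the $\alpha$-coordinate.} Read the right factors of $q_1 q_2$ backwards as a walk from the origin to $-\mu \left( q_1 q_2 \right)$. Using~\ref{RefAssOrtho2}, its $\alpha$-coordinate moves by unit steps. I will compare the $\alpha$-coordinate at the suffix $q_2$ with the one at the full word $q_1 q_2$. By discrete continuity there is then a first suffix $z_0$, going from $q_2$ towards $q_1 q_2$ (or the symmetric direction), whose $\alpha$-coordinate equals that of $-\mu \left( p_0 \right) - \mu \left( \alpha \right)$. If the walk never reaches that level, I will instead place the $\alpha$-correction on the left, taking $x$ to contain $p_0$ only partially. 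Hypothesis~\ref{RefAssSimple} is what excludes a degenerate closed left factor in that alternative.
\item \emph{Fix the $\beta$-coordinate.} Once the $\alpha$-level is matched, choose $t$ to absorb the $\beta$-coordinate difference. This requires the $\beta$-displacement between $z_0$ and $q_2$ to lie in $\left[ -m_1, m_2 \right]$. Hypothesis~\ref{RefAssMinimalBump} should give this. Between leaving and re-entering a given $\alpha$-level, the walk of $q_1 q_2$ makes an $\alpha$-bump, whose interior consists only of $\beta$-parallel letters. Every such bump has length at least $m_1+m_2$, and~\ref{RefAssNoInversions} forbids $\beta \conj \beta$ reversals. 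Together these should force the $\beta$-displacement accumulated on the relevant stretch to stay within the window, or else to overshoot it in a way that makes a different choice of $z$ work.
\item \emph{Check properness.} Verify $xz \neq \ze$. This holds automatically because $x$ contains $\alpha$, so any decomposition of the above form gives a proper factor.
\end{enumerate}

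\textbf{Remaining boundary situation.} This is $p_0 = \ze$ with $r$ ending in $\conj \beta$. Then the needed suffix $z$ could be a pure run of $\conj \beta$ letters, and $t$ might have to exceed $m_1+m_2$. Hypothesis~\ref{RefAssNoTrivialFactors} supplies a run $\conj \beta^n$ with $n \ge m_1+m_2$, which lets me pick $z$ inside that run and $t \le m_1+m_2$ so that the $\beta$-coordinates cancel.

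The main obstacle is step~2: showing that the $\beta$-offset produced by the continuity argument really lands in $\left[ -m_1, m_2 \right]$. I would first try a minimal-counterexample argument on the first return of the suffix walk to the target $\alpha$-level, using Proposition~\ref{RefLmUpDown} to manufacture an $\alpha$-bump of length less than $m_1+m_2$, which would contradict~\ref{RefAssMinimalBump}.
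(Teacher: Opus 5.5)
Your degenerate cases are fine, but the main case has a genuine gap. The problem is not only that step~2 is left at ``should''; the ansatz itself is too narrow. You only look for cuts $r = xyz$ with $x = p_0\alpha\beta^t$ (or $x$ a left factor of $p_0$). Sometimes, however, the only admissible cut puts $x$ past the $\conj\alpha$.

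Take $\alpha = a$, $\beta = b$, $p_0 = \ze$, $m_1 = m_2 = 1$, $q_1 = aa$, $q_2 = b\conj{a}\conj{b}\conj{b}a$. All six hypotheses hold:
$bb\conj{a}\conj{b}\conj{b}a$ is closed and non-empty;
$q_1q_2 = aab\conj{a}\conj{b}\conj{b}a$ has no factor $b\conj{b}$ or $\conj{b}b$;
its only $a$-bump is $ab\conj{a}$, with $j-i=2$;
hypothesis~\ref{RefAssSimple} is trivial;
hypothesis~\ref{RefAssNoTrivialFactors} is vacuous since $p$ ends in $a$.
Here
\[ r = a\,bb\,\conj{a}\,aa\,b\,\conj{a}\,\conj{b}\,\conj{b}\,a, \qquad \mu(r) = (2,1), \]
\[ \iso{r} = ((0,0),(1,0),(1,1),(1,2),(0,2),(1,2),(2,2),(2,3),(1,3),(1,2),(1,1),(2,1)). \]
The first coordinate of $\iso{r}$ never exceeds $2$. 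So a pair $i \le j$ with $\iso{r}(j)-\iso{r}(i) = (2,1)$ needs $i \in \{0,4\}$, and apart from the trivial pair $(0,11)$ the only one is $(4,7)$. The unique witness is therefore $x = abb\conj{a}$, $y = aab$, $z = \conj{a}\conj{b}\conj{b}a$.

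Accordingly, your step~1 finds nothing, since every suffix of $q_1q_2$ has $a$-coordinate $\ge 0$, and the fallback $x = \ze$ fails too. The same defect breaks your boundary paragraph. With $x = \alpha\beta^t$ and $z = \conj\beta^s$ one gets $xz \eq \alpha\beta^{t-s}$, which is never closed. The cut that works there is $x = \alpha\beta^{m_1+m_2}\conj\alpha$, $z = \conj\beta^{m_1+m_2}$.

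The heuristic for step~2 does not hold either. Leaving and re-entering an $\alpha$-level does force an $\alpha$- or $\conj\alpha$-bump in between (Proposition~\ref{RefLmUpDown}). But that bump need not sit at that level, and it may be a $\conj\alpha$-bump, which hypothesis~\ref{RefAssMinimalBump} does not constrain.

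The paper never matches coordinates globally. It first re-chooses the witness for $p$ itself. Among factors of $p$ equivalent to $p$ that start strictly inside the $\beta$-block and end at or after its end (hypothesis~\ref{RefAssFact2} gives one), it takes one of minimal length, then minimal start. This writes $p = p_0\beta^{n_1+n_2}p_1p_2$ with $p_0\beta^{n_1}p_2$ closed and non-empty. The rest is a local case analysis at the junction of $p_1$ and $p_2$:
if $p_1$ ends in $\conj\alpha$, cut with $x = p_0\alpha\beta^{n_1}$ and $z$ starting at that letter;
otherwise $p_1$ ends in $\alpha$, and hypothesis~\ref{RefAssNoInversions} makes the leading $\beta$-parallel run of $p_2 = \beta_1^{k_1}p_2'$ homogeneous;
hypothesis~\ref{RefAssSimple} makes $p_2'$ non-empty when that run is short;
hypothesis~\ref{RefAssMinimalBump} forbids $p_2'(0) = \conj\alpha$, since $\alpha\beta_1^{k_1}\conj\alpha$ would be a too-short $\alpha$-bump of $q_1q_2$;
hypothesis~\ref{RefAssNoTrivialFactors} excludes $p_0p_2' = \ze$ when $\beta_1 = \conj\beta$, $k_1 = n_1$.
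In the subcase $\beta_1 = \beta$, $k_1 \ge n_2$, the cut is $x = p_0\alpha\beta^{m_1+m_2}\conj\alpha$, $z = \beta^{k_1-n_2}p_2'$. This is exactly the shape your ansatz omits, and exactly what happens in the example above ($p_1 = aa$, $k_1 = n_2 = 1$, $p_2' = \conj{a}\conj{b}\conj{b}a$).
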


\begin{proof}
Immediate for 
$\min \left\{ m_1, m_2 \right\} = 0 \vee q_1 = \ze$, hence assume $m_1, m_2 > 0$ and $q_1 \neq \ze$.
Then, using hypothesis~\eqref{RefAssFact2}, the following set is non-empty:
\begin{align*}
\argmin_{\size} \left\{ \left[ i, j \right] \subset \dom p. 
\factor p {i} {j-1} \equiv_2 p
\wedge i \in \left] \left| p_0 \right|, \left| p_0 \right| + m_1 + m_2 \right[
\wedge j \ge \left| p_0 \right| + m_1 + m_2
\right\};
\end{align*}
therefore, we can consider $\left[ i, j \right]$ in it such that $i$ is minimal.
Setting $n_1 := i - \left| p_0 \right| > 0$, $n_2 := m_1 + m_2 - n_1$, 
and $p_2 := \factor p j {\left| p \right| - 1}$,
we have that 
$p = p_0 \beta^{n_1+n_2} p_1 p_2$ for some $p_1$, and 
$ p_0 \beta^{n_1} p_2 \in \left[ \ze \right]_{\eq} \sdiff \left\{ \ze \right\}$ by construction.
If $n_2 = 0$ or $p_1 =\ze$, thesis is immediate, hence assume $n_2>0$ and $p_1 \neq \ze$.
From $p_0 \beta^{n_1}$ not being closed by hypothesis~\eqref{RefAssSimple}, we can that $p_2$ is neither.
By construction of $\left[ i, j \right]$, $\rev {p_1} \left( 0 \right) \ortho \beta$; 
moreover, if 
$ \rev{p_1} \left( 0 \right) = \conj{\alpha},
$ thesis is again immediate, so that we also assume $p_1 = p_1' \alpha$ for some $p_1'$.
Furthermore, thanks to hypotheses~\ref{RefAssNoInversions} and \ref{RefAssOrtho2}, we can write 
$p_2 = \beta_1^{k_1} p_2'$ with $\beta_1 // \beta$, 
$p_2'$ having no left factor in $\left\{ \beta, \conj \beta \right\}$, 
and $k_1 \in \N$.
\begin{description}
\item[Case $\beta_1=\beta$: ]
We have
$p = 
p_0 \beta^{n_1 + l} 
\beta^{n_2 - l} p_1' \alpha \beta^l 
\beta^{k_1 - l} p_2'
\text{ and }
p_0 \beta^{n_1 + l} 
\beta^{k_1 - l} p_2' \eq \ze,$
with $l $ $:= $ $\min $ $\{ n_2, k_1 \}$.
If $k_1 < n_2$, this yields $ p_2' \notin \left[ \ze \right]_{\eq}$ by hypothesis~\eqref{RefAssSimple}, so that $ 
p_2' \left( 0 \right)
$ cannot be $\conj \alpha$ by hypothesis~\eqref{RefAssMinimalBump}, and therefore it must be 
$ 
p_0 \beta^{n_1 + l} 
\alpha \factor {p_2'}{1}{\left| p_2' \right|-1}
\in \left[ \ze \right] \sdiff \left\{ \ze \right\}
\ra{}
$
$
p_0 \alpha \beta^{n_1 + l} 
\factor {p_2'}{1}{\left| p_2' \right|-1}
\in \left[ \ze \right] \sdiff \left\{ \ze \right\},
$
yielding the thesis.
If, on the other hand, $k_1 \ge n_2$, then
$ 
p_0 \beta^{n_1 + n_2}
\beta^{k_1 - n_2} p_2' \in \left[ \ze \right] \sdiff \left\{ \ze \right\}
\ra{}
$
$
p_0 \alpha \beta^{n_1 + n_2} \conj{\alpha}
\beta^{k_1 - n_2} p_2' \in \left[ \ze \right] \sdiff \left\{ \ze \right\}.
$

\item[Case $\beta_1=\conj \beta$: ]
We have
\begin{align}
\label{RefFormula14}
p = p_0 \beta^{n_1 - l} \beta^{n_2 + l} p_1' \alpha \conj{\beta}^l 
\conj{\beta}^{k_1 - l} p_2'
&& \text{ and } &&
p_0 \beta^{n_1 - l}  
\conj{\beta}^{k_1 - l} p_2' \in 
\left[ \ze \right],
\end{align}
where
$l := \min \left\{ n_1, k_1  \right\}$.
If $n_1 > k_1$, hypothesis~\eqref{RefAssSimple} implies that $p_2' \notin \left[ \ze \right]_{\eq}$, so that we can consider 
$ p_2' \left( 0 \right)$;
the latter cannot be $\conj \alpha$ due to hypothesis~\eqref{RefAssMinimalBump}, 
therefore $p_2' \left( 0 \right) = \alpha$, giving thesis immediately.
If $n_1 = k_1 > 0$, then
$
p_0   
p_2' \in 
\left[ \ze \right],
$
and $p_0 p_2' \neq \ze$ by hypothesis~\eqref{RefAssNoTrivialFactors}.
Finally, if $n_1 < k_1$, then \eqref{RefFormula14} gives $p_0 \conj{\beta}^{k_1 - l}p_2' \in \left[ \ze \right]_{\eq} \sdiff \left\{ \ze \right\}$.
\end{description}
\end{proof}

\begin{Lm}
\label{RefLmSelfFact2}
Given $\alpha \in \Sigma_2$, $q \in \Sigma_2^*$, $ \left[ i_0, j_0 \right] \in \argmin_{\size} \bumps^{\alpha} \left( q \right)$, assume that, if $\left| q \right| > 2$, then all the following hypotheses hold:
\begin{enumerate}
\item
\label{RefAssSelfFact}
$ q - \left\{ i_0, j_0 \right\}$ is simple and self-factoring;
\item
\label{RefAssNonCritical}
$
\left[ i_0, j_0 \right] \notin \criticalbumps \left( q \right),
$
where, for any $p$, $\criticalbumps \left( p \right)$ is defined as
\begin{align*}
\criticalbumps \left( p \right) :=
\{  
\left[ i, j \right] \in \bumps \left( p \right). \ 
\left( i = 0 \wedge j - i > 2 \wedge j < \left| p \right| - 1 \wedge p \left( \left| p \right| - 1 \right) = \conj{p \left( 1 \right)}
 \right)
\vee
\\
\left( j = \left| p \right| - 1 \wedge j - i > 2 \wedge i > 0 \wedge p \left( 0  \right) = \conj{p \left( \left| p \right| - 2 \right)}
 \right)
\}.
\end{align*}
\end{enumerate}
Then $q$ is self-factoring.
\end{Lm}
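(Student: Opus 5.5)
The plan is to pull the self-factorisation of $r := q - \left\{ i_0, j_0 \right\}$ back to $q$, and to use Proposition~\ref{RefLmSelfFact1} for the one configuration where this cannot be done directly. If $\left| q \right| \le 2$, the bump is $\alpha \conj\alpha$ and $q$ is closed and non-empty, hence self-factoring. So assume $\left| q \right| > 2$, which makes hypotheses~\ref{RefAssSelfFact} and~\ref{RefAssNonCritical} available. Write $\factor q {i_0}{j_0} = \alpha \beta^{m} \conj\alpha$ with $\beta \ortho \alpha$ and $m = j_0 - i_0 - 1 \ge 1$; this form follows from the definition of bump.

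\textbf{Step 1: setting up.} Write $q = s\, \alpha \beta^m \conj\alpha\, t$, so that $r = s \beta^m t$. Since $r$ is self-factoring, $r = u v w$ with $v$ a proper factor, $v \eq r$, and $u w \in \left[ \ze \right]_{\eq} \sdiff \left\{ \ze \right\}$. Now distinguish cases by where the cut points $\left| u \right|$ and $\left| uv \right|$ fall relative to the block $\beta^m$ inside $r$.

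\textbf{Step 2: direct lifting.} Suppose the block $\beta^m$ lies entirely inside $v$, or entirely inside $u$, or entirely inside $w$.
\begin{itemize}
\item If it lies inside $v$, reinserting $\alpha$ and $\conj\alpha$ inside $v$ gives a factor $v'$ of $q$ with $v' \eq v \eq r \eq q$, and $v'$ is still proper.
\item If it lies inside $u$ or inside $w$, reinsert $\alpha$ and $\conj\alpha$ there instead. This leaves $v$ unchanged as a factor of $q$, and $v \eq q$ because the two inserted letters cancel.
\end{itemize}
In all three subcases $q$ is self-factoring.

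\textbf{Step 3: a cut inside the block.} The remaining case is that a cut point lies strictly inside $\beta^m$. Since $u$ and $w$ are nonempty, up to the reversal symmetry (which maps bumps to bumps and preserves $\criticalbumps$) we may assume the cut $\left| u \right|$ lies inside the block. Then $r = p_0 \beta^{m_1+m_2} q_1 q_2$ with $u = p_0 \beta^{m_1}$, $m_1 > 0$, and $w = q_2$ (possibly after absorbing the other end), so that $p_0 \beta^{m_1} q_2 = uw \in \left[ \ze \right]_\eq \sdiff \left\{ \ze \right\}$. This is hypothesis~\ref{RefAssFact2} of Proposition~\ref{RefLmSelfFact1}. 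Its conclusion is that $p_0 \alpha \beta^{m_1+m_2} \conj\alpha q_1 q_2 = q$ is self-factoring, which is the thesis. It remains to check the other hypotheses of Proposition~\ref{RefLmSelfFact1}:
\begin{itemize}
\item \ref{RefAssOrtho2} holds because $\beta \ortho \alpha$.
\item \ref{RefAssNoInversions} holds because a factor $\beta\conj\beta$ or $\conj\beta\beta$ is a loop, which the simplicity of $r$ forbids.
\item \ref{RefAssSimple} holds because a non-empty closed proper left factor of $p_0\beta^{m}$ is a factor of $r$ in $\left[ \ze \right]_\eq$, hence a loop (or all of $r$, which would force $r$ to be closed; that is excluded by simplicity together with properness of $v$).
\item \ref{RefAssMinimalBump} needs the minimality of $\left[ i_0, j_0 \right]$ in $\argmin_{\size}\bumps^\alpha(q)$. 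An $\alpha$-bump of $q_1 q_2$ is a factor of $t$, hence an $\alpha$-bump of $q$, so its cardinality is at least $m+2$, giving $j - i \ge m$. One must check that bumps of $t$ are bumps of $q$; this holds because the property depends only on the factor itself.
\end{itemize}

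\textbf{Main obstacle.} I expect the hardest part to be hypothesis~\ref{RefAssNoTrivialFactors}: if $p_0 = \ze$ and $\conj\beta$ ends $r$, then $\conj\beta^n$ with $n \ge m$ must end $r$. Here $p_0 = \ze$ means the bump starts at $i_0 = 0$, and $q$ ends with $\conj\beta$. If $n < m$ were the case, then $q = \alpha\beta^m\conj\alpha \cdots \gamma \conj\beta^n$ with $\gamma \neq \conj\beta$, and I would use the minimality of the bump together with Proposition~\ref{RefLmUpDown} to produce a shorter $\alpha$- or $\conj\alpha$-bump. Where that fails, the configuration should be exactly the excluded one $\left[ i_0, j_0 \right] \in \criticalbumps(q)$, namely $i_0 = 0$, $j_0 - i_0 > 2$, $j_0 < \left| q \right| - 1$ and the last letter of $q$ equal to $\conj{q(1)} = \conj\beta$. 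My first attempt would be to show that the critical clause is precisely what removes the case $n < m$, by tracking the trailing $\conj\beta$-run against the leading $\beta^m$ in the walk $\iso q$ and invoking simplicity of $r$.
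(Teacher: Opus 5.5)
Your reduction is the paper's own. You cancel the bump, take a self-factorisation $r=uvw$ of $r=q-\left\{ i_0,j_0\right\}$, lift it directly unless a cut falls strictly inside $\beta^m$, and otherwise apply Proposition~\ref{RefLmSelfFact1} to $r=p_0\beta^{m_1+m_2}q_1q_2$. Your checks of hypotheses 2--5 of that proposition are fine.

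\textbf{The gap: hypothesis 6.} You leave it open and plan to attack it the wrong way. You do not need to show that the trailing $\conj\beta$-run of $r$ has length at least $m$, because the antecedent of hypothesis 6 can never hold. Hypotheses 2--6 are only required when $m_1,m_2>0$ and $q_1\neq\ze$, and in that situation:
\begin{itemize}
\item $p_0=\ze$ gives $i_0=0$ and $q\left(1\right)=\beta$;
\item $m=m_1+m_2\ge 2$ gives $j_0-i_0=m+1>2$;
\item $q_1\neq\ze$ gives $j_0<\left|q\right|-1$, and makes the last letter of $r$ equal to the last letter of $q$.
\end{itemize}
So if $r$ ended with $\conj\beta$, then $q\left(\left|q\right|-1\right)=\conj{q\left(1\right)}$. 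That means $\left[i_0,j_0\right]$ satisfies the first clause in the definition of $\criticalbumps\left(q\right)$, contradicting hypothesis 2 of the lemma. Hence hypothesis 6 holds vacuously. This one-line remark is exactly how the paper closes the argument; no tracking of the walk $\iso q$ and no use of the simplicity of $r$ is needed.

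\textbf{Why your intended route fails.} Minimality plus Proposition~\ref{RefLmUpDown} cannot carry this step by itself. Take $q=abb\conj a\conj a\conj b$, so that $r=bb\conj a\conj b$, with $u=b$, $v=b\conj a$, $w=\conj b$. Here the antecedent of hypothesis 6 holds with a trailing run of length $1<m=2$. Yet $\left[0,3\right]$ is the only $a$-bump of $q$ and $q$ has no $\conj a$-bump, so no shorter bump can be produced. This $q$ is precisely a critical one ($\left[0,3\right]\in\criticalbumps\left(q\right)$). So the step rests on non-criticality, not on minimality.

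\textbf{Two harmless slips elsewhere.} First, $\alpha\conj\alpha$ is itself a bump, so $m\ge 1$ does not follow from the definition; an empty block lifts trivially, though. Second, only $uw\neq\ze$ is known, not $u\neq\ze\neq w$; the reversal symmetry alone already lets you place the inner cut at $\left|u\right|$.
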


\begin{proof}
We can assume $\left| q \right| > 2$, 
$ j_0 - i_0 > 2 $, 
$ \left[ i_0, j_0 \right] \subset \left[ 0, \left| q \right| - 1 \right]$, and
$ q \notin \left[ \ze \right]_{\sim}$.
Set $p := q - \left\{ i_0, j_0 \right\}$, 
thereby having $ p = \factor q 0 {i_0-1} \ 
\beta^{j_0 - i_0 - 1} \ 
\factor q {j_0+1} {\left| q \right| - 1}
$
for some $\beta \ortho \alpha$ 
and consider, by hypothesis~\ref{RefAssSelfFact}, $i_1 \le j_1$ satisfying
$ \factor p 0 {i_1} \ \factor p {j_1} {\left| p \right| - 1} \in \left[ \ze \right] \sdiff \left\{ \ze \right\}$. 
Now, if $ \left| \left[ i_0-1, j_0-1 \right] \cap \left\{ i_1, j_1 \right\} \right|$ is even, the thesis is immediate; 
otherwise, we have two cases:
\begin{description*}
\item[Case $i_1 \in {\left[ i_0-1, j_0-2 \right]}$ and $ j_1 \ge j_0$: ] 
then\\
$
p = \factor p 0 {i_0 - 1} \ 
\beta^{i_1 - i_0 + 1} \  \beta^{j_0 - i_1 - 2} \  
\factor p {j_0-1} {j_1-1} \ 
\factor p {j_1} {\left| p \right| - 1} 
\  \text{ and } \   
\factor p 0 {i_0 - 1}  
\beta^{i_1 - i_0 + 1} \ 
\factor p {j_1} {\left| p \right| - 1} \in 
\left[ \ze \right]_{\eq} \sdiff \left\{ \ze \right\},
$\\
so that we can apply Proposition~\ref{RefLmSelfFact1} to obtain the thesis.
Note that, if 
$ \factor p 0 {i_0 - 1} = \ze$, then $q \left( 1 \right)=\beta$, and therefore
$ \rev p \left( 0 \right) = \rev q \left( 0 \right) \neq \conj \beta$ because $\left[ i_0, j_0 \right] \notin \criticalbumps \left( q \right)$, allowing to satisfy hypothesis~\ref{RefAssNoTrivialFactors} of Proposition~\ref{RefLmSelfFact1}.
\item[Case $j_1 \in {\left[ i_0-1, j_0-2 \right]}$ and $ i_1 < i_0-1$: ]
we consider $\rev p$ and reason similarly.
\end{description*}
\end{proof}

\begin{Prop}
\label{RefLmNonCriticalBumps}
Let $ \left( p_0, p_1 \right) \in \climple \sdiff \trapezoidals \sdiff \selffact{}$, 
$ \left[ i_0, j_0 \right] \in \bumps^{\alpha} \left( p_0 \right) \cap \criticalbumps \left( p_0 \right)$,
where $ \criticalbumps$ is defined as in Lemma~\ref{RefLmSelfFact2}.
Assume 
\begin{enumerate*}
\item
\label{RefAssShortBumps}
$ \forall i. \left[ i, i+2 \right] \in \bumps \left( p_0, p_1 \right) 
\ra{} \left( p_0, p_1 \right) - \left\{ i, i+2 \right\} \notin \climple \sdiff \trapezoidals$;
\item
$ \left( p_0, p_1 \right) - \left\{ i_0, j_0 \right\} \in \climple \sdiff \trapezoidals$.
\end{enumerate*}
Then there are $\beta \ortho \alpha$ and $ \left[ i_1, j_1 \right] \in 
\argmin_{\size} \bumps^{\beta} \left( p_0, p_1 \right) 
\sdiff \criticalbumps \left( p_0 \right)
\sdiff \left( \left\{ \left| p_0 \right|  \right\} + \criticalbumps \left( p_1 \right) \right)
$
such that 
\\
$ \left( p_0 - \left\{ i_1, j_1 \right\}, p_1 \right) \in \climple \sdiff \trapezoidals$.
\end{Prop}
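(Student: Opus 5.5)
The plan is to normalise by the invariances listed at the beginning of Section~\ref{RefSectProofSkeleton}, so that $\alpha = b$ and $[i_0, j_0]$ is critical through its first clause. Then $i_0 = 0$, $j_0 > 2$, $j_0 < |p_0| - 1$ and $p_0(|p_0|-1) = \conj{p_0(1)}$, and after possibly swapping $a$ with $\conj a$ we get $p_0 = b\, a^{m}\, \conj b\, r\, \conj a$ with $m = j_0 - 1 \ge 2$. The candidate direction is $\beta := a$. The horizontal coordinate $\proj_1 \circ \iso{p_0}$ is $0$ at indices $0$ and $1$ and equals $m$ at index $m+1$. Since the last letter of $p_0$ is $\conj a$, the maximum of this coordinate over $[1, |p_0|]$ is not attained at $|p_0|$, and it is strictly larger than the value at index $0$. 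Proposition~\ref{RefLmUpDown} therefore gives $\bumps^{a}(p_0) \neq \ze$. I would then take $[i_1, j_1] \in \argmin_{\size} \bumps^{a}(p_0, p_1)$. If some minimal $a$-bump lies in $p_0$ we pick it. Otherwise I would argue, using the shape of $p_0$ and simplicity of $p_0 p_1$, that a strictly shorter $a$-bump of $p_0$ must exist, which contradicts minimality.

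The next step is the closure and simplicity part. By hypothesis~\ref{RefAssShortBumps}, together with the fact that cancelling a length-$3$ bump never breaks simplicity, there is no length-$3$ bump $[i,i+2]$ whose cancellation stays in $\climple \sdiff \trapezoidals$. Combined with $(p_0,p_1) \notin \trapezoidals$ this yields the hypothesis of Proposition~\ref{RefLmSimpleBumps01} for the orthogonal direction $b$, at least in a neighbourhood of $[i_1, j_1]$; alternatively I would use hypothesis~\ref{RefAssNoBumps2} of Proposition~\ref{RefLmBumpClassification2} directly. That proposition leaves three possibilities: $[i_1, j_1]$ nests another $a$-bump, which is impossible by minimality; $[i_1, j_1]$ lies in $\cornerbumps^{a} \cup \cloggedbumps^{a}$; or $p_0 p_1 - \{i_1, j_1\}$ is simple. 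The corner and clogged cases should be excluded by the explicit letters. We have $p_0(0) = b$, and hypothesis~2 forces constraints on $p_1$: after cancelling $[0, j_0]$ the pair begins $a \ldots$ and $p_0$ still ends with $\conj a$, so non-membership in $\trapezoidals$ gives $p_1(0) \neq a$ and $\rev{p_1}(0) \neq \conj a$. Closedness is automatic, since cancelling a bump preserves $\mu$.

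The remaining two requirements are non-membership in $\trapezoidals$ and non-criticality, and I expect these to be the main obstacle. Cancelling $[i_1, j_1]$ inside $p_0$ can change only the last letter of $p_0$, namely when $j_1 = |p_0| - 1$, because $i_1 \neq 0$ as $p_0(0) = b$. In that case $[i_1, j_1]$ has the form $a\, \gamma^{k}\, \conj a$ with $\gamma // b$. The new last letter of $p_0$ is then $\gamma$, and the first letter of $p_0$ is still $b$; I would compare $\gamma$ with $\rev{p_1}(0)$ via hypothesis~2 and simplicity. The same configuration is the only way $[i_1, j_1]$ can lie in $\criticalbumps(p_0)$, namely $j_1 = |p_0| - 1$ together with $p_0(|p_0|-2) = \conj b$. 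So the key is to rule out $\gamma = \conj b$ at the end of $p_0$. My first attempt would be to apply Lemma~\ref{RefLmGeometric} to the two horizontal runs $a^m$ at height $1$ and $\conj a$ at the end, aiming at a contradiction with simplicity or a self-factoring of $p_0$, the latter excluded since $(p_0, p_1) \notin \selffact$. If the direction $a$ genuinely fails, the fallback is to pass to $\beta = \conj a$ by a symmetric Up/Down argument and repeat the analysis. Finally, I would check $[i_1, j_1] \notin \{|p_0|\} + \criticalbumps(p_1)$, which holds trivially because $[i_1, j_1]$ lies in $p_0$.
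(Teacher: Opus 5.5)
There is a genuine gap, and it sits at the point you yourself flag as the main obstacle. You correctly reduce both the $\trapezoidals$ requirement and non-criticality in $p_0$ to excluding $p_0(|p_0|-2)=\conj b$. For that step, however, you only offer a tentative use of Lemma~\ref{RefLmGeometric} and a fallback to $\beta=\conj a$.

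The missing observation is one line. With your normalisation, $p_0$ starts with $ba$ and ends with $\conj a$. So $p_0(|p_0|-2)=\conj b$ would give $p_0=ba\,x\,\conj b\,\conj a$ with $ba\conj b\conj a\eq\ze$. Then $x$ is a proper factor of $p_0$ equivalent to $p_0$, contradicting $(p_0,p_1)\notin\selffact$.

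You also never pin down the end letters of $p_1$, and everything else depends on them. The cancelled pair begins with $a$ and ends with $\conj a$, so its non-membership in $\trapezoidals$ gives $p_1(0)\neq\conj a$ and $\rev{p_1}(0)\neq a$. You state the opposite signs; those inequalities are true but come from simplicity of $p_0p_1$ and of the cancelled pair. Adding $(p_0,p_1)\notin\trapezoidals$ and simplicity yields $p_1(0)=\rev{p_1}(0)=b$.

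With these three letter facts, hypothesis~1 shows that the only possible bump of cardinality $3$ is $[|p_0p_1|-3,|p_0p_1|-1]$. This is exactly what you need for the first hypothesis of Proposition~\ref{RefLmBumpClassification2} (or~\ref{RefLmBumpClassification3}) at a minimal $a$-bump; your ``at least in a neighbourhood of $[i_1,j_1]$'' is not an argument. Together with simplicity, the same facts are what make $\clearbumps_0^{a}$, $\doublebumps^{a}$, $\cornerbumps^{a}$ and $\cloggedbumps^{a}$ empty, which you only assert follows ``from the explicit letters''.

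Second, your step forcing a minimal $a$-bump into $p_0$ (``a strictly shorter $a$-bump of $p_0$ must exist'') is asserted with no argument. Nothing in the local configuration supports it, yet your final claim $[i_1,j_1]\notin\{|p_0|\}+\criticalbumps(p_1)$ rests on it.

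The paper never needs this step. It takes a minimal $a$-bump of $(p_0,p_1)$ wherever it lies and applies Proposition~\ref{RefLmBumpClassification3}, concluding $(p_0,p_1)-\{i_1,j_1\}\in\climple\sdiff\trapezoidals$. If cancelling the bump left $\climple\sdiff\trapezoidals$, the bump would have to lie in $\clearbumps_1^{a}$. There $i_1=|p_0|$ is impossible because $p_1(0)=b$, so $j_1=|p_0|-1$. The new last letter of $p_0$ is then $p_0(|p_0|-2)=b$, and the pair is not in $\trapezoidals$ because $\rev{p_1}(0)=b$, a contradiction.

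Non-criticality in $p_1$ is then immediate. Since $p_1(0)=\rev{p_1}(0)=b$, no $a$-bump of $p_1$ can start at $0$ or end at $|p_1|-1$.
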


\begin{proof}
Looking at the definition of $\criticalbumps$, we can assume 
\begin{align}
\label{RefFormula15}
i_0=0, && j_0 < \left| p_0 \right| - 1, 
&& p_0 \left( 0 \right) = b, 
&& p_0 \left( 1 \right) = p_0 \left( 2 \right) = a, &&
&& p_0 \left( \left| p_0 \right| - 1 \right) = \conj a.
\end{align}
Since $ \left( p_0, p_1 \right) \in \climple \sdiff \trapezoidals \sdiff \selffact$ and 
$ \left( p_0, p_1 \right) - \left\{ i_0, j_0 \right\} \in \climple \sdiff \trapezoidals$, 
it also must be 
\begin{align}
\label{RefFormula16}
\rev {p_1} \left( 0 \right) = b = p_1 \left( 0 \right) &&  \text{ and } &&
p_0 \left( \left| p_0 \right| - 2 \right) \neq \conj b.
\end{align}
Therefore, we can use simplicity of $ p_0 p_1$ and the respective definitions to check that
\begin{align}
\label{RefFormula17}
\clearbumps_0^{\Right} \left( p_0, p_1 \right) =
\doublebumps^{\Right} \left( p_0, p_1 \right) 
= 
\cornerbumps^{\Right} \left( p_0, p_1 \right) 
=
\cloggedbumps^{\Right} \left( p_0, p_1 \right) 
= \ze.
\end{align}
Combining \eqref{RefFormula15} and \eqref{RefFormula16} with hypothesis~\ref{RefAssShortBumps}, one can check that if there is a bump of $ \bumps \left( p_0, p_1 \right)$ having cardinality $3$, it must belong to the set
$ \left\{ \left[ \left| p_0 p_1 \right| - 3, \left| p_0 p_1 \right| - 1 \right] \right\}$.
By virtue of \eqref{RefFormula15} and Proposition~\ref{RefLmUpDown}, we can consider 
$ 
\left[ i_1, j_1 \right] \in \argmin_{\size} \bumps^{\Right} \left( p_0, p_1 \right)
$ and, by what we said in the last sentence, assert that
$ 
\left\{ \left[ i_1 - 1, i_1 + 1 \right], \left[ j_1 - 1, j_1 + 1 \right] \right\}
\cap
\bumps^{a^{\ortho}} \left( p_0, p_1 \right) = \ze.
$.
Adding~\eqref{RefFormula17} and Proposition~\ref{RefLmBumpClassification3},
if it were $\left( p_0, p_1  \right) - \left\{ i_1, j_1 \right\} \notin \climple \sdiff \trapezoidals$, then we could conclude 
$\left[ i_1, j_1 \right] \in \clearbumps^{\Right}_1 \left( p_0, p_1 \right)$ ($\left[ i_1, j_1 \right]$ cannot nest any other bump of $ \bumps^{\Right} \left( p_0, p_1 \right)$ being minimal).
By definition of $\clearbumps_1$, this implies $j_1 = \left| p_0 \right| - 1$, 
$ \left[ i_1, j_1 \right] \in \bumps^{\Right} \left( p_0 \right)$ and 
$ \left( p_0, p_1 \right) - \left\{ i_1, j_1 \right\} = 
\left( p_0 - \left\{ i_1, j_1 \right\}, p_1 \right) \in \climple.
$
It is immediate to check that $ \left( p_0 - \left\{ i_1, j_1 \right\}, p_1 \right) $ is not in 
$\trapezoidals$, so that we have a contradiction and it must be 
$ \left( p_0, p_1 \right) - \left\{ i_1, j_1 \right\} \in \climple \sdiff \trapezoidals$.
By definition of $\criticalbumps$, \eqref{RefFormula15} and \eqref{RefFormula16}, we can check that
$ \criticalbumps\left( p_1 \right) \cap \bumps^{\Right} \left( p_1 \right) = \ze 
= \criticalbumps \left( p_0 \right) \cap \bumps^{\Right} \left( p_0 \right).
$
\end{proof}

\begin{proof}[Proof of Theorem~\ref{RefLmMain2}]
By simplicity, any bump for $ \left( \ceP, \ceQ \right)$ must have cardinality at least $3$.
If there were a minimal bump for $ \left( \ceP, \ceQ \right)$ of the form $\left[ i, i+2 \right]$ and such that $ \left( \ceP, \ceQ \right) - \left\{ i, i+2 \right\} \in \climple \sdiff \trapezoidals$, then we 
would have that $ \left( \ceP, \ceQ \right) - \left\{ i, i+2 \right\} \in \selffact$ due to~\ref{RefReqMin}; 
let us assume $ \left[ i, i+2 \right] \in \bumps \left( \ceP \right)$: $ p_0 - \left\{ i, i+2 \right\} $ is then simple and self-factoring, while $ \left[ i, i+2 \right]$ cannot be in $\criticalbumps \left( \ceP \right)$ being too short.
Therefore, we could invoke Lemma~\ref{RefLmSelfFact2} to establish that $\ceP$ is self-factoring, while $ \left( \ceP, \ceQ \right) \notin \selffact$.
We have thus proved that $ \left( \ceP, \ceQ \right)$ satisfies hypothesis~\ref{RefAssShortBumps} of Proposition~\ref{RefLmNonCriticalBumps}, and can consequently prove that, if there is any bump 
$ \left[ i_0, j_0 \right]$ such that $ \left( \ceP, \ceQ \right) - \left\{ i_0, j_0 \right\}
\in \climple \sdiff \trapezoidals$, 
then there is a minimal bump $ \left[ i_1, j_1 \right]$ for $ \left( \ceP, \ceQ \right)$ 
not belonging to $ \criticalbumps \left( \ceP \right) \cup 
\left( \left\{ \left| \ceP \right| \right\} + \criticalbumps \left( \ceQ \right) \right)$ 
and such that 
$ \left( \ceP, \ceQ \right) - \left\{ i_1, j_1 \right\} \in \climple \sdiff \trapezoidals.$
By~\ref{RefReqMin}, then, we have $ \left( \ceP, \ceQ \right) - \left\{ i_1, j_1 \right\} \in \selffact$.
As before, we can assume $ \left[ i_1, j_1 \right] \in \bumps \left( \ceP \right)$, which means
$ \ceP - \left\{ i_1, j_1 \right\}$ is simple and self-factoring.
Applying Lemma~\ref{RefLmSelfFact2} again, we conclude that 
$\ceP$ is also self factoring, which is impossible because 
$ \left( \ceP, \ceQ \right) \notin \selffact$.
\end{proof}

\section{Conclusions}
\label{RefSectConcl}
The results linking $O_N$ and MCFGs provided a starting point for the main question of this paper: that is, which alterations can be made to the notion of a MCFG without severing these links.
We provided a first answer, by slightly tweaking the definition of a MCFG and showing that the link persists, at least for the casen $N=2$. 
We cursorily note that Theorem~\ref{RefLmMain} still remains valid if the conditions on rule~\eqref{RefRuleRectangular} are moved to rule~\eqref{RefRuleSandwich}, by symmetry.
Besides the proof-theoretical interest of a result of this type, it is hoped that the direction pointed out here can be helpful towards establishing further links between $O_N$ and structures that are independent of MCFGs, such as indexed grammars, for which the study of such links still poses a number of open problems~\cite{salvati2015mix}.
A more immediate application of the results presented here is a computational one: we have established that any binary factorisation of a closed, simple walk in $\Z^2$ is either in $\trapezoidals$ or $\selffact$. 
As the $\trapezoidals$-membership problem is trivial to decide, and since the notion of a self-factoring string is intimately connected to the word problem in $\Z^2$, the present results can be potentially used for arbitrary binary factorisations of closed, simple walks in relation to algorithmic aspects of that problem. 
Related to this, the style of the proofs provided here has deliberately been kept close to low-level, set-theoretical concepts and free of geometrical arguments (similarly to what done in~\cite{caminati2024Dlt}), so to make at least conceivable to work on a future formalisation in a proof assistant for a verification of the presented results and of possible algorithms deriving from them.

\textcolor{white}{{\fontsize{1}{1}\selectfont 
This preprint has
not undergone 
any post-submission improvements or
corrections. 
The Version of Record of this contribution is published in MCU 2026,
and is available online at 
https://doi.org/10.1007/978-3-031-81202-6
}}

\subsubsection*{Acknowledgments}
The author is grateful to the organisers of the Trimester Program ``Prospects of Formal Mathematics'' held in Summer 2024 at the Hausdorff Research Institute for Mathematics in Bonn.
That event provided ring-fenced time, the right atmosphere and invaluable interaction with colleagues, all key factors in stimulating the ideas from which the present research originated.
Funded by the Deutsche Forschungsgemeinschaft (DFG, German Research Foundation) under Germany's Excellence Strategy -- EXC-2047/1 -- 390685813.
The author would like to thank Lancaster University for providing financial support to attend this conference.

\bibliographystyle{plain}
\bibliography{mbc}

\end{document}